\definecolor{dullmagenta}{RGB}{102,0,102}
\def\col{dullmagenta} 
\renewcommand\paragraph{\@startsection{paragraph}{4}{\z@}%
                                      {\parskip}
                                      {-1em}%
                                      {\normalfont\normalsize\bfseries}}
\font\frak=eufm10 
\def\goth #1{\hbox{{\frak #1}}}
\def\lag{{\goth{g}}}
\newcommand{\g}{\goth{g}}
\newcommand{\longto}{\longrightarrow}
\newcommand{\cL}{{\pounds}}
\def\fpd#1#2{{\displaystyle\frac{\partial #1}{\partial #2}}}
\newcommand{\R}{\mathbb{R}}
\newcommand{\Lin}{{\rm Lin}}
\def\@endtheorem{\endtrivlist}
\def\th@plain{%
  \thm@notefont{}
  \itshape 
}
\def\th@definition{%
  \thm@notefont{}
  \normalfont 
}
\theoremstyle{plain}
\newtheorem{theorem}{Theorem}
\newtheorem{lemma}{Lemma}
\newtheorem{proposition}{Proposition}
\newtheorem{corollary}{Corollary}
\theoremstyle{definition}
\newtheorem{remark}{Remark}
\newtheorem{definition}{Definition}
\begin{document}
\date{}
\title{Routh reduction for first-order field theories}

\author{S.\ Capriotti\textsuperscript{a,b} and  E.\ Garc\'{\i}a-Tora\~{n}o Andr\'{e}s\textsuperscript{a}\\[2mm]
{\small \textsuperscript{a} Departamento de Matem\'atica,
Universidad Nacional del Sur}  \\
{\small  Av.\ Alem 1253, 8000 Bah\'ia Blanca, Argentina}\\[2mm]
{\small \textsuperscript{b} Instituto de Matem\'atica de Bah\'ia Blanca (INMABB), CONICET}}

\maketitle

\begin{abstract}

We present a reduction theory for first order Lagrangian field theories which takes into account the conservation of momenta. The relation between the solutions of the original problem with a prescribed value of the momentum and the solutions of the reduced problem is established. An illustrative example is discussed in detail. 

\vspace{3mm}

\textbf{Keywords:} Lagrangian field theory, reduction, momentum, symmetry.

\vspace{3mm}

\textbf{2010 Mathematics Subject Classification:}  37J15, 34A26, 70G65.
\end{abstract}

\section{Introduction}\label{sec:Intro}

The geometric reduction of an invariant Lagrangian system can be performed in two different ways depending on whether the conservation of momenta is taken into account in the reduction procedure or not. Accordingly, two fundamental reduction theories have emerged in the literature: the so-called \emph{Routh reduction} and the \emph{Lagrange-Poincar\'e reduction}. In a nutshell, the distinction is as follows: in the Lagrange-Poincar\'e reduction one quotients the tangent bundle of the phase space directly by the Lie group of symmetries, while in the case of Routh reduction one first restricts the attention to the level set of the momentum and only then quotients by a suitable subgroup of the group of symmetries (in fact, Routh reduction is the natural Lagrangian analog of symplectic reduction~\cite{quasi}). As the terminology suggest, both techniques have a rich and long history; the reader can take a look at~\cite{RouthMarsden} for an overview.

In the realm of Lagrangian field theory, much attention has been paid to the Lagrange-Poincar\'e case (see e.g.~\cite{Castrillon1,Castrillon2} and references therein; see also~\cite{GbRat,LagPoincare_JGP}) but, to the best of our knowledge, the case of Routh reduction remains unexplored. The purpose of this paper if to fill this gap and discuss a variational framework to carry out Routh reduction for first-order Lagrangian field theories. This is achieved by examining the contact structure of the variational problem with symmetry, which allows us to relate the critical sections of the original Lagrangian field theory with a prescribed value of the momentum with the critical sections of a reduced Lagrangian field theory with forces. The role of the reduced Lagrangian is played by a suitably defined \emph{Routhian}, extending the well-known construction for the mechanical case (see e.g.~\cite{Marsden_Lectures}).

The paper is organized as follows.  In Section~\ref{sec:LFT} we discuss the setting of Lagrangian field theory (LFT) and identify the critical sections of a Lagrangian density as integral sections of a suitable affine subbundle of the contact subbundle. We describe how the symmetry (and the choice of a principal connection) induces a splitting of the contact structure. A notion of momentum map adapted to this setting is introduced following~\cite{GotayIM_Momentum}, and its conservation along solutions is shown. Section~\ref{sec:RouthLFT} contains the main results on Routh reduction for field theories. First, we identify a natural candidate for a Routhian in the field-theoretical setting and then, after some preparatory results, we prove that its reduction plays the role of the Lagrangian for a reduced Lagrangian field theory with forces. It is shown that extremals of the original (i.e. unreduced) LFT with a prescribed value of the momentum project onto solutions of the reduced LFT. Many of the results in this section are obtained adapting the techniques from~\cite{Capriotti_Routh} to an arbitrary configuration bundle. Section~\ref{sec:reconstruction} addresses the problem of reconstruction. We recover the integrability condition for reconstruction that has appeared in the context of Lagrange-Poincar\'e reduction~\cite{Castrillon2,LagPoincare_JGP} and discuss its geometric meaning in terms of liftings of sections. Finally, Section~\ref{sec:Example} contains one easy example that illustrates the applicability of the proposed scheme.

To conclude, we would like to point that it should be possible to consider a different approach to Routh reduction in field theory, at least under some regularity conditions, in those formalisms for which a multisymplectic-like reduction theorem is available. For instance the case of polysymplectic manifolds, arguably one of the easiest approach to field theories, has its own reduction theorem~\cite{Poly} and thus looks like a natural first choice. This will be discussed elsewhere.

\paragraph{Notations.} If $Q$ is a manifold, $\Lambda^p (Q)=\wedge^p(T^*Q)$ denotes the $p$-th exterior power of the cotangent bundle of $Q$. The space of differential $p$-forms, sections of $\Lambda^p (Q)\to Q$, will be denoted by $\Omega^p(Q)$. We also write $\Lambda^\bullet(Q)=\bigoplus_{j=1}^{\dim Q}\Lambda^j(Q)$. If $f\colon P\to Q$ is a smooth map and $\alpha_x$ is a $p$-covector on $Q$, we will sometimes use the notation $\alpha_{f(x)}\circ T_xf$ to denote its pullback $f^*\alpha_x$. If $P_1\to Q$ and $P_2\to Q$ are fiber bundles over the same base $Q$ we will write $P_1\times_Q P_2$ for their fibred product, or simply $P_1\times P_2$ if there is no risk of confusion. Finally, Einstein summation convention will be used everywhere.

\paragraph{Acknowledgements}
S.\ Capriotti thanks the CONICET for financial support.

\section{Lagrangian field theory}\label{sec:LFT}

We will denote the configuration (fibre) bundle by $\pi\colon E\to M$, with $\dim M=m$ and $\dim E=m+n$, and we assume that $M$ is oriented with volume form $\eta$. We consider the first jet bundle $J^1\pi$ and adopt the usual notations for the source and target projections:
\begin{equation*}
  \begin{tikzcd}[column sep=.8cm, row
    sep=1.2cm]
    J^1\pi \arrow[rr,"\pi_{10}"]\arrow[dr,swap,"\pi_1"] & &[1ex] E\arrow[dl,"\pi"]\\
    & M &
  \end{tikzcd}
\end{equation*}
A section of $\pi$ will generally be denoted by $s:M\to E$, $j_x^1 s$ denotes the first jet of $s$ at $x\in M$ and $j^1s:M\to J^1\pi$ denotes the prolongation of $s$. We will use adapted coordinates $(x^i,u^a,u^a_i)$ on $J^1\pi$ such that locally $\eta=dx^1\wedge\dots\wedge dx^m$. For $p<l$, the set of $p$-horizontal $l$-forms on $E$, denoted $\Lambda^l_kE$, is defined as
\[
\Lambda^l_p E\big|_e=\{\alpha\in \Lambda^l E: v_1\lrcorner\dots v_p \lrcorner \alpha=0, \quad \text{for all } v_1,\dots,v_p\in V_e(\pi)\}.
\]
Likewise, the set
\[
\Lambda^l_p J^1\pi \big|_{j^1_xs}=\{\alpha\in \Lambda^l J^1\pi\colon  v_1\lrcorner\dots v_p \lrcorner \alpha=0, \quad \text{for all } v_1,\dots,v_p\in V_{j^1_xs}(\pi_1)\}
\]
denotes the $p$-horizontal $l$-forms on $J^1\pi$. For the necessary background on the geometry of first-order Lagrangian field theory, we refer the reader to~\cite{GeometryLagrangianFieldTheories}. A comprehensive treatment of jet bundles can be found in~\cite{Saunders_book}.

We will need the following definition of Lagrangian field theory which is more general than the standard one:

\begin{definition}\label{def:LFT} A \emph{Lagrangian field theory (LFT)} is a triple $(\pi\colon E\rightarrow M,L\eta,\mathcal{F})$, where $L$ is a smooth function on $J^1\pi$, $\eta$ is the pullback to $J^1\pi$ of a volume form on $M$ and $\mathcal{F}\in\Omega^{m+1}_3(J^1\pi)$ is a $\pi_{10}$-basic $(m+1)$-form on $J^1\pi$.
\end{definition}
The $\pi_1$-semibasic form $\mathcal{L}=L\eta$ is the \emph{Lagrangian density}, and $\mathcal{F}$ is the \emph{force}. We will make a slight abuse of notation and write $\eta$ for both the volume form on $M$ and its pullback to $J^1\pi$ or any space that fibers over $M$. The case $\mathcal{F}=0$ corresponds to the usual definition of LFT and will simply be represented by the pair $(\pi\colon E\rightarrow M,L\eta)$. 

We say that a (possibly local) section $s\colon U\subset M\to E$ is \emph{critical} for $(\pi\colon E\rightarrow M,L\eta)$ if
  \[
  \delta\int_U\left(j^1s\right)^*L\eta+\int_U\left(j^1s\right)^*\langle \mathcal{F},\delta s^C\rangle=0
  \]
holds for every variation $\delta s$ that vanishes on $\partial U$, where $\delta s^C$ is the section of the pullback bundle $(j^1s)^*\left(V\pi_1\right)$ constructed by the complete lift of an extension of the section $\delta s:M\rightarrow V\pi$ to a vertical vector field on $E$. In coordinates, writing $\mathcal{F}=\frac{1}{2}F^j_{ab} du^a \wedge du^b \wedge\eta_j + F_adu^a\wedge\eta$ (here $\eta$ denotes the pullback to $J^1\pi$ of the volume form in $M$), a section $x^i\mapsto (x^i,u^a(x))$ is critical iff it satisfies:
\[
\fpd{}{x^k}\left(\fpd{L}{u^a_k}\right)-\fpd{L}{u^a}=F^j_{ab}\fpd{u^b}{x^j}+F_a.
\]
Note that the case $\mathcal{F}=0$ leads to the well-known Euler-Lagrange equations for first-order fields.

\subsection{Contact structure and critical sections}

We will now describe an alternative characterization of the solutions of a Lagrangian field theory $(E,L\eta,\mathcal{F})$. The approach is based on the notion of classical Lepage-equivalent variational problems~\cite{Gotay91} (see also~\cite{Krupka_book} and references therein) and the Griffiths formalism~\cite{Griffiths_book} (see also~\cite{Hsu_variations}). 

The canonical form on $J^1\pi$ is the $V\pi$-valued 1-form $\theta=(du^a-u^a_i dx^i)\otimes \partial_{u^a}$, which can be intrinsically expressed as (see e.g.~\cite{GeometryLagrangianFieldTheories}):
\begin{equation}\label{eq:contactstructuredefinition}
\left.\theta\right|_{j_x^1s}=T_{j_x^1s}\pi_{10}-T_xs\circ T_{j_x^1s}\pi_1.
\end{equation}
We consider the contact bundle $I_{\rm con}$ which is the subbundle of $\Lambda^\bullet (J^1\pi)$ generated by the forms $\theta^a=du^a-u^a_i dx^i$. With this it is meant that an element in the contact bundle $\rho\in I_{\rm con}$ is of the form
\[
\rho= \theta^a\wedge\beta_a, \qquad \beta_a\in\Lambda^\bullet(J^1\pi).
\]
Since $\theta$ is a $V\pi$-valued 1-form, composing $\theta$ with a section $\alpha$ of  $V^*\pi$ results in a 1-form on $J^1\pi$ which is a combination of the forms $\theta^a$, and therefore we can think of $I_{\rm con}$ as the subbundle generated (in the sense above) by the forms $\alpha\circ\theta$ with $\alpha$ a section of $V^*\pi$. In our case, it will be convenient to think of $I_{\rm con}$ as generated by the forms $\alpha\circ\theta$ where $\alpha$ is a 1-form on $E$ (again, this holds because $\theta$ is $V\pi$-valued).

We consider the \emph{contact subbundle} $I_{{\rm con},2}^m=I_{\rm con}\cap\Lambda^m_2 J^1\pi$ spanned by $m$-forms which are 2-horizontal and which, in view of the observations above, admits the following description:
\begin{equation}\label{eq:ContactFields}
  \left.I_{{\rm con},2}^m\right|_{j_x^1s}=\mathcal{L} \left\{\alpha\circ(T_{j_x^1s}\pi_{10}-T_xs\circ T_{j_x^1s}\pi_1)\wedge\beta:\alpha\in T^*_{s\left(x\right)}E, \beta\in\left(\Lambda^{m-1}_1J^1\pi\right)_{j_x^1s}\right\}.
\end{equation}
The notation $\mathcal{L}\{\cdot \}$ denotes the linear span. In other words, an element $\rho$ in the contact subbundle $I_{{\rm con},2}^m$ is of the form
\begin{equation*}
 \rho=(\alpha_1\circ\theta)\wedge\beta_1+\dots+(\alpha_k\circ\theta)\wedge\beta_k,
\end{equation*}
sor some $k\in\mathbb{N}$ and with $\alpha_i,\beta_i$ as in~\eqref{eq:ContactFields}. We will call an element of $I_{{\rm con},2}^m$ with a single summand (i.e., $k=1$) a \emph{simple element}. Most of the proofs involving $I_{{\rm con},2}^m$ will be done for simple elements, since the case of arbitrary elements is similar. 

Finally, we consider the subbundle $W_{L\eta}$ of $\Lambda^m_2J^1\pi$ given by the affine translation of the contact subbundle by the Lagrangian density:
\begin{equation}\label{eq:affinesubbundle}
 W_{L\eta}=L\eta+I^m_{{\rm con},2}\subset\Lambda^m_2J^1\pi.
\end{equation}
We denote by $\pi_{L\eta}:W_{L\eta}\rightarrow J^1\pi$ its canonical projection. Coordinates on $W_{L\eta}$ are given as follows. The bundle $I_{{\rm con},2}^m$ is spanned by the forms $\gamma^a_i=\theta^a\wedge \eta_i$, with $\eta_i=\partial_{x^i}\lrcorner \eta$, and thus an element $\alpha_{j_x^1s}\in (W_{L\eta})_{j_x^1s}$ is expressed as $\alpha_{j_x^1s}=(L\eta)_{j_x^1s}+p_a^i(\gamma^a_i)_{j_x^1s}$ for some multipliers $p^i_a$. This defines coordinates on the fibers of $\pi_{L\eta}$, and therefore $(x^i,u^a,u^a_i,p^i_a)$ are coordinates on $W_{L\eta}$ which are adapted to the fibrations:
\begin{equation*}
\begin{tikzcd}[column sep= 1.2cm, row sep = .2cm]
W_{L\eta} \arrow[r,"\pi_{L\eta}"] & J^1\pi\arrow[r,"\pi_{10}"] & E\arrow[r,"\pi"] &M,\\
(x^i,u^a,u^a_i,p^i_a)\arrow[r,mapsto] & (x^i,u^a,u^a_i)\arrow[r,mapsto] & (x^i,u^a)\arrow[r,mapsto] & x^i.
\end{tikzcd}
\end{equation*}

The bundle $\pi_{L\eta}\colon W_{L\eta}\to J^1\pi$ comes equipped with a corresponding Cartan $m$-form $\lambda_{L\eta}$. It is defined as follows: for all $v_1,\dots,v_m\in T_\alpha W_{L\eta}$, 
\begin{equation*}
\left.\lambda_{L\eta}\right|_\alpha (v_1,\dots,v_m)=\pi_{L\eta}^*\alpha(v_1,\dots,v_m)=\alpha\big(T\pi_{L\eta}(v_1),\dots, T\pi_{L\eta}(v_m) \big).
\end{equation*}
In coordinates, it reads:
\[
\lambda_{L\eta}=L\eta+p_a^i du^a\wedge \eta_i- p_a^i u^a_i \eta.
\]
We can now prove a useful characterization of the critical sections of a LFT:

\begin{proposition}\label{prop:FieldTheoryEqsWL}
A section $s\colon U\subset M\rightarrow E$ is critical for $(\pi\colon E\rightarrow M,L\eta)$ if and only if there exists a section $\Gamma\colon U\subset M\rightarrow W_{L\eta}$ such that
\begin{enumerate}[label={\arabic*)},nolistsep]
\item $\Gamma$ covers $s$, i.e. $\pi_{10}\circ\pi_{L\eta}\circ\Gamma=s$, and
\item $\Gamma^*\left(X\lrcorner d\lambda_{L\eta}\right)=0$, for all $X\in\mathfrak{X}^{V(\pi_1\circ\pi_{L\eta})}(W_{L\eta})$.
\end{enumerate}
\end{proposition}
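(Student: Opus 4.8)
The plan is to work throughout in the adapted coordinates $(x^i,u^a,u^a_i,p^i_a)$ on $W_{L\eta}$ and to match both implications against the coordinate form of the Euler--Lagrange equations recalled after Definition~\ref{def:LFT}. The first step is to compute the exterior derivative of the Cartan form. Using $d\eta=0$ (hence $d\eta_i=0$), $d(du^a)=0$, and $dx^j\wedge\eta_i=\delta^j_i\,\eta$ (so that $du^a\wedge\eta=\theta^a\wedge\eta$ and $dp^i_a\wedge du^a\wedge\eta_i-u^a_i\,dp^i_a\wedge\eta$ reassembles as $dp^i_a\wedge\theta^a\wedge\eta_i$), a direct computation from the coordinate expression of $\lambda_{L\eta}$ gives
\[
d\lambda_{L\eta}=\frac{\partial L}{\partial u^a}\,\theta^a\wedge\eta+\left(\frac{\partial L}{\partial u^a_i}-p^i_a\right)du^a_i\wedge\eta+dp^i_a\wedge\theta^a\wedge\eta_i .
\]

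Next I would contract with a generic test field and pull back along a covering section. A vector field $X\in\mathfrak{X}^{V(\pi_1\circ\pi_{L\eta})}(W_{L\eta})$ has no $\partial_{x^i}$-component, so $X=X^a\partial_{u^a}+X^a_i\partial_{u^a_i}+X^i_a\partial_{p^i_a}$ with arbitrary coefficient functions, and $X\lrcorner\eta=X\lrcorner\eta_i=0$ since $\eta,\eta_i$ are pulled back from $M$. A section $\Gamma$ covering $s$ reads $\Gamma=(x^i,s^a(x),u^a_i(x),p^i_a(x))$, where only the $(x^i,u^a)$-slots are pinned down by condition~1), and $\Gamma^*\theta^a=(\partial_i s^a-u^a_i)\,dx^i$. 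Carrying out the contraction term by term and pulling back, one obtains $\Gamma^*(X\lrcorner d\lambda_{L\eta})=\Phi\,\eta$ with
\[
\Phi=\left(\frac{\partial L}{\partial u^a}-\frac{\partial p^i_a}{\partial x^i}\right)X^a+\left(\frac{\partial L}{\partial u^a_i}-p^i_a\right)X^a_i+\left(\partial_i s^a-u^a_i\right)X^i_a ,
\]
where the partial derivatives of $L$ are evaluated along $\Gamma$.

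The statement then follows by a pointwise argument. Since $X^a$, $X^a_i$, $X^i_a$ can be prescribed arbitrarily, condition~2) is equivalent to the vanishing of each bracket in $\Phi$, i.e. to: (i) $u^a_i=\partial_i s^a$, that is $\pi_{L\eta}\circ\Gamma=j^1s$; (ii) $p^i_a=(\partial L/\partial u^a_i)\circ j^1 s$, the Legendre/momentum relation; and (iii), after substituting (i)--(ii), $\partial(\partial L/\partial u^a_i)/\partial x^i=\partial L/\partial u^a$ along $j^1 s$, which are precisely the Euler--Lagrange equations for $L\eta$. Hence the existence of a $\Gamma$ with 1)--2) forces $s$ to be critical; conversely, if $s$ is critical, the section $\Gamma$ defined by $u^a_i:=\partial_i s^a$ and $p^i_a:=(\partial L/\partial u^a_i)\circ j^1 s$ (the natural lift of $j^1 s$ to $W_{L\eta}$) satisfies 1) and, by the formula for $\Phi$, also 2); it is moreover the unique section over $s$ doing so.

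I do not expect a real obstacle: the delicate points are purely bookkeeping, namely the signs in $d\lambda_{L\eta}$ (especially the vanishing $d\eta_i=0$ and the reassembly into $dp^i_a\wedge\theta^a\wedge\eta_i$) and the fact that ``covering $s$'' constrains only the $(x^i,u^a)$-components, so that the contact relation $u^a_i=\partial_i s^a$ is an output of condition~2) and not a standing assumption. For a coordinate-free variant one can instead invoke the Griffiths picture: for any $\Gamma$ covering $s$ one has $\Gamma^*\lambda_{L\eta}=[\,L+p^i_a(\partial_i s^a-u^a_i)\,]\eta$, exhibiting $p^i_a$ as Lagrange multipliers enforcing holonomy; a Cartan-formula plus Stokes computation then identifies the critical points of $\Gamma\mapsto\int_U\Gamma^*\lambda_{L\eta}$ under $(\pi_1\circ\pi_{L\eta})$-vertical variations vanishing on $\partial U$ with the sections satisfying 2), while the multiplier interpretation identifies their projections with the critical sections of $L\eta$.
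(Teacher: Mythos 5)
Your proof is correct and follows essentially the same route as the paper: a coordinate computation of $d\lambda_{L\eta}$, contraction with $(\pi_1\circ\pi_{L\eta})$-vertical fields and pullback along $\Gamma$, yielding the implicit system $u^a_i=\partial_i s^a$, $p^i_a=\partial L/\partial u^a_i$, $\partial_i p^i_a=\partial L/\partial u^a$, which is exactly what the paper obtains by contracting with the basis fields $\partial_{u^a},\partial_{u^a_i},\partial_{p^i_a}$. The only difference is presentational (you expand $d\lambda_{L\eta}$ once and use a generic vertical field, plus an explicit converse and uniqueness remark), so no further comparison is needed.
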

Here $\mathfrak{X}^{V(\pi_1\circ\pi_{L\eta})}(W_{L\eta})$ denotes the vector fields which are vertical w.r.t. the projection $W_{L\eta}\to M$. $\Gamma$ is called a \emph{solution} of $(\pi\colon E\rightarrow M,L\eta)$ or of $(W_{L\eta},\lambda_{L\eta})$.
\begin{proof}
The situation is summarized in the following diagram:
\[
    \begin{tikzpicture}
      \matrix (m) [matrix of math nodes, row sep=5em, column sep=6em,
      text height=1.5ex, text depth=0.25ex]
      { W_{L\eta} & J^1\pi & & E   \\
        & M & &\\ }; \path[->] (m-1-2) edge node[left] {$
        \pi_1 $} (m-2-2) (m-1-1) edge node[above] {$ \pi_{L\eta} $}
      (m-1-2) (m-1-2) edge node[above] {$ \pi_{10} $} (m-1-4) (m-1-4)
      edge node[below] {$ \pi $} (m-2-2); \path[->,dashed]
      (m-2-2) edge [bend left=15] node[below] {$ \Gamma $} (m-1-1)
      edge [bend right=20] node[right] {$ j^1s $} (m-1-2) edge [bend
      right=25] node[below] {$ s $} (m-1-4);
    \end{tikzpicture}
\]
Considering the vector fields $\{\partial_{u^a},\partial_{u^a_i},\partial_{p^i_a}\}$ the condition $\Gamma^*\left(X\lrcorner d\lambda_{L\eta}\right)=0$ translates into:
\begin{align*}
    0&=\Gamma^*\left(\frac{\partial}{\partial u^a}\lrcorner d \lambda_{L\eta}\right)=\Gamma^*\left(\frac{\partial L}{\partial u^a}\eta- dp^k_a\wedge\eta_k\right),\\
    0&=\Gamma^*\left(\frac{\partial}{\partial u^a_k}\lrcorner d \lambda_{L\eta}\right)=\Gamma^*\left(\left(\frac{\partial L}{\partial u^a_k}-p_a^k\right)\eta\right),\\
    0&=\Gamma^*\left(\frac{\partial}{\partial p^k_a}\lrcorner d \lambda_{L\eta}\right)=\Gamma^*\big(\big( d  u^a-u^a_l d  x^l\big)\wedge\eta_k\big).
\end{align*}
Hence $\Gamma=(x^i,u^a(x),u^a_i(x),p^i_a(x))$ must satisfy
\[
\frac{\partial L}{\partial u^a}-\frac{\partial p^k_a}{\partial x^k}=0, \qquad \frac{\partial L}{\partial u^a_k}-p_a^k=0, \qquad \frac{\partial u^a}{\partial x^k}=u^a_k,
\]
which are the Euler-Lagrange equations, written in implicit form.
\end{proof}
In the presence of a force $\mathcal{F}$, a similar proof shows that $s\colon U\subset M\rightarrow E$ is a critical section for $(\pi\colon E\rightarrow M,L\eta,\mathcal{F})$ if and only if there exists a section $\Gamma\colon U\subset M\rightarrow W_{L\eta}$ covering $\gamma$ and such that
\[
\Gamma^*\big(X\lrcorner \big(d\lambda_{L\eta}+\widetilde{\mathcal{F}}\big)\big)=0,
\]
for all $X\in\mathfrak{X}^{V(\pi_1\circ\pi_{L\eta})}(W_{L\eta})$, where $\widetilde{\mathcal{F}}\in\Omega^{m+1}(W_{L\eta})$ is the pullback of $\mathcal{F}$. 

\begin{remark} The fact that the Euler-Lagrange equations obtained from the relation $\Gamma^*\left(X\lrcorner d\lambda_{L\eta}\right)=0$ are implicit has an important consequence:  the momentum constraint will be kept implicit through the reduction procedure, and this helps us to overcome the usual issues related to the group regularity (regularity w.r.t.\ the group variables) of the Lagrangian. In this respect, our approach is similar to~\cite{2016_RouthDirac}.
\end{remark}

In what follows we will consider only global solutions of the LFT, but all the results apply as well to local solutions.

\subsection{Symmetry and momentum}

We now discuss the presence of natural symmetries and their momentum maps for a LFT $(\pi\colon E\rightarrow M,L\eta)$. For concreteness, we will work with left actions.

We start with an action $\phi\colon G\times E\to E$ of a Lie group $G$ on $E$ which is vertical, i.e. $\pi(g e)=\pi(e)$ for each $g\in G$ and $e\in E$, where $ge= \phi_g(e)=\phi(g,e)$. We assume that the action is free and proper, and thus $p_G^E\colon E\rightarrow E/G$ is a principal fiber bundle. We will denote by $\overline{\pi}\colon E/G\rightarrow M$ the quotient bundle:
\begin{equation*}
  \begin{tikzcd}[column sep=.8cm, row
    sep=1.2cm]
   E\arrow[rr,"p_G^E"]\arrow[dr,swap,"\pi"] & &[-2ex] E/G\arrow[dl,"\overline{\pi}"]\\
    & M &
  \end{tikzcd}
\end{equation*}
The infinitesimal generator of an element $\xi\in\lag$ (where $\lag$ is the Lie algebra of $G$) will be denoted by $\xi_E$. More in general, if $Q$ is a manifold with a $G$-action we use the notation $\xi_Q$ for the infinitesimal generators.

There are natural (left) $G$-actions on $J^1\pi$ (by prolongation, i.e. $j^1\phi_g$), on $T(J^1\pi)$ (via the tangent lift) and on $\Lambda^p(J^1\pi)$ for any $p$ (via the cotangent lift). We will often use the abbreviated notation for all of them: for instance, if $\alpha_{j^1_xs}\in \Lambda^m(J^1\pi)$ we write $g\alpha_{j^1_xs}=T^*_{j^1\phi_g(j^1_xs)}j^1\phi_{g^{-1}}\alpha_{j^1_xs}$, and so on. We assume that the action leaves the Lagrangian density invariant. More precisely, we require $(j^1\phi_g)^*L\eta=L\eta$.

In this situation, it can be shown that the action preserves the contact subbundle $I_{{\rm con},2}^m$ and therefore, in view of the invariance of the Lagrangian density, it also preserves the subbundle $W_{L\eta}$. Moreover, the Cartan form $\lambda_{L\eta}$ is invariant w.r.t.\ this action: this can be checked using the argument in~\cite{GotayIM_Momentum}, Section 4.B. In this setting, the notion of momentum map for the action on $W_{L\eta}$ is introduced following~\cite{GotayIM_Momentum}, and it is a particular case of the more general multisymplectic approach~\cite{CaCraIb_1991}.  

\begin{definition}
A \emph{momentum map} for the action of $G$ on $W_{L\eta}$ is a map 
\[
 J\colon W_{L\eta}\to \Lambda^{m-1} W_{L\eta} \otimes \lag^*
\]
over the identity in $W_{L\eta}$ such that 
\[
\xi_{W_{L\eta}}\lrcorner d\lambda_{L\eta}=-dJ_\xi,
\]
where $J_\xi$ is the $(m-1)$-form on $W_{L\eta}$ whose value at $\alpha\in W_{L\eta}$ is $J_\xi(\alpha)=\langle J(\alpha),\xi \rangle$.
\end{definition}

Accordingly, we think of a ``momentum'' $\widehat{\mu}$ as an element $\widehat{\mu}\in \Omega^{m-1}(W_{L\eta},\lag^*)$, i.e. as a $\lag^*$-valued $(m-1)$-form on $W_{L\eta}$; a conserved value $\widehat{\mu}$ of the momentum map is a closed one, i.e. $d\widehat{\mu}=0$. If we consider a solution $\Gamma\colon U\subset M\rightarrow W_{L\eta}$ of $(\pi\colon E\rightarrow M,L\eta)$, then for each $\xi\in\lag$ we have
\[
d(\Gamma^*J_\xi)=\Gamma^*(dJ_\xi)=\Gamma^*(-\xi_{W_{L\eta}}\lrcorner d \lambda_{L\eta})=0,
\]
and therefore the momentum is conserved along solutions. Thus, we obtain Noether's theorem in this setting:
\begin{proposition} The momentum map $J$ is conserved along solutions of $(W_{L\eta},\lambda_{L\eta})$.
\end{proposition}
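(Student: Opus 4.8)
The proof plan is essentially to combine the definition of the momentum map with the characterization of solutions from Proposition~\ref{prop:FieldTheoryEqsWL}, since all the conceptual work has already been set up. Let me spell out the argument in the order I would carry it out.

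\paragraph{Plan of proof.}
Let $\Gamma\colon U\subset M\to W_{L\eta}$ be a solution of $(W_{L\eta},\lambda_{L\eta})$, and fix $\xi\in\lag$. The goal is to show that the $(m-1)$-form $\Gamma^*J_\xi$ on $U$ is closed. First I would observe that $d(\Gamma^*J_\xi)=\Gamma^*(dJ_\xi)$ since pullback commutes with the exterior derivative. Next, I would invoke the defining property of the momentum map, $\xi_{W_{L\eta}}\lrcorner d\lambda_{L\eta}=-dJ_\xi$, to rewrite this as $d(\Gamma^*J_\xi)=-\Gamma^*\big(\xi_{W_{L\eta}}\lrcorner d\lambda_{L\eta}\big)$. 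The final and only substantive step is to argue that this last pullback vanishes: by condition~2) of Proposition~\ref{prop:FieldTheoryEqsWL}, $\Gamma^*(X\lrcorner d\lambda_{L\eta})=0$ for every $X\in\mathfrak{X}^{V(\pi_1\circ\pi_{L\eta})}(W_{L\eta})$, so it suffices to check that the infinitesimal generator $\xi_{W_{L\eta}}$ is vertical with respect to the projection $\pi_1\circ\pi_{L\eta}\colon W_{L\eta}\to M$.

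\paragraph{The key point: verticality of $\xi_{W_{L\eta}}$.}
This is where the hypotheses on the action enter. The $G$-action on $E$ is assumed to be vertical over $M$, i.e.\ $\pi\circ\phi_g=\pi$, and the induced actions on $J^1\pi$ (by prolongation) and on $W_{L\eta}$ (which preserves $W_{L\eta}$, as noted in the text) are compatible with the projections $\pi_{10}$, $\pi_1$ and $\pi_{L\eta}$; in particular the $G$-action on $W_{L\eta}$ projects to the identity on $M$, equivalently $(\pi_1\circ\pi_{L\eta})\circ\psi_g=\pi_1\circ\pi_{L\eta}$ for the lifted action $\psi_g$ on $W_{L\eta}$. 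Differentiating this relation at $g=e$ in the direction $\xi$ yields $T(\pi_1\circ\pi_{L\eta})\circ\xi_{W_{L\eta}}=0$, which is precisely the statement that $\xi_{W_{L\eta}}\in\mathfrak{X}^{V(\pi_1\circ\pi_{L\eta})}(W_{L\eta})$. Putting this into condition~2) gives $\Gamma^*(\xi_{W_{L\eta}}\lrcorner d\lambda_{L\eta})=0$, hence $d(\Gamma^*J_\xi)=0$, so that $\Gamma^*J_\xi$ is closed along any solution, as claimed. Since $\xi\in\lag$ was arbitrary, the full $\lag^*$-valued momentum $\Gamma^*J$ is conserved.

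\paragraph{Remark on difficulty.}
There is no real obstacle here: the statement is a direct formal consequence of the momentum map identity and the variational characterization of solutions. The only thing one must be a little careful about is not to confuse the two verticality notions in play — vertical over $M$ versus vertical over $J^1\pi$ — and to make sure that the generator $\xi_{W_{L\eta}}$ is of the type admitted as a test vector field in Proposition~\ref{prop:FieldTheoryEqsWL}, which it is because $\mathfrak{X}^{V(\pi_1\circ\pi_{L\eta})}(W_{L\eta})$ consists of \emph{all} vector fields vertical over $M$, a class that obviously contains the infinitesimal generators of a fibre-preserving action. (In fact this is the same computation already displayed immediately before the proposition for a solution of the unreduced LFT; the present statement merely records it as a theorem.) I would also note in passing that the identical argument, with $d\lambda_{L\eta}$ replaced by $d\lambda_{L\eta}+\widetilde{\mathcal F}$, shows that the momentum is conserved for the forced problem provided the force $\mathcal F$ is $G$-invariant and $\xi_{W_{L\eta}}\lrcorner\widetilde{\mathcal F}$ pulls back to zero along $\Gamma$, but for the present statement the force-free case is all that is needed.
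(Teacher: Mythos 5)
Your argument is correct and is essentially the paper's own proof: the identity $d(\Gamma^*J_\xi)=\Gamma^*(dJ_\xi)=-\Gamma^*(\xi_{W_{L\eta}}\lrcorner d\lambda_{L\eta})=0$ via condition 2) of Proposition~\ref{prop:FieldTheoryEqsWL}, using that $\xi_{W_{L\eta}}$ is vertical over $M$ because the $G$-action is fibre-preserving. Your explicit verification of this verticality is a detail the paper leaves implicit, but the route is the same.
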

We might then restrict our attention to solutions which lie in the level set of a fixed (and closed) value of the momentum $\widehat{\mu}$. 

A momentum map is $Ad^*$-\emph{equivariant} if it satisfies
\[
 \langle J(g\alpha),Ad_{g^{-1}}\xi \rangle=g \langle J(\alpha),\xi \rangle. 
\]
Note that this is an equivariance condition for the natural action of $G$ on the spaces $W_{L\eta}$ and $\Lambda^{m-1} W_{L\eta} \otimes \lag^*$, where $G$ acts on $\lag^*$ by $g\mu=Ad_{g^{-1}}^*\mu$.  The construction of a momentum map for the action on $W_{L\eta}$ is standard~\cite{GotayIM_Momentum}:

\begin{lemma}\label{lem:mmap} The map $J\colon W_{L\eta}\to \Lambda^{m-1} W_{L\eta} \otimes \lag^*$ defined by
\[
\langle J(\alpha),\xi \rangle=\xi_{W_{L\eta}}(\alpha) \lrcorner \left.\lambda_{L\eta}\right|_\alpha,
\]
for each $\xi\in\lag$, is an $Ad^*$-equivariant momentum map for the $G$-action on $W_{L\eta}$.
\end{lemma}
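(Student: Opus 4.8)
The plan is to verify the two required properties of $J$ directly from the formula $\langle J(\alpha),\xi\rangle=\xi_{W_{L\eta}}(\alpha)\lrcorner\lambda_{L\eta}|_\alpha$: first that it is a momentum map, i.e.\ $\xi_{W_{L\eta}}\lrcorner d\lambda_{L\eta}=-dJ_\xi$, and second that it is $Ad^*$-equivariant. For the first property, I would use Cartan's magic formula: since $\xi_{W_{L\eta}}$ is the infinitesimal generator of the $G$-action, the invariance of $\lambda_{L\eta}$ noted above gives $\mathcal{L}_{\xi_{W_{L\eta}}}\lambda_{L\eta}=0$, so that
\[
0=\mathcal{L}_{\xi_{W_{L\eta}}}\lambda_{L\eta}=\xi_{W_{L\eta}}\lrcorner d\lambda_{L\eta}+d\big(\xi_{W_{L\eta}}\lrcorner\lambda_{L\eta}\big)=\xi_{W_{L\eta}}\lrcorner d\lambda_{L\eta}+dJ_\xi,
\]
which is exactly the defining relation. (Here I am using that $J_\xi=\xi_{W_{L\eta}}\lrcorner\lambda_{L\eta}$ as an $(m-1)$-form on $W_{L\eta}$, which is the content of the displayed formula read pointwise.)

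For the equivariance property, I would compute $\langle J(g\alpha),Ad_{g^{-1}}\xi\rangle$ and show it equals $g\langle J(\alpha),\xi\rangle$. The key identity is the standard relation between infinitesimal generators and the group action: $(Ad_{g^{-1}}\xi)_{W_{L\eta}}=\Phi_{g}^*\,\xi_{W_{L\eta}}$, i.e.\ $T\Phi_{g^{-1}}\circ\xi_{W_{L\eta}}\circ\Phi_g=(Ad_{g^{-1}}\xi)_{W_{L\eta}}$, where $\Phi_g$ denotes the induced action of $g$ on $W_{L\eta}$. Using this together with the invariance $\Phi_g^*\lambda_{L\eta}=\lambda_{L\eta}$, one gets
\[
\langle J(g\alpha),Ad_{g^{-1}}\xi\rangle=(Ad_{g^{-1}}\xi)_{W_{L\eta}}(g\alpha)\lrcorner\lambda_{L\eta}|_{g\alpha}=\big(T\Phi_g(\xi_{W_{L\eta}}(\alpha))\big)\lrcorner\lambda_{L\eta}|_{g\alpha}=\xi_{W_{L\eta}}(\alpha)\lrcorner\big(\Phi_g^*\lambda_{L\eta}\big)|_\alpha,
\]
and then pushing the result forward by $\Phi_g$ (which is precisely the cotangent-lift action on $(m-1)$-forms, i.e.\ the action denoted $g(\cdot)$) identifies this with $g\big(\xi_{W_{L\eta}}(\alpha)\lrcorner\lambda_{L\eta}|_\alpha\big)=g\langle J(\alpha),\xi\rangle$. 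One should be slightly careful to track that the $G$-action on $\Lambda^{m-1}W_{L\eta}\otimes\lag^*$ used in the statement combines the cotangent lift on the form factor with the coadjoint action on $\lag^*$, so that unwinding both sides consistently is really what makes the identity hold; I would spell this matching out rather than leave it implicit.

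Finally I would check that $J$ is indeed a well-defined smooth map over the identity of $W_{L\eta}$ and that $J_\xi$ depends linearly on $\xi$, so that $J(\alpha)\in\Lambda^{m-1}_\alpha W_{L\eta}\otimes\lag^*$ makes sense — this is immediate since $\xi\mapsto\xi_{W_{L\eta}}$ is linear and contraction is linear, and smoothness follows from smoothness of the action and of $\lambda_{L\eta}$. The main obstacle is not any single computation but rather bookkeeping: making sure the various ``abbreviated'' $G$-actions (tangent lift, cotangent lift, prolongation, coadjoint) are used consistently, and in particular that the sign and placement of $g$ versus $g^{-1}$ in $(Ad_{g^{-1}}\xi)_{W_{L\eta}}=T\Phi_g\circ\xi_{W_{L\eta}}\circ\Phi_{g^{-1}}$ is correct for left actions; once the conventions are pinned down, both properties follow from $\mathcal{L}_{\xi_{W_{L\eta}}}\lambda_{L\eta}=0$ and $\Phi_g^*\lambda_{L\eta}=\lambda_{L\eta}$, which are granted by the discussion preceding the lemma.
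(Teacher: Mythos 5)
Your overall strategy is the standard one that the paper itself invokes without proof (it simply cites the construction in Gotay--Isenberg--Marsden--Montgomery): Cartan's magic formula together with the invariance of $\lambda_{L\eta}$ gives the momentum-map property, and the generator/adjoint identity together with the same invariance gives equivariance. The first half of your argument is correct as written.

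The equivariance half, however, contains a genuine error, and it sits exactly in the bookkeeping you flagged as the main risk. The second equality of your displayed chain uses $(Ad_{g^{-1}}\xi)_{W_{L\eta}}(g\alpha)=T\Phi_g\bigl(\xi_{W_{L\eta}}(\alpha)\bigr)$, which is false for a left action: the correct relations are $\xi_{W_{L\eta}}(g\alpha)=T\Phi_g\bigl((Ad_{g^{-1}}\xi)_{W_{L\eta}}(\alpha)\bigr)$, equivalently $(Ad_{g}\xi)_{W_{L\eta}}(g\alpha)=T\Phi_g\bigl(\xi_{W_{L\eta}}(\alpha)\bigr)$. This also contradicts the (correct) identity $(Ad_{g^{-1}}\xi)_{W_{L\eta}}=T\Phi_{g^{-1}}\circ\xi_{W_{L\eta}}\circ\Phi_g$ that you state one line earlier, and your closing paragraph restates that identity with $g$ and $g^{-1}$ interchanged, so the three versions in your text are mutually inconsistent. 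Running the computation with the correct identity and with the paper's action on $(m-1)$-forms (namely $g\beta=\Phi_{g^{-1}}^{*}\beta$, a form at $g\alpha$) yields $\langle J(g\alpha),\xi\rangle=g\langle J(\alpha),Ad_{g^{-1}}\xi\rangle$, i.e.\ $J(g\alpha)=g\cdot J(\alpha)$ for the diagonal action, equivalently $\langle J(g\alpha),Ad_{g}\xi\rangle=g\langle J(\alpha),\xi\rangle$; reconciling this with the inverse placement in the paper's displayed definition of $Ad^*$-equivariance is precisely the convention-matching you promised to spell out, and it cannot be obtained by forcing the wrong generator identity. A smaller point in the same chain: $\bigl(T\Phi_g\xi_{W_{L\eta}}(\alpha)\bigr)\lrcorner\lambda_{L\eta}|_{g\alpha}$ is a form at $g\alpha$ while $\xi_{W_{L\eta}}(\alpha)\lrcorner(\Phi_g^{*}\lambda_{L\eta})|_{\alpha}$ is a form at $\alpha$, so they are not equal as written; the identity only appears after you also pull back the remaining $m-1$ arguments by $T\Phi_g$, which is the step you compress into ``pushing the result forward by $\Phi_g$'' and which is where the invariance $\Phi_g^{*}\lambda_{L\eta}=\lambda_{L\eta}$ is actually used. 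With these corrections the proof goes through and is the standard argument.
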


We use the notation $J$ to denote the specific momentum map defined in Lemma~\ref{lem:mmap}. We will show that such an admissible momentum $\widehat{\mu}$ is $(\pi_1\circ \pi_{L\eta})$-basic, and can be though of as an $(m-1)$-form $\mu_M$ on $M$. From now on, we assume that $(\pi_1\circ\pi_{L\eta})$ has connected fibers.

\begin{lemma}\label{lem:momentumisbasic} Let $\widehat{\mu}\in\Omega^{m-1}(W_{L\eta},\g^*)$ be a closed $\g^*$-valued form on $W_{L\eta}$ in the image of $J$. Then there exists $\mu_M\in\Omega^{m-1}(M,\g^*)$ such that $\widehat{\mu}=(\pi_1\circ \pi_{L\eta})^*\,\mu_M$. In particular, $\widehat{\mu}$ is 1-horizontal.
\end{lemma}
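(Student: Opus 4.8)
The plan is to obtain the statement in two moves. First I would show that \emph{any} $\widehat{\mu}$ in the image of $J$ is automatically $1$-horizontal with respect to the projection $p:=\pi_1\circ\pi_{L\eta}\colon W_{L\eta}\to M$; this uses only the structure of the Cartan form $\lambda_{L\eta}$ and the fact that the symmetry is vertical over $M$. Second I would upgrade ``$1$-horizontal and closed'' to ``basic'' by a standard argument for surjective submersions with connected fibres.

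For the horizontality, write $\widehat{\mu}_\xi=\langle\widehat{\mu},\xi\rangle$ for $\xi\in\g$. Since $\widehat{\mu}$ lies in the image of $J$, at each $\alpha\in W_{L\eta}$ we have, by Lemma~\ref{lem:mmap} and the definition of the Cartan form,
\[
\left.\widehat{\mu}_\xi\right|_\alpha=\xi_{W_{L\eta}}(\alpha)\lrcorner\left.\lambda_{L\eta}\right|_\alpha=\xi_{W_{L\eta}}(\alpha)\lrcorner\big(T_\alpha\pi_{L\eta}\big)^*\alpha ,
\]
where on the right $\alpha$ is viewed as an element of $\Lambda^m_2 J^1\pi$. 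As the $G$-action on $W_{L\eta}$ covers the prolonged action on $J^1\pi$, the generator $\xi_{W_{L\eta}}$ is $\pi_{L\eta}$-related to $\xi_{J^1\pi}$, hence $\left.\widehat{\mu}_\xi\right|_\alpha=\big(T_\alpha\pi_{L\eta}\big)^*\big(\xi_{J^1\pi}\lrcorner\alpha\big)$. Now $\xi_{J^1\pi}$ is $\pi_1$-vertical (it projects onto the $\pi$-vertical field $\xi_E$) and $\alpha\in\Lambda^m_2 J^1\pi$, so $\xi_{J^1\pi}\lrcorner\alpha\in\Lambda^{m-1}_1 J^1\pi$; since $T_\alpha\pi_{L\eta}$ sends $p$-vertical vectors to $\pi_1$-vertical ones, it follows that $X\lrcorner\widehat{\mu}_\xi=0$ for every $X\in\mathfrak{X}^{V(\pi_1\circ\pi_{L\eta})}(W_{L\eta})$. (Equivalently, in adapted coordinates a short computation gives $\widehat{\mu}_\xi=p^i_a\,\xi^a_E\,\eta_i$, with $\xi_E=\xi^a_E\,\partial_{u^a}$, which is visibly a combination of the $1$-horizontal forms $\eta_i$.) In particular $\widehat{\mu}$ is $1$-horizontal, which already yields the last assertion of the lemma.

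With $d\widehat{\mu}=0$ and the $1$-horizontality of $\widehat{\mu}$ in hand, Cartan's formula gives, for every $X\in\mathfrak{X}^{V(\pi_1\circ\pi_{L\eta})}(W_{L\eta})$,
\[
\mathcal{L}_X\widehat{\mu}=X\lrcorner d\widehat{\mu}+d\big(X\lrcorner\widehat{\mu}\big)=0 ,
\]
so $\widehat{\mu}$ is also invariant along the fibres of $p$. Since $p=\pi_1\circ\pi_{L\eta}$ is a surjective submersion (a composition of fibre bundle projections) with connected fibres, the standard characterization of basic forms produces a unique $\mu_M\in\Omega^{m-1}(M,\g^*)$ with $\widehat{\mu}=(\pi_1\circ\pi_{L\eta})^*\mu_M$; this is applied componentwise with respect to a basis of $\g^*$, which is harmless since $\g^*$ is a fixed vector space.

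The only step with genuine content is the horizontality: the point is to use that $W_{L\eta}\subset\Lambda^m_2 J^1\pi$, so that contracting the (vertical-over-$M$) symmetry generator into $\lambda_{L\eta}$ lowers the horizontality degree by exactly one. Once this is set up, the passage from ``$1$-horizontal and closed'' to ``basic'' is routine.
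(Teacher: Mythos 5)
Your proof is correct and follows essentially the same route as the paper: horizontality of $\widehat{\mu}$ is deduced from the $2$-horizontality of the elements $\alpha\in W_{L\eta}\subset\Lambda^m_2 J^1\pi$ together with the $\pi_1$-verticality of the generators (your pullback reformulation via $\xi_{J^1\pi}\lrcorner\alpha\in\Lambda^{m-1}_1 J^1\pi$ is just a rephrasing of the paper's direct evaluation), and then closedness plus Cartan's formula and the connectedness of the fibres of $\pi_1\circ\pi_{L\eta}$ give basicness exactly as in the paper.
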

\begin{proof} Let $\widehat{\mu}\mid_\alpha=J(\alpha)$. From the definition of the Cartan form $\lambda_{L\eta}$, if $v \in V_\alpha(\pi_1\circ \pi_{L\eta})$, then
\[
v\lrcorner J_\xi(\alpha)= v\lrcorner \big(\xi_{W_{L\eta}}\lrcorner \lambda_{L\eta}\big|_\alpha\big)=0,
\]
since $\alpha$ is 2-horizontal and $T\pi_{L\eta}(v),T\pi_{L\eta}(\xi_{W_{L\eta}})\in V\pi_1$. Therefore $\widehat{\mu}$ annihilates the vertical space of $(\pi_1\circ \pi_{L\eta})$. It remains to check that $\widehat{\mu}$ is constant on the fibers of $(\pi\circ\pi_{L\eta})$. This happens if, and only if,  $\cL_Z\widehat{\mu}=0$ for each $Z$ vertical w.r.t. $(\pi_1\circ \pi_{L\eta})$ (because $(\pi_1\circ \pi_{L\eta})$ has connected fibers). But this is immediate: using that $\widehat{\mu}$ is closed we have $\cL_Z\widehat{\mu}=d\left(Z\lrcorner\widehat{\mu}\right)=0$. 
\end{proof}
We will write $\mu=\pi_1^*\mu_M\in \Omega^{m-1}_1(J^1\pi,\g^*)$, and speak indistinctively of $\widehat{\mu}$ or $\mu$; note that $\mu$ is characterized by:
\begin{equation*}
\widehat{\mu}=\pi_{L\eta}^*\,\mu .
\end{equation*}

Given a closed $\mathfrak{g}^*$-valued form $\widehat{\mu}$ on $W_{L\eta}$, we  denote the corresponding momentum level set as follows:
\[
J^{-1}\left(\widehat{\mu}\right)=\left\{\alpha\in W_{L\eta}:\left.\widehat{\mu}\right|_{\alpha}=J\left(\alpha\right)\right\}.
\]
We will give an explicit description of the elements of this set in Lemma \ref{lem:momentum-set} below. It can be proved that this set is a particular instance of a \emph{momentum-type submanifold} of $W_{L\eta}$, as discussed in~\cite{AMN_17}. 

We will denote by $G_\mu$ the isotropy group of $\mu$, i.e. the subgroup of $G$ consisting of elements which leave $\mu$ invariant under the natural action on $\Omega^{m-1}(J^1\pi,\g^*)$:
\[
 G_\mu=\{g\in G: g\mu=\mu\}.
\]
It is easy to check that this subgroup coincides with the isotropy group of $\widehat{\mu}$ (defined analogously). Thus, $G_\mu$ acts on $W_{L\eta}$ and leaves $J^{-1}(\widehat{\mu})$ invariant.

\begin{remark} In the case of classical mechanics the configuration bundle is $Q\times \R\to \R$. We have $m=1$, and a momentum map is a map $J:W_L\to \lag^*$. A momentum value is identified with an element in $\lag^*$, as usual.
\end{remark}

There is an splitting of the contact bundle induced by the choice of a connection on the principal bundle $p^E_G\colon E\to E/G$, as we describe next. We denote by $\omega\in\Omega^1(E,\lag)$ the chosen connection and consider the following splitting of the cotangent bundle:
\[
T^*E=(p^E_G)^*\big(T^*(E/G)\big) \oplus (E\times \lag^*).
\]
The identification is obtained as follows:
\begin{align*}
(p^E_G)^*\big(T^*(E/G)\big) \oplus (E\times \lag^*)&\to T^*E,\\
(e,\widehat{\alpha}_{[e]},\sigma)&\mapsto \alpha_e=\widehat{\alpha}_{[e]}\circ T_ep_G^E+\langle\sigma,\omega(\cdot)\rangle.
\end{align*}
Accordingly, we have an splitting of contact bundle~\eqref{eq:ContactFields}
\begin{equation}\label{eq:splittingcontact_LFT}
I^m_{{\rm con},2}=\widetilde{I^m_{{\rm con},2}}\oplus I^m_{\lag^*,2},
\end{equation}
with
\begin{align*}
  \left.\widetilde{I^m_{{\rm con},2}}\right|_{j_x^1s}&=\mathcal{L}\,\Big\{\widehat{\alpha}_{\left[s\left(x\right)\right]}\circ T_{s\left(x\right)}p_G^E\circ(T_{j_x^1s}\pi_{10}-T_xs\circ T_{j_x^1s}\pi_1)\wedge\beta:\\
  &\hskip10em\widehat{\alpha}_{\left[s\left(x\right)\right]}\in T_{\left[s\left(x\right)\right]}^*\left(E/G\right),\beta\in\left(\Lambda^{m-1}_1J^1\pi\right)_{j_x^1s}\Big\},\\
  \left.I_{\g^*,2}^m\right|_{j_x^1s}&=\left\{\langle\sigma\stackrel{\wedge}{,}\omega\circ(T_{j_x^1s}\pi_{10}-T_xs\circ T_{j_x^1s}\pi_1)\rangle:\sigma\in\left(\Lambda^{m-1}_1J^1\pi\otimes\g^*\right)_{j_x^1s}\right\}.
\end{align*}
Here $\langle\cdot\stackrel{\wedge}{,} \cdot\rangle$ denotes the natural contraction. For simple tensors, writing $\alpha_1\otimes\nu$ for a $\lag^*$-valued form ($\nu\in\lag^*$) and $\alpha_2\otimes \eta$ for a $\lag$-valued form ($\eta\in\lag$), we have $\langle\alpha_1\otimes\nu \stackrel{\wedge}{,} \alpha_2\otimes\eta \rangle=\langle\nu,\eta\rangle \alpha_1\wedge\alpha_2$.

If $s\colon U\to E$ is a section, we can define a reduced section $[s]_G\colon U\to E/G$, whose value at $x\in M$ is simply $[s(x)]_G$. We also recall that a point in the fiber of $x\in M$ of the vector bundle $\Lin(\overline{\pi}^*TM,\widetilde{\lag})\simeq \overline{\pi}^*(T^*M)\otimes \widetilde{\lag}$ represents a linear map from $T_xM$ to $\lag$, where $\widetilde{\lag}$ is the adjoint bundle associated to the principal bundle $p_G^E:E\rightarrow E/G$. Sections of this bundle might be identified with linear bundle maps over the identity from $\overline{\pi}^*TM$ to $\widetilde{\lag}$. An element in $\Lin(\overline{\pi}^*TM,\widetilde{\lag})$ is of the form $[e,\widehat{\xi}]_G$, where $\overline{\pi}(e)\in E/G$ and $\widehat{\xi}\colon T_{\overline{\pi}(e)}M\to \lag$ is a linear map.

\begin{proposition} The map
  \begin{align*}
    \Upsilon_{\omega}\colon J^1\pi&\longto \left(p_G^E\right)^*\left(J^1\overline{\pi}\times_{E/G}\Lin\left(\overline{\pi}^*TM,\widetilde{\lag}\right)\right),\\
    j_x^1s&\longmapsto \left(s\left(x\right),j^1_x\left[s\right]_G,\left[s\left(x\right),\omega\circ T_xs\right]_G\right).
  \end{align*}
is a bundle isomorphism.
\end{proposition}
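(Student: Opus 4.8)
The plan is to construct an explicit inverse and verify that both compositions are the identity, working fiberwise and in adapted coordinates to confirm smoothness. First I would observe that the target of $\Upsilon_\omega$ sits over $E$ via the projection $(p_G^E)^*(\cdots)\to E$, and $\Upsilon_\omega$ covers the identity on $E$ (the first component of $\Upsilon_\omega(j^1_xs)$ is $s(x)$, and $\pi_{10}(j^1_xs)=s(x)$). So it suffices to check that $\Upsilon_\omega$ restricts to an isomorphism on each fiber over a point $e\in E$, i.e. between $(\pi_{10})^{-1}(e)$ and $\{e\}\times\big(J^1\overline\pi\times_{E/G}\Lin(\overline\pi^*TM,\widetilde\lag)\big)_{[e]_G}$, and that it and its inverse are smooth. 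The fiber of $\pi_{10}$ over $e$ is an affine space modeled on $T^*_xM\otimes V_e\pi$ (with $x=\pi(e)$), while the target fiber, using the splitting $T_eE\cong (p_G^E)^*T_{[e]}(E/G)\oplus\lag$ induced by $\omega$, decomposes $V_e\pi\cong V_{[e]}\overline\pi\oplus\lag$; thus the target fiber is affine over $T^*_xM\otimes V_{[e]}\overline\pi$ for the $J^1\overline\pi$ part and is a full linear space $T^*_xM\otimes\lag\cong\Lin(T_xM,\lag)$ for the $\Lin(\overline\pi^*TM,\widetilde\lag)$ part — matching dimensions.

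Next I would write the inverse map. Given $e\in E$ with $[e]_G=\overline e$, a jet $j^1_x\overline s\in J^1\overline\pi$ with $\overline s(x)=\overline e$, and a linear map $\widehat\xi\colon T_xM\to\lag$ (representing the class $[e,\widehat\xi]_G$), the candidate preimage is the jet $j^1_xs$ of the unique section $s$ through $e$ whose tangent map $T_xs\colon T_xM\to T_eE$ is characterized by: its composition with $T_ep_G^E$ equals $T_x\overline s$, and $\omega\circ T_xs=\widehat\xi$. In coordinates: given $\overline s$ with slopes determined by $j^1_x\overline s$ in the $E/G$ directions, and given $\widehat\xi$ whose components are the $\lag$-part of the slope, the connection $\omega$ lets one reconstruct all the $u^a_i$. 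One must check this is well-defined (independent of the representative $e$ chosen in its $G$-orbit, using the $G$-equivariance built into the associated-bundle and pullback-bundle constructions) — but since we are working over a fixed $e$, equivalently over the pullback $(p_G^E)^*$, this reduces to the obvious identification and no genuine quotient issue arises.

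Then I would verify $\Upsilon_\omega\circ(\text{inverse})=\mathrm{id}$ and $(\text{inverse})\circ\Upsilon_\omega=\mathrm{id}$. For the second composition: starting from $j^1_xs$, form $\big(s(x),j^1_x[s]_G,[s(x),\omega\circ T_xs]_G\big)$ and reconstruct the jet whose $(E/G)$-slope is that of $[s]_G$ and whose $\omega$-value is $\omega\circ T_xs$; by~\eqref{eq:contactstructuredefinition} and the very definition of the splitting $T^*E=(p_G^E)^*(T^*(E/G))\oplus(E\times\lag^*)$ — applied dually to $T_ep_G^E$ and $\omega$ — these two pieces of data reassemble $T_xs$ uniquely, hence recover $j^1_xs$. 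The first composition is analogous and slightly easier. Smoothness in both directions is clear from the coordinate expressions, since the connection components are smooth functions on $E$ and everything is polynomial in the jet coordinates.

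The main obstacle, such as it is, is purely bookkeeping: one has to be careful that the pairing conventions in the splitting of $T^*E$ are compatible with the way $\Upsilon_\omega$ pairs $\widehat\alpha_{[s(x)]}$ against $T_{s(x)}p_G^E$ and $\sigma$ against $\omega$, so that the linear-algebra isomorphism $T^*_eE\cong (p_G^E)^*T^*_{[e]}(E/G)\oplus\lag^*$ dualizes correctly to the direct-sum decomposition of $T_eE$ (equivalently $V_e\pi$) that the inverse map uses to glue $T_x\overline s$ and $\widehat\xi$ into $T_xs$. Once the conventions are pinned down this is routine; I would present the argument fiberwise and remark that smoothness follows from the adapted coordinate formulas.
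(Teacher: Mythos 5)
Your proposal is correct and follows essentially the same route as the paper: both prove the statement by building the inverse from the connection-induced splitting of $T_eE$, your version characterizing $T_xs$ by the two conditions $T_{s(x)}p_G^E\circ T_xs=T_x\overline{s}$ and $\omega\circ T_xs=\widehat{\xi}$, while the paper writes the same inverse explicitly as $\sigma(v_x)=(H_\omega\circ\overline{\sigma})(v_x)+\big(\widehat{\xi}(v_x)\big)_E(e)$. The extra fiberwise/dimension bookkeeping you include is fine but not needed beyond what the explicit formula already gives.
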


\begin{proof} We will construct explicitly the inverse $\Upsilon_{\omega}^{-1}$. We regard a 1-jet $\sigma$ of $\pi$ as a splitting of the sequence
\[
0\longto V_eE{\longto} T_eE\overset{\pi_*}{\longto} T_xM \longto 0,
\]
i.e. as a map $\sigma\colon T_xM\to T_eE$ with $\pi_*\circ\sigma=id_{T_xM }$, with $\pi_*=T\pi$. We need to define such an splitting starting from an splitting $\overline{\sigma}$ of the sequence
\[
0\longto V_{[e]}(E/G){\longto} T_{[e]}(E/G) \overset{\overline{\pi}_*}{\longto} T_xM \longto 0.
\]
The connection $\omega$ determines, via its horizontal lift $H_{\omega}\colon T_{[e]}(E/G)\to T_eE$, a splitting of the sequence
\[
0\longto V_eE{\longto} T_eE\overset{(p_G^E)_*}{\longto} T_{[e]}(E/G) \longto 0.
\]
Finally, we also have a linear map $\widehat{\xi}:T_xM\to \lag$. Then $\sigma=\Upsilon_{\omega}^{-1}\big(e,\overline{\sigma},\widehat{\xi}\,\big)$ is the splitting $\sigma\colon T_xM\to T_eE$ given by
\[
\sigma(v_x)=(H_\omega\circ\overline{\sigma})(v_x)+\big(\widehat{\xi}(v_x)\big)_Q(e).
\]
It is clear from the definition that $\sigma$ is the inverse of $\Upsilon_{\omega}$.
\end{proof}
The map $\Upsilon_\omega$ enjoys a useful property: under this identification, the action of $G$ on $J^1\pi$ is simply
\[
g\cdot\big(e,j_x^1\overline{s},[e,\widehat{\xi}]_G\big)=\big(g\cdot e,j_x^1\overline{s},[e,\widehat{\xi}]_G\big).
\]
This is a direct consequence of the equivariance of the principal connection $\omega$. As a result, we get the following corollary.
\begin{corollary}\label{cor:identification}
There is an identification
  \[
  J^1\pi/G_\mu\simeq  J^1\overline{\pi}\times_{E/G}E/G_\mu\times_{E/G} \Lin\left(\overline{\pi}^*TM,\widetilde{\lag}\right).
  \]
\end{corollary}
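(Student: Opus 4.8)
\textbf{Proof proposal for Corollary~\ref{cor:identification}.}

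The plan is to quotient the bundle isomorphism $\Upsilon_\omega$ of the preceding Proposition by the group $G_\mu$ and track what each factor becomes. Recall that $\Upsilon_\omega$ identifies $J^1\pi$ with $\left(p_G^E\right)^*\left(J^1\overline{\pi}\times_{E/G}\Lin\left(\overline{\pi}^*TM,\widetilde{\lag}\right)\right)$, which as a set consists of triples $\big(e,j_x^1\overline{s},[e,\widehat{\xi}]_G\big)$ with $p_G^E(e)=[s(x)]_G$ equal to the base point of $j_x^1\overline{s}$ in $E/G$. The key fact, stated just after the Proposition, is that under this identification $G$ acts only on the first slot: $g\cdot\big(e,j_x^1\overline{s},[e,\widehat{\xi}]_G\big)=\big(g\cdot e,j_x^1\overline{s},[e,\widehat{\xi}]_G\big)$, where one uses that $[e,\widehat\xi]_G=[g\cdot e, \mathrm{Ad}_g\widehat\xi\,]_G$ so the adjoint-bundle element is genuinely $G$-invariant as written. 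The first step, then, is to restrict this action to $G_\mu\subset G$ and observe it still acts only on the $e$-component.

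The second step is to take the quotient by $G_\mu$ of the fibred product $E\times_{E/G}\big(J^1\overline{\pi}\times_{E/G}\Lin(\overline{\pi}^*TM,\widetilde{\lag})\big)$ — which is what the pullback $(p_G^E)^*(\,\cdot\,)$ unwinds to. Since $G_\mu$ acts trivially on the last two factors and they are fibred over $E/G$, the quotient distributes over the fibred product: a point of the quotient is an equivalence class of $\big(e, j_x^1\overline{s}, [e,\widehat{\xi}]_G\big)$ where only $e$ is moved, so the quotient is $\big(E/G_\mu\big)\times_{E/G}J^1\overline{\pi}\times_{E/G}\Lin(\overline{\pi}^*TM,\widetilde{\lag})$. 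The only point requiring a word of justification is that $E/G_\mu\to E/G$ is a well-defined bundle (it is, since $G_\mu$ is a closed subgroup acting freely and properly, being a subgroup of $G$ which does so) and that the fibre products are taken over $E/G$ via the natural maps — $E/G_\mu\to E/G$ is the obvious projection, while $J^1\overline\pi$ and $\Lin(\overline\pi^*TM,\widetilde\lag)$ carry their given projections to $E/G$. Reordering the three factors of a fibred product over a common base (which is associative and commutative up to canonical isomorphism) then yields exactly the stated expression $J^1\overline{\pi}\times_{E/G}E/G_\mu\times_{E/G}\Lin\left(\overline{\pi}^*TM,\widetilde{\lag}\right)$.

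I expect the main obstacle to be purely bookkeeping: one must check that the $G_\mu$-action on $J^1\pi$ transported through $\Upsilon_\omega$ really is the restriction of the displayed $G$-action (this is where $\mathrm{Ad}^*$-equivariance of $\mu$ enters only implicitly — $G_\mu$ is defined as a subgroup, so nothing new is needed), and that passing to quotients commutes with the fibred products. The latter is a standard fact for free proper actions that fix the base, but it is worth noting that $\Upsilon_\omega$ is $G$-equivariant as a bundle map over $E$ in the source and over $E/G$ in the target, so it descends to a diffeomorphism of quotients; combined with the fact that $\big(E\times_{E/G}F\big)/G_\mu\cong (E/G_\mu)\times_{E/G}F$ whenever $G_\mu$ acts trivially on $F$, the identification follows. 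No genuinely hard analysis is involved; the content is organizational.
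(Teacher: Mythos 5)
Your proposal is correct and follows essentially the same route as the paper: the corollary is obtained exactly from the observation (stated immediately before it) that under $\Upsilon_\omega$ the $G$-action, and hence the $G_\mu$-action, moves only the $E$-factor of $\left(p_G^E\right)^*\left(J^1\overline{\pi}\times_{E/G}\Lin\left(\overline{\pi}^*TM,\widetilde{\lag}\right)\right)$, so passing to the $G_\mu$-quotient simply replaces $E$ by $E/G_\mu$ in the fibred product. Your extra bookkeeping (equivariance of $\omega$ making the $\widetilde{\lag}$-component invariant, and the quotient commuting with the fibred product) is precisely what the paper leaves implicit.
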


\begin{remark} The assumption of a global Lie group action on the configuration bundle adopted in this paper is standard in the literature of Lagrangian field theory reduction, see e.g.~\cite{Castrillon1,Castrillon2,GbRat,LagPoincare_JGP}. A recent approach to Routh reduction in the mechanical case~\cite{Geometryrouth} assumes only  the weaker notion of an infinitesimal symmetry, encoded by a vector field $X$ on the manifold, to perform reduction. It should be noted, however, that~\cite{Geometryrouth} deals only with the case of a single vector field (corresponding to the case of a cyclic variable). 
\end{remark}

\section{Routh reduction for Lagrangian field theories}\label{sec:RouthLFT}

We will describe an approach to Routh reduction for a LFT $(\pi\colon E\rightarrow M,L\eta)$ which is similar to the one discussed in~\cite{Capriotti_Routh} for the mechanical case. First, we need a definition of the Routhian in field theory.

\subsection{The Routhian in field theory}

We will consider solutions which have a prescribed value of the momentum map $\widehat{\mu}$. As noted above, $\widehat{\mu}$ is assumed to be closed, and this implies (Lemma~\ref{lem:momentumisbasic}) that it is of the form $\widehat{\mu}=\pi_{L\eta}^*\,\mu$ for some closed $\mu\in\Omega^{m-1}_1(J^1\pi,\g^*)$.  Let us denote
\[
W_{L\eta}^\mu=J^{-1}\left(\widehat{\mu}\right)
\]
the level set of $\widehat{\mu}$. We will denote by $\lambda^\mu_{L\eta}$ its canonical $m$-form (the pullback of $\lambda_{L\eta}$ by the inclusion) and simply write $\pi_{L\eta}$ for the projection onto $M$. 

Recall that we have an splitting of the contact bundle induced by the connection~\eqref{eq:splittingcontact_LFT}. The following lemma shows that the momentum fixes the vertical component of the forms in the contact structure. It will be convenient to use the following notation: 
\[
\varepsilon=(-1)^{\dim M -1}. 
\]
\begin{lemma}\label{lem:momentum-set} Any simple element $\rho\in W_{L\eta}^\mu$ such that $\pi_{L\eta}(\rho)=j_x^1s$ can be written as
\begin{multline*}
    \rho=L\left(j_x^1s\right)\eta+\widehat{\alpha}_{\left[s\left(x\right)\right]_G}\circ T_{s\left(x\right)}p_G^E\circ(T_{j_x^1s}\pi_{10}-T_xs\circ T_{j_x^1s}\pi_1)\wedge\beta+\\
    +\varepsilon\, \langle\mu \stackrel{\wedge}{,}\left.\omega\right|_{s\left(x\right)}\circ(T_{j_x^1s}\pi_{10}-T_xs\circ T_{j_x^1s}\pi_1)\rangle
\end{multline*}
for some $\widehat{\alpha}_{[s(x)]_G}\in T_{[s(x)]_G}^*(E/G)$ and $\beta\in(\Lambda^{m-1}_1J^1\pi)_{j_x^1s}$.
\end{lemma}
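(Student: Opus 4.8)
The plan is to unwind the definition of the momentum level set $W_{L\eta}^\mu = J^{-1}(\widehat{\mu})$ using the explicit formula for $J$ from Lemma~\ref{lem:mmap} and the splitting~\eqref{eq:splittingcontact_LFT} of the contact bundle. Starting from a \emph{simple} element $\rho \in (W_{L\eta})_{j_x^1s}$, write it via~\eqref{eq:affinesubbundle} and~\eqref{eq:ContactFields} as
\[
\rho = L(j_x^1s)\,\eta + (\alpha \circ \theta|_{j_x^1s}) \wedge \beta,
\]
with $\alpha \in T^*_{s(x)}E$ and $\beta \in (\Lambda^{m-1}_1 J^1\pi)_{j_x^1s}$, where $\theta|_{j_x^1s} = T_{j_x^1s}\pi_{10} - T_xs \circ T_{j_x^1s}\pi_1$ as in~\eqref{eq:contactstructuredefinition}. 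Now apply the splitting $T^*E = (p^E_G)^*(T^*(E/G)) \oplus (E \times \lag^*)$ to decompose $\alpha_{s(x)} = \widehat{\alpha}_{[s(x)]_G} \circ T_{s(x)}p^E_G + \langle\sigma_0, \omega(\cdot)\rangle$ for some $\widehat{\alpha}_{[s(x)]_G} \in T^*_{[s(x)]_G}(E/G)$ and $\sigma_0 \in \lag^*$. This already puts $\rho$ into the shape claimed, except that the coefficient of the vertical ($\lag^*$) piece is not yet pinned down to be $\varepsilon\mu$.

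The key step is then to compute $J(\rho)$ and impose $J(\rho) = \widehat{\mu}|_\rho = \mu_{j_x^1s}$ (recall $\widehat{\mu} = \pi_{L\eta}^*\mu$). By Lemma~\ref{lem:mmap}, $\langle J(\rho),\xi\rangle = \xi_{W_{L\eta}}(\rho) \lrcorner \lambda_{L\eta}|_\rho$, and by the definition of $\lambda_{L\eta}$ this equals $\rho(T\pi_{L\eta}(\xi_{W_{L\eta}}), \dots) = \rho(\xi_{J^1\pi}, \dots)$ contracted appropriately, since $\pi_{L\eta}$ is equivariant and intertwines $\xi_{W_{L\eta}}$ with $\xi_{J^1\pi}$. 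The horizontal summand $\widehat{\alpha}_{[s(x)]_G} \circ T_{s(x)}p^E_G \circ \theta$ kills $\xi_{J^1\pi}$ (infinitesimal generators are $p^E_G$-vertical, hence annihilated after composing with $Tp^E_G$), so only the vertical summand contributes. Contracting $\xi_{J^1\pi}$ into $\langle\sigma_0, \omega \circ \theta\rangle \wedge \beta$ and using $\omega(\xi_E) = \xi$ together with the horizontality of $\beta$ — and tracking the sign $\varepsilon = (-1)^{\dim M - 1}$ that arises because $\xi_{J^1\pi}$ must be moved past the $(m-1)$-form $\beta$ (or, equivalently, because of the placement of the contraction in the wedge) — one finds $\langle J(\rho),\xi\rangle = \varepsilon\,\langle\sigma_0,\xi\rangle\,\beta$ up to identifying the form part. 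Matching this with $\langle\mu,\xi\rangle$ forces $\varepsilon\,\sigma_0 \otimes \beta$ (suitably interpreted as the $\lag^*$-valued $(m-1)$-form) to equal $\mu$, i.e. the vertical component of $\rho$ is exactly $\varepsilon\,\langle\mu \stackrel{\wedge}{,} \omega \circ \theta\rangle$, which is the asserted formula.

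The main obstacle I expect is bookkeeping the wedge/contraction signs correctly so that the factor $\varepsilon = (-1)^{\dim M - 1}$ comes out as stated, and being careful that $J(\rho)$ is an $(m-1)$-\emph{form} valued in $\lag^*$ rather than a scalar — so the "matching" step is an equality of $\lag^*$-valued $(m-1)$-forms on $W_{L\eta}$ that must be checked to descend correctly through $\pi_{L\eta}^*$ (this is where Lemma~\ref{lem:momentumisbasic} is used, guaranteeing $\widehat{\mu}$ is $\pi_{L\eta}$-basic so the comparison makes sense fibrewise). A secondary point is to note, as the excerpt says, that it suffices to treat simple elements: for a general $\rho = \sum_i (\alpha_i \circ \theta) \wedge \beta_i$ one splits each $\alpha_i$ and collects the vertical parts, and linearity of $J$ together with the level-set condition again forces the total vertical contribution to be $\varepsilon\langle\mu \stackrel{\wedge}{,} \omega\circ\theta\rangle$; the horizontal parts recombine into a single $\widehat{\alpha}_{[s(x)]_G}\circ T_{s(x)}p^E_G \circ \theta \wedge \beta$-type term only after possibly enlarging to a sum, but the statement as phrased for simple elements needs no such care.
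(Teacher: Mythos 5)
Your proposal is correct and follows essentially the same route as the paper: decompose the contact part of a simple element using the connection-induced splitting of $T^*E$ (equivalently, of $I^m_{{\rm con},2}$), note that contracting $\xi_{W_{L\eta}}$ into $\lambda_{L\eta}$ reduces to contracting $\xi_{J^1\pi}$ into $\rho$, that the horizontal summand dies because $\theta(\xi_{J^1\pi})=\xi_E$ is $p^E_G$-vertical while $\eta$ and $\beta$ kill $\pi_1$-vertical vectors, and that $\omega(\xi_E)=\xi$ then pins the $\lag^*$-component to $\varepsilon\mu$ via the level-set condition. The only (immaterial) deviation is where the factor $\varepsilon$ appears in your intermediate formula --- with your ordering $\langle\sigma_0,\omega\circ\theta\rangle\wedge\beta$ the contraction itself produces no sign, and $\varepsilon$ enters only when rewriting in the paper's $\langle\,\cdot\stackrel{\wedge}{,}\cdot\,\rangle$ convention --- a bookkeeping point you explicitly flag and which does not affect the conclusion.
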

\begin{proof}
The form $\rho\in W_{L\eta}^\mu\subset\Lambda^m_2 J^1\pi$ can be written as
\begin{multline*}
    \rho=L\left(j_x^1s\right)\eta+\widehat{\alpha}_{\left[s\left(x\right)\right]_G}\circ T_{s\left(x\right)}p_G^E\circ(T_{j_x^1s}\pi_{10}-T_xs\circ T_{j_x^1s}\pi_1)\wedge\beta+\\
    +\langle\sigma \stackrel{\wedge}{,}\left.\omega\right|_{s\left(x\right)}\circ(T_{j_x^1s}\pi_{10}-T_xs\circ T_{j_x^1s}\pi_1)\rangle,
\end{multline*}
with all the terms as in the statement of the lemma, and $\sigma\in(\Lambda^{m-1}_1J^1\pi\otimes\lag^*)_{j_x^1s}$. If $\xi\in\lag$, the infinitesimal generator $\xi_{W_{L\eta}}$ satisfies $T\pi_{L\eta}(\xi_{W_{L\eta}})=\xi_{J^1\pi}$ (the generator of the prolonged action), which is $\pi_1$-vertical. In particular $\xi_{J^1\pi}(j^1_xs)\lrcorner\eta_{j^1_xs}=0$, and $\xi_{J^1\pi}(j^1_xs)\lrcorner \left.\beta\right|_{j^1_xs}=0$. Moreover, $\xi_{J^1\pi}$  satisfies $T_{j_x^1s}\pi_{10}\big((\xi_{J^1\pi})(j_x^1s)\big)=\xi_{E}(s(x))$, and therefore $(T_{j_x^1s}\pi_{10}-T_xs\circ T_{j_x^1s}\pi_1)\big(\xi_{J^1\pi}(j^1_xs)\big)=\xi_{E}(s(x))$, which is $p^E_G$ vertical. Finally, from the definition of the connection form, we have $\omega(\xi_E)=\xi$. All together, this means that
\begin{align*}
\langle J(\rho),\xi \rangle=\pi_{L\eta}^*(\xi_{J^1\pi}\lrcorner\rho)&=\langle\pi_{L\eta}^*\sigma,\varepsilon\, \xi_{J^1\pi}\lrcorner \omega\circ(T\pi_{10}-Ts\circ T_{j_x^1s}\pi_1) \rangle\\
&=\varepsilon\,\langle \pi_{L\eta}^*\sigma,\xi \rangle.
\end{align*}
Imposing $J(\rho)=\widehat{\mu}$ and writing $\widehat{\mu}=\pi_{L\eta}^*\,\mu$, the claim follows.
\end{proof}
For arbitrary elements $\rho\in W^\mu_{L\eta}$, we get a similar result, but with the second term of the form
\[
\sum_i  \widehat{\alpha}_{i}\circ T_{s\left(x\right)}p_G^E\circ(T_{j_x^1s}\pi_{10}-T_xs\circ T_{j_x^1s}\pi_1)\wedge\beta_i.
\]
\begin{remark} Note that Lemma~\ref{lem:momentum-set} implies that $W^\mu_{L\eta}\to M$ is an affine bundle.
\end{remark}

This suggests us to define a \emph{Routhian density} $\mathcal{R_\mu}\in\Omega^m_1(J^1\pi)$ as follows:
\[
\mathcal{R_\mu}\left(j_x^1s\right)=L\left(j_x^1s\right)\eta_x - \varepsilon \big\langle\left.\mu\right|_{j_x^1s}\stackrel{\wedge}{,}\left.\omega\right|_{s\left(x\right)}\circ T_xs\circ T_{j_x^1s}\pi_1\big\rangle,\qquad j_x^1s\in J^1\pi.
\]

\smallskip
\begin{proposition} The form $\mathcal{R_\mu}$ is a Lagrangian density. It is $G_\mu$-invariant.
\end{proposition}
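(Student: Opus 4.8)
The plan is to verify the two assertions separately. For the first one --- that $\mathcal{R}_\mu$ is a Lagrangian density --- it suffices by Definition~\ref{def:LFT} to check that $\mathcal{R}_\mu$ is of the form $R\eta$ for a smooth function $R$ on $J^1\pi$, equivalently that $\mathcal{R}_\mu\in\Omega^m_1(J^1\pi)$ is $\pi_1$-semibasic and smooth. The term $L(j^1_xs)\eta_x$ clearly has this property, so the work is with the correction term $-\varepsilon\,\langle\mu\stackrel{\wedge}{,}\omega\circ Ts\circ T\pi_1\rangle$. Here I would argue as follows: $\mu=\pi_1^*\mu_M$ is $1$-horizontal (Lemma~\ref{lem:momentumisbasic}), so it is already $\pi_1$-semibasic; and the $\lag$-valued $1$-form $\omega|_{s(x)}\circ T_xs\circ T_{j^1_xs}\pi_1$ is, by construction, the pullback along $\pi_1$ of a $1$-form on $M$ (note $T_xs\circ T_{j^1_xs}\pi_1$ has rank at most $m$ and factors through $T_xM$). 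Contracting a $1$-horizontal $(m-1)$-form with a semibasic $1$-form via $\langle\cdot\stackrel{\wedge}{,}\cdot\rangle$ produces an $m$-form that is again $\pi_1$-semibasic; in fact, since both factors are pulled back from $M$, the wedge is an $m$-form on the $m$-manifold $M$ pulled back to $J^1\pi$, hence a function times $\eta$. Smoothness is automatic since $\omega$, $s\mapsto T_xs$ (the canonical section structure of $J^1\pi$), and $\mu$ are all smooth. This identifies $\mathcal{R}_\mu$ with $L\eta$ plus a semibasic correction, hence $\mathcal{R}_\mu=R\eta$ with $R$ smooth, as required. In coordinates one can double-check: writing $\omega=\omega^\alpha_a(u)du^a\otimes e_\alpha$ (no dependence on $u^a_i$ since $\omega$ is pulled back from $E$) and $\mu=\mu^M_{\alpha}{}^{i}(x)\,\eta_i\otimes e^\alpha$, the correction term becomes $-\varepsilon\,\mu^M_\alpha{}^i\,\omega^\alpha_a\,u^a_i\,\eta$, manifestly of the form (function on $J^1\pi$)$\,\cdot\,\eta$.

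For the $G_\mu$-invariance, the strategy is to show that each of the two summands of $\mathcal{R}_\mu$ is invariant under the prolonged action of $G_\mu$ on $J^1\pi$. The first summand $L\eta$ is $G$-invariant by hypothesis (we assumed $(j^1\phi_g)^*L\eta=L\eta$), a fortiori $G_\mu$-invariant. For the second summand, I would compute $(j^1\phi_g)^*$ of $\langle\mu\stackrel{\wedge}{,}\omega\circ Ts\circ T\pi_1\rangle$ for $g\in G_\mu$ and show it is unchanged. Two ingredients enter: the equivariance of the connection, $\phi_g^*\omega=Ad_g\circ\omega$ (equivalently $\omega\circ T\phi_g=Ad_g\circ\omega$), and the defining property of $G_\mu$, namely $g\mu=\mu$ where $g$ acts on $\Omega^{m-1}(J^1\pi,\lag^*)$ via the cotangent lift of $j^1\phi_g$ on forms and via $Ad^*_{g^{-1}}$ on $\lag^*$. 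Under the pairing $\langle\cdot\stackrel{\wedge}{,}\cdot\rangle$, the $Ad^*_{g^{-1}}$ twisting $\mu$ and the $Ad_g$ twisting $\omega$ cancel, as $\langle Ad^*_{g^{-1}}\nu,Ad_g\xi\rangle=\langle\nu,\xi\rangle$; and the pullback of forms is compatible with the contraction because $j^1\phi_g$ commutes with $\pi_{10}$, $\pi_1$ and with the canonical jet structure $s\mapsto T_xs$ (the latter because prolongation is natural: $j^1(\phi_g\circ s)=j^1\phi_g\circ j^1 s$). Putting these together, $(j^1\phi_g)^*\langle\mu\stackrel{\wedge}{,}\omega\circ Ts\circ T\pi_1\rangle=\langle g\mu\stackrel{\wedge}{,}\omega\circ Ts\circ T\pi_1\rangle=\langle\mu\stackrel{\wedge}{,}\omega\circ Ts\circ T\pi_1\rangle$ for $g\in G_\mu$, which is the claim.

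I expect the main obstacle to be the bookkeeping in the invariance computation --- specifically, chasing how the cotangent lift on the $(m-1)$-form $\mu$ interacts with the non-tensorial-looking object $\omega\circ T_xs\circ T_{j^1_xs}\pi_1$ under pullback by $j^1\phi_g$. The cleanest way around this is to avoid the ``section-valued'' description entirely and instead use the intrinsic contact-form identity~\eqref{eq:contactstructuredefinition}: since $T_{j^1_xs}\pi_{10}-T_xs\circ T_{j^1_xs}\pi_1=\theta|_{j^1_xs}$ is the canonical $V\pi$-valued form, the Routhian correction can be rewritten (modulo the $\pi_{10}$-horizontal piece, which drops out after contracting with the $1$-horizontal $\mu$ since $\beta$-type terms vanish against it --- this is exactly the mechanism in the proof of Lemma~\ref{lem:momentum-set}) in terms of $\mu$ and $\omega\circ\theta$. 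Both $\theta$ and $\mu$ have transparent equivariance: $\theta$ is $G$-equivariant as a $V\pi$-valued form by naturality of the jet construction, and $\omega\circ\theta$ then transforms with $Ad_g$, cancelling against $g\mu$. Alternatively --- and this is probably the fastest route for a written proof --- one simply invokes the coordinate expression $-\varepsilon\,\mu^M_\alpha{}^i\,\omega^\alpha_a\,u^a_i\,\eta$ derived above and checks directly that it is annihilated by the infinitesimal generators $\xi_{J^1\pi}$ for $\xi\in\lag_\mu$, using $\mathcal{L}_{\xi_E}\omega=0$ for $\xi\in\lag$ (equivariance at the Lie-algebra level forces $[\xi,\,\cdot\,]$ terms that pair to zero against $\mu$ precisely when $\xi\in\lag_\mu$, i.e. when $\mathrm{ad}^*_\xi\mu=0$). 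I would present the coordinate version as the main line and mention the intrinsic argument as a remark.
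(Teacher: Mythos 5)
Your proof is correct and is essentially the paper's (much terser) argument written out in full: the correction term is $\pi_1$-semibasic because both factors annihilate $V\pi_1$, and $G_\mu$-invariance follows from the invariance of $L\eta$, the equivariance of $\omega$, the naturality of prolongation, and $g\mu=\mu$ for $g\in G_\mu$, the $Ad_g$ and $Ad^*_{g^{-1}}$ twists cancelling in the pairing $\langle\cdot\stackrel{\wedge}{,}\cdot\rangle$. Two small slips in your coordinate asides (on which your main argument does not rely): a principal connection on $E\to E/G$ generally also has $dx^i$-components (compare the paper's example, where $\omega$ contains $dt$ and $dx$ terms), so the correction term is $-\varepsilon\,\mu_\alpha{}^{i}\big(\omega^\alpha_a u^a_i+\omega^\alpha_i\big)\eta$ rather than $-\varepsilon\,\mu_\alpha{}^{i}\omega^\alpha_a u^a_i\,\eta$, and infinitesimal equivariance gives $\cL_{\xi_E}\omega=-\mathrm{ad}_\xi\circ\omega$, not $\cL_{\xi_E}\omega=0$ (your parenthetical already hints at the needed $\mathrm{ad}^*_\xi\mu=0$ cancellation, which is the correct mechanism).
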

\begin{proof} Recall that a Lagrangian density is a $\pi_1$-semibasic form on $J^1\pi$. It suffices to check that the second term is $\pi_1$-semibasic, but this is clear since it annihilates $V\pi_1$. For the $G_\mu$-invariance, note that $\mu$ is invariant by definition of $G_\mu$.
\end{proof}
We can naturally define the \emph{Routhian} to be the function $R_\mu\in C^\infty(J^1\pi)$ such that $\mathcal{R_\mu}=R_\mu\eta$. We will see later that, just like in the mechanical case, this function plays the role of the Lagrangian for the reduced system.

We write $\overline{p}: E/{G_\mu}\times{\rm Lin}{\left(\overline{\pi}^*TM,\widetilde{\lag}\right)}\to E/G$ for the obvious projection. In particular, one can consider the map:
\begin{align*}
  q:J^1\left(\overline{\pi}\circ\overline{p}\right)&\longto J^1\overline{\pi}\times E/{G_\mu}\times {\rm Lin}{\left(\overline{\pi}^*TM,\widetilde{\g}\right)},\\
  j_x^1\sigma&\longmapsto\left(j_x^1\left(\overline{p}\circ\sigma\right),\sigma(x)\right).
\end{align*}
Using the connection $\omega$, we have maps fitting in the following diagram:
\begin{equation*}
  \begin{tikzcd}[ampersand replacement=\&, column sep=1.2cm, row
    sep=.9cm]
    {J^1\pi} \arrow[swap]{d}{p_{G_\mu}^{J^1\pi}} \arrow{r}{f_\omega} \&
    {E/G_\mu\times{\rm Lin}{\left(\overline{\pi}^*TM,\widetilde{\lag}\right)}} \&
    {J^1\left(\overline{\pi}\circ\overline{p}\right)}
    \arrow[swap]{l}{\left(\overline{\pi}\circ\overline{p}\right)_{10}}
    \arrow{d}{q}
    \\
    {J^1\pi/G_\mu}
    \arrow[swap]{rr}{g_\omega}
    \&
    \&
    {J^1\overline{\pi}\times E/{G_\mu}\times{\rm Lin}{\left(\overline{\pi}^*TM,\widetilde{\lag}\right)}}
  \end{tikzcd}
\end{equation*}
The definitions are as follows:
\begin{align*}
f_\omega:J^1\pi& \longto E/{G_\mu}\times{\rm Lin}{\left(\overline{\pi}^*TM,\widetilde{\g}\right)},\\
j_x^1s&\longmapsto\big(\left[s\left(x\right)\right]_{G_\mu},\left[s\left(x\right),\omega\circ T_xs\right]_G\big).\\
 g_\omega:J^1\pi/{G_\mu}& \longto J^1\overline{\pi}\times E/{G_\mu}\times{\rm Lin}{\left(\overline{\pi}^*TM,\widetilde{\g}\right)},\\
\left[j_x^1s\right]_{G_\mu} &\longmapsto\big(j_x^1\left(p_G^E\circ s\right),\left[s\left(x\right)\right]_{G_\mu},\left[s\left(x\right),\omega\circ T_xs\right]_G\big).
\end{align*}
The map $g_\omega$ is the identification from Corollary~\ref{cor:identification}. Since the Routhian density $\mathcal{R}_\mu$ is invariant under $G_\mu$, it defines a reduced density on $J^1\pi/G_\mu$ which, under the identification $g_\omega$ can be seen as a density on $J^1\overline{\pi}\times E/G_\mu\times\mathop{\rm Lin}{\left(\overline{\pi}^*TM,\widetilde{\g}\right)}$. We will denote it by $\overline{\mathcal{R}}_\mu$:
\[
 \big(g_\omega\circ p_{G_\mu}^{J^1\pi}\big)^*\overline{\mathcal{R}}_\mu=\mathcal{R}_\mu,\qquad \overline{\mathcal{R}}_\mu \in \Omega_1^m\big(J^1\overline{\pi}\times E/G_\mu\times\mathop{\rm Lin}{\left(\overline{\pi}^*TM,\widetilde{\g}\right)}\big).
\]
Likewise, the Routhian $R_\mu$ defines a reduced function $\overline{R}_\mu$. Note that $\overline{\mathcal{R}}_\mu=\overline{R}_\mu\eta$. We will also call $\overline{\mathcal{R}}_\mu$ the Routhian density and $\overline{R}_\mu$ the Routhian.

\subsection{Some technical results}

This section contains some technical lemmas which will be used later to obtain the main results on the Routh reduction of Lagrangian field theories. We have shown that the extremals of a LFT with prescribed momentum $\mu$ are encoded in the affine subbundle $W^\mu_{L\eta}$. To relate this affine subbundle with the affine subbundle obtained from the reduced Lagrangian density $q^*\overline{\mathcal{R}}_\mu$ on $J^1(\overline{\pi}\circ\overline{p})$ we will make use of the following pullback bundle:
\[
F_\omega=f_\omega^*\left(J^1\left(\overline{\pi}\circ\overline{p}\right)\right).
\]
It fits into the following commutative diagram:
\begin{equation}\label{dia:Pullback_bundle}
  \begin{tikzcd}[ampersand replacement=\&, column sep=.7cm, row
    sep=.9cm]
    {{F_\omega}=
      f_\omega^*\left(J^1\left(\overline{\pi}\circ\overline{p}\right)\right)}
    \arrow[dashed]{r}{\text{pr}_2^\omega}
    \arrow[dashed,swap]{d}{\text{pr}_1^\omega}
    \&
    {J^1\left(\overline{\pi}\circ\overline{p}\right)}
    \arrow{d}{\left(\overline{\pi}\circ\overline{p}\right)_{10}}
    \arrow[bend left=60, end anchor=north east]{dd}{q}
    \\
    {J^1\pi}
    \arrow{r}{f_\omega}
    \arrow[swap]{d}{p_{G_\mu}^{J^1\pi}}
    \&
    E/G_\mu\times\mathop{\rm Lin}{\left(\overline{\pi}^*TM,\widetilde{\g}\right)}
    \\
    J^1\pi/G_\mu
    \arrow[swap]{r}{g_\omega}
    \&
    J^1\overline{\pi}\times E/G_\mu\times\mathop{\rm Lin}{\left(\overline{\pi}^*TM,\widetilde{\g}\right)}
    \arrow[swap]{u}{\text{pr}_{23}}
  \end{tikzcd}
\end{equation}
The maps $\text{pr}_1^\omega$ and $\text{pr}_2^\omega$ are the canonical projections of the pullback bundle $F_\omega$ onto $J^1\pi$ and $J^1(\overline{\pi}\circ\overline{p})$, respectively. We consider the affine subbundles:
\begin{align*}
  W_{L\eta}^\mu&=J^{-1}\left(\mu\right)\subset\Lambda^m_2J^1\pi,\\
  W^0_{q^*\overline{\cal R}_\mu}&=q^*{\overline{\cal R}_\mu}+\widehat{I}^m_{\text{con},2}\subset\Lambda^m_2J^1(\overline{\pi}\circ\overline{p}),
\end{align*}
where
\begin{align*}
  \left.\widehat{I}^m_{\text{con},2}\right|_{j_x^1\sigma}=\Big\{[\widehat{\alpha}\circ&(T_{j_x^1\left(\overline{p}\circ\sigma\right)} \overline{\pi}_{10}-T_x\left(\overline{p}\circ\sigma\right)\circ T_{j_x^1\left(\overline{p}\circ\sigma\right)}\overline{\pi}_1)\circ T_{j_x^1\sigma}j^1\overline{p}]\wedge\beta:\\
 & \widehat{\alpha}\in T^*_{\overline{p}\left(\sigma\left(x\right)\right)}\left(E/G\right), \beta\in\left.\left(\Lambda^{m-1}_1J^1\left(\overline{\pi}\circ\overline{p}\right)\right)\right|_{j_x^1\sigma}\Big\}\subset \Lambda^m_2J^1(\overline{\pi}\circ\overline{p})
\end{align*}
is essentially the pullback of the contact subbundle of $J^1\overline{\pi}$ to $J^1(\overline{\pi}\circ\overline{p})$. Note that $\widehat{I}^m_{\text{con},2}$ is a subbundle of the contact subbundle of $J^1(\overline{\pi}\circ\overline{p})$. Thus we have a inclusion 
\[
 W^0_{q^*\overline{\cal R}_\mu}\subset W_{q^*\overline{\cal R}_\mu},
\]
where $W_{q^*\overline{\cal R}_\mu}$ is the affine translation  of the contact subbundle of $J^1(\overline{\pi}\circ\overline{p})$ by $q^*\overline{\cal R}_\mu$ (the notation is consistent with~\eqref{eq:affinesubbundle}). Finally, we construct the following subbundles of $\Lambda^m_2\left(F_\omega\right)$: 
\begin{align*}
  \left(\text{pr}_1^\omega\right)^*(W_{L\eta}^\mu)\big|_{\rho}& =\left\{\alpha(T_{\rho}\text{pr}_1^\omega(\cdot),\dots,T_{\rho}\text{pr}_1^\omega(\cdot)) \in\left.\Lambda^m_2\left(F_\omega\right)\right|_\rho:\alpha\in\left.W_{L\eta}^\mu\right|_{j_x^1s}\right\},\\
  (\text{pr}_2^\omega)^*(W^0_{q^*\overline{\cal R}_\mu})\big|_{\rho}&= \Big\{\kappa(T_{\rho}\text{pr}_2^\omega(\cdot),\dots,T_{\rho}\text{pr}_2^\omega(\cdot)) \in\left.\Lambda^m_2\left(F_\omega\right)\right|_\rho:\kappa\in\left.W^0_{q^*\overline{\mathcal{R}}_\mu}\right|_{j_x^1\sigma}\Big\},
\end{align*}
at a point $\rho=(j_x^1s,j_x^1\sigma)\in F_\omega$ which is such that $\sigma(x)=([s(x)]_{G_\mu},[s(x),\omega\circ T_xs]_G)$. Note that these bundles are obtained via pullback -using the projections  $\text{pr}_1^\omega$, $\text{pr}_2^\omega$- of the corresponding affine subbundles. We will write
\begin{align*}
  \Pi_{L\eta}\colon \left(\text{pr}_1^\omega\right)^*W_{L\eta}^\mu&\longto W_{L\eta}^\mu,\; (\text{pr}_1^\omega)^*\alpha\mapsto\alpha,\\
  \Pi_{q^*\overline{\mathcal{R}}_\mu}\colon \left(\text{pr}_2^\omega\right)^*W^0_{q^*\overline{\mathcal{R}}_\mu}&\longto W^0_{q^*\overline{\mathcal{R}}_\mu},\; (\text{pr}_2^\omega)^*\kappa \mapsto\kappa,
\end{align*}
for the projections. The following diagram summarizes the situation:
\begin{equation}\label{dia:aux1}
\begin{tikzcd}[column sep= 1.2cm, row sep = 1cm]
\left(\text{pr}_1^\omega\right)^*W_{L\eta}^\mu \arrow[dr,swap,"\Pi_{L\eta}"]\arrow[r,hook,"i_{L\eta}^\omega"] & \Lambda^m_2 F_\omega\arrow[r,"\psi"] & F_\omega\arrow[d,"\text{pr}_1^\omega"]  \\
& W_{L\eta}^\mu \arrow[r,swap,"\pi_{L\eta}"] & J^1\pi
\end{tikzcd}
\end{equation}
where $\psi\colon \Lambda^m_2 F_\omega\to F_\omega$ is the canonical projection and $i_{L\eta}^\omega$ is the natural inclusion. There is an equivalent diagram for $({\rm pr}_2^\omega)^*W^0_{q^*\overline{\mathcal{R}}_\mu}$.
\begin{lemma}\label{lem:tech1} Let $\lambda_{L\eta}'\in\Omega^m\big(({\rm pr}_1^\omega)^*W_{L\eta}^\mu\big)$ be the pullback of the canonical $m$-form $\lambda'$ on $\Lambda^m_2 F_\omega$ to $\left({\rm pr}_1^\omega\right)^*W_{L\eta}^\mu$. Then
\[
\Pi_{L\eta}^*\lambda_{L\eta}=\lambda_{L\eta}'.
\]
\end{lemma}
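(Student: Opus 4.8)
The plan is to unwind all the canonical-form definitions and check the identity pointwise. Fix a point $\rho=(j_x^1s,j_x^1\sigma)\in F_\omega$ lying in $({\rm pr}_1^\omega)^*W_{L\eta}^\mu$, say $\rho$ carries the $m$-covector $(\text{pr}_1^\omega)^*\alpha\in\Lambda^m_2 F_\omega$ with $\alpha\in W_{L\eta}^\mu|_{j_x^1s}$, so that $\Pi_{L\eta}\big((\text{pr}_1^\omega)^*\alpha\big)=\alpha$. I would then evaluate both sides on tangent vectors $v_1,\dots,v_m\in T_{(\text{pr}_1^\omega)^*\alpha}\big(({\rm pr}_1^\omega)^*W_{L\eta}^\mu\big)$. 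On the one hand, by the very definition of the canonical $m$-form $\lambda'$ on $\Lambda^m_2 F_\omega$ (same recipe as $\lambda_{L\eta}$ on $W_{L\eta}$, using the projection $\psi$ to $F_\omega$), and since $\lambda_{L\eta}'$ is its pullback along the inclusion $i_{L\eta}^\omega$, we get
\[
\lambda_{L\eta}'\big((\text{pr}_1^\omega)^*\alpha\big)(v_1,\dots,v_m)=\big((\text{pr}_1^\omega)^*\alpha\big)\big(T(\psi\circ i_{L\eta}^\omega)(v_1),\dots,T(\psi\circ i_{L\eta}^\omega)(v_m)\big).
\]
On the other hand, $\Pi_{L\eta}^*\lambda_{L\eta}$ evaluated on the $v_i$ equals $\lambda_{L\eta}|_\alpha\big(T\Pi_{L\eta}(v_1),\dots,T\Pi_{L\eta}(v_m)\big)$, and by the definition of $\lambda_{L\eta}$ this is $\alpha\big(T\pi_{L\eta}(T\Pi_{L\eta}(v_1)),\dots\big)=\alpha\big(T(\pi_{L\eta}\circ\Pi_{L\eta})(v_1),\dots\big)$.

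The crux is then a commutativity statement at the level of maps: from diagram~\eqref{dia:aux1} one has $\text{pr}_1^\omega\circ\psi\circ i_{L\eta}^\omega=\pi_{L\eta}\circ\Pi_{L\eta}$ as maps $({\rm pr}_1^\omega)^*W_{L\eta}^\mu\to J^1\pi$. Combined with the way the pulled-back covector $(\text{pr}_1^\omega)^*\alpha$ acts --- namely $(\text{pr}_1^\omega)^*\alpha\,(w_1,\dots,w_m)=\alpha\big(T\text{pr}_1^\omega(w_1),\dots,T\text{pr}_1^\omega(w_m)\big)$ for $w_i\in T_\rho F_\omega$ --- one gets, applying this with $w_i=T(\psi\circ i_{L\eta}^\omega)(v_i)$,
\[
\big((\text{pr}_1^\omega)^*\alpha\big)\big(T(\psi\circ i_{L\eta}^\omega)(v_i)\big)_i=\alpha\big(T\text{pr}_1^\omega\circ T(\psi\circ i_{L\eta}^\omega)(v_i)\big)_i=\alpha\big(T(\pi_{L\eta}\circ\Pi_{L\eta})(v_i)\big)_i,
\]
which is exactly the expression we obtained for $\Pi_{L\eta}^*\lambda_{L\eta}$. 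Hence the two $m$-forms agree on every $m$-tuple of tangent vectors at every point, which is the claim.

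I expect the only genuine obstacle to be bookkeeping: making sure the identity $\text{pr}_1^\omega\circ\psi\circ i_{L\eta}^\omega=\pi_{L\eta}\circ\Pi_{L\eta}$ is read off correctly from the pullback-bundle construction (it is essentially the statement that $\psi$ restricted to $({\rm pr}_1^\omega)^*W_{L\eta}^\mu$ is the bundle map covering $\text{pr}_1^\omega$ whose fibrewise effect is $\Pi_{L\eta}$ followed by $\pi_{L\eta}$), together with keeping the several ``pullback of a covector'' and ``pullback of a form'' operations notationally straight. No analytic input is needed; everything is linear algebra on each fiber plus the chain rule, exactly as in the computation of $\lambda_{L\eta}$ in coordinates after Proposition~\ref{prop:FieldTheoryEqsWL}. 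If a coordinate check is preferred, one can instead note that on $F_\omega$ with coordinates pulled back from $J^1\pi$ (the $(x^i,u^a,u^a_i,p^i_a)$ together with the extra coordinates along $\text{pr}_2^\omega$), $\lambda_{L\eta}'$ has the same coordinate expression $L\eta+p_a^i du^a\wedge\eta_i-p_a^i u_i^a\eta$ as $\lambda_{L\eta}$ because $\text{pr}_1^\omega$ is a projection onto those coordinates, and $\Pi_{L\eta}$ is the identity on them, so the pullbacks coincide.
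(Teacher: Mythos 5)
Your proposal is correct and follows essentially the same route as the paper: both arguments evaluate $\lambda_{L\eta}'$ and $\Pi_{L\eta}^*\lambda_{L\eta}$ pointwise on $m$-tuples of tangent vectors, unwind the definitions of the canonical forms and of the pulled-back covector $({\rm pr}_1^\omega)^*\alpha$, and conclude from the commutativity $\mathrm{pr}_1^\omega\circ\psi\circ i_{L\eta}^\omega=\pi_{L\eta}\circ\Pi_{L\eta}$ read off from Diagram~\eqref{dia:aux1}. The coordinate check you sketch at the end is a fine sanity check but not needed; nothing further is required.
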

\begin{proof} Consider $m$ tangent vectors  vectors  $v_1,\dots,v_n\in T_\alpha({\rm pr}_1^\omega)^*W_{L\eta}^\mu$. To simplify the notation, we will write $i_{L\eta}^\omega(\alpha)=\alpha$, and then vectors at an element $\alpha\in ({\rm pr}_1^\omega)^*W_{L\eta}^\mu$ are also seen as vectors at $\alpha\in\Lambda^m F_\omega$. By definition, we have
\[
\left.\lambda_{L\eta}'\right|_\alpha\big(v_1,\dots,v_m\big)=\alpha\big(T_\alpha \psi (v_1),\dots,T_\alpha \psi (v_m)\big).
\]
But $\alpha$ is of the form $\alpha=({\rm pr}_1^\omega)^*\beta$ for some $\beta\in W^\mu_{L\eta}$ (namely, $\Pi_{L\eta}(\alpha)=\beta$), hence
\[
\left.\lambda_{L\eta}'\right|_\alpha\big(v_1,\dots,v_m\big)=\beta\big(T_{\psi(\alpha)}{\rm pr}_1^\omega\circ T_\alpha \psi (v_1),\dots,T_{\psi(\alpha)}{\rm pr}_1^\omega\circ T_\alpha \psi (v_m)\big).
\]
On the other hand, using that $\Pi_{L\eta}(\alpha)=\beta$, we have
\begin{align*}
\left.(\Pi_{L\eta}^*\lambda_{L\eta})\right|_\alpha\big(v_1,\dots,v_m\big)& =\left.\lambda_{L\eta}\right|_\beta\big(T_\alpha\Pi_{L\eta}(v_1),\dots,T_\alpha\Pi_{L\eta}(v_m)\big) \\
&= \beta(T_{\beta}\pi_{L\eta}\circ T_\alpha\Pi_{L\eta}(v_1),\dots,T_{\beta}\pi_{L\eta}\circ T_\alpha\Pi_{L\eta}(v_m)\big),
\end{align*}
which, looking at Diagram~\eqref{dia:aux1}, agrees with $\lambda_{L\eta}'|_\alpha(v_1,\dots,v_m)$.
\end{proof}
In the same way one proves the following. If $\lambda_{q^*\overline{\mathcal{R}}_\mu}'\in\Omega^m(({\rm pr}_2^\omega)^*W^0_{q^*\overline{\mathcal{R}}_\mu})$ is the pullback of the canonical $m$-form $\lambda'$ on $\Lambda^m_2(F_\omega)$ to $({\rm pr}_2^\omega)^*W^0_{q^*\overline{\mathcal{R}}_\mu}$, then
\[
\Pi_{q^*\overline{\mathcal{R}}_\mu}^*\lambda_{q^*\overline{\mathcal{R}}_\mu}=\lambda_{q^*\overline{\mathcal{R}}_\mu}'.
\]

The following lemma shows that solutions of $(\pi\colon E\rightarrow M,L\eta)$ with momentum $\mu$, thought of as sections $\Gamma$ of $W^\mu_{L\eta}\to M$, can be identified with sections $\widehat{\Gamma}$ of $({\rm pr}_1^\omega)^*W^\mu_{L\eta}\to M$.

\begin{lemma}\label{lem:liftingGamma}  Let $\Gamma\colon M\rightarrow W_{L\eta}^\mu$ be a section such that
\[
\Gamma^*\big(Z\lrcorner d\lambda^\mu_{L\eta}\big)=0,\quad \text{for all } Z\in\mathfrak{X}^{V(\pi_1\circ\pi_{L\eta})}\left(W_{L\eta}\right),
\]
then there exists a section $\widehat{\Gamma}\colon M\rightarrow\left({\rm pr}_1^\omega\right)^*W_{L\eta}^\mu$ such that $\Pi_{L\eta}\circ\widehat{\Gamma}=\Gamma$ and
\[
\widehat{\Gamma}^*\left(Z'\lrcorner d\lambda_{L\eta}'\right)=0, \quad \text{for all } Z'\in\mathfrak{X}^{V(\pi_1\circ\pi_{L\eta}\circ\Pi_{L\eta})}\big(\left({\rm pr}_1^\omega\right)^*W_{L\eta}^\mu\big).
\]
Conversely, if $\widehat{\Gamma}\colon M\to\left({\rm pr}_1^\omega\right)^*W_{L\eta}^\mu$ is a section such that
\[
\widehat{\Gamma}^*\left(Z'\lrcorner d\lambda_{L\eta}'\right)=0, \quad \text{for all } Z'\in\mathfrak{X}^{V(\pi_1\circ\pi_{L\eta}\circ\Pi_{L\eta})}\big(\left({\rm pr}_1^\omega\right)^*W_{L\eta}^\mu\big),
\]
then the section $\Gamma=\Pi_{L\eta}\circ \widehat{\Gamma}\colon M\to W_{L\eta}^\mu$ satisfies
\[
\Gamma^*\big(Z\lrcorner d\lambda^\mu_{L\eta}\big)=0,\quad \text{for all } Z\in\mathfrak{X}^{V(\pi_1\circ\pi_{L\eta})}\left(W_{L\eta}\right).
\]
\end{lemma}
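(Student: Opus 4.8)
The plan is to show that the correspondence $\Gamma \leftrightarrow \widehat{\Gamma}$ provided by the pullback-bundle structure intertwines the two variational equations, and the key to this is Lemma~\ref{lem:tech1}, which tells us $\Pi_{L\eta}^*\lambda_{L\eta}=\lambda_{L\eta}'$. First I would address the \emph{existence and uniqueness of the lift}: since $\left({\rm pr}_1^\omega\right)^*W_{L\eta}^\mu\to W_{L\eta}^\mu$ is the pullback of $F_\omega\to J^1\pi$ along $\pi_{L\eta}$, a section $\widehat{\Gamma}$ with $\Pi_{L\eta}\circ\widehat{\Gamma}=\Gamma$ is equivalent to a section of $F_\omega\to J^1\pi$ along $\pi_{10}\circ\pi_{L\eta}\circ\Gamma = s$; that is, by definition of $F_\omega=f_\omega^*(J^1(\overline{\pi}\circ\overline{p}))$, a section of $J^1(\overline{\pi}\circ\overline{p})\to E/G_\mu\times{\rm Lin}(\overline\pi^*TM,\widetilde\g)$ along $f_\omega\circ j^1 s$. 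Such a section exists canonically: take $\sigma$ to be the section $x\mapsto ([s(x)]_{G_\mu},[s(x),\omega\circ T_xs]_G)$ of $E/G_\mu\times{\rm Lin}(\overline\pi^*TM,\widetilde\g)\to M$ and let $\widehat{\Gamma}(x) = (j_x^1 s, j_x^1\sigma)$, which is exactly the compatibility condition $\sigma(x)=([s(x)]_{G_\mu},[s(x),\omega\circ T_xs]_G)$ imposed at the definition of the pullback subbundles. Conversely any $\widehat{\Gamma}$ pushes down to $\Gamma=\Pi_{L\eta}\circ\widehat{\Gamma}$, so the map on sections is a bijection.

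Next I would handle the \emph{equivalence of the equations}. The forward direction: given $Z'\in\mathfrak{X}^{V(\pi_1\circ\pi_{L\eta}\circ\Pi_{L\eta})}\big(\left({\rm pr}_1^\omega\right)^*W_{L\eta}^\mu\big)$, I want $\widehat{\Gamma}^*(Z'\lrcorner d\lambda_{L\eta}')=0$. Using Lemma~\ref{lem:tech1}, $d\lambda_{L\eta}' = d\Pi_{L\eta}^*\lambda_{L\eta} = \Pi_{L\eta}^* d\lambda_{L\eta}$, so
\[
\widehat{\Gamma}^*(Z'\lrcorner \Pi_{L\eta}^* d\lambda_{L\eta}) = \widehat{\Gamma}^*\big((Z'\lrcorner \Pi_{L\eta}^* d\lambda_{L\eta})\big).
\]
Because $Z'$ is $\Pi_{L\eta}$-related issues do not arise pointwise, one should instead argue at the level of contractions: for any point and any such $Z'$, $T\Pi_{L\eta}(Z')$ is a well-defined tangent vector to $W_{L\eta}^\mu$ which is vertical over $\pi_1\circ\pi_{L\eta}$ (since $\pi_1\circ\pi_{L\eta}\circ\Pi_{L\eta}$ is the composite). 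One then uses the standard identity $\widehat{\Gamma}^*(Z'\lrcorner \Pi_{L\eta}^*\alpha)$ computed against $m-1$ further vectors $T\widehat{\Gamma}(w_j)$: pushing everything forward by $\Pi_{L\eta}$ and using $\Pi_{L\eta}\circ\widehat{\Gamma}=\Gamma$ turns this into $\Gamma^*(T\Pi_{L\eta}(Z')\lrcorner d\lambda_{L\eta})$ evaluated on the $w_j$ — but here one must be careful that $T\Pi_{L\eta}(Z')$ need not be the restriction of a globally defined vector field on $W_{L\eta}$, so the cleanest route is to extend $T\Pi_{L\eta}(Z')$ locally to $Z\in\mathfrak{X}^{V(\pi_1\circ\pi_{L\eta})}(W_{L\eta})$ (possible since $\Pi_{L\eta}$ is a surjective submersion and the vertical subbundles correspond), apply the hypothesis $\Gamma^*(Z\lrcorner d\lambda_{L\eta}^\mu)=0$, and note the answer depends only on the value of $Z$ along $\Gamma$. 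Wait—$d\lambda_{L\eta}^\mu$ is the pullback of $d\lambda_{L\eta}$ by the inclusion $W_{L\eta}^\mu\hookrightarrow W_{L\eta}$, so I must check that contracting with a $(\pi_1\circ\pi_{L\eta})$-vertical $Z$ and then pulling back by $\Gamma$ (which lands in $W_{L\eta}^\mu$) gives the same thing whether we use $d\lambda_{L\eta}$ or $d\lambda_{L\eta}^\mu$; this is immediate because $\Gamma$ factors through the inclusion. The converse direction is symmetric: every $Z\in\mathfrak{X}^{V(\pi_1\circ\pi_{L\eta})}(W_{L\eta})$ lifts (along $\Pi_{L\eta}$, again a submersion) to a $Z'\in\mathfrak{X}^{V(\pi_1\circ\pi_{L\eta}\circ\Pi_{L\eta})}\big(\left({\rm pr}_1^\omega\right)^*W_{L\eta}^\mu\big)$ that is $\Pi_{L\eta}$-related to $Z$, apply the hypothesis on $\widehat{\Gamma}$, and push down using Lemma~\ref{lem:tech1} together with $\Gamma=\Pi_{L\eta}\circ\widehat{\Gamma}$.

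\emph{Main obstacle.} The delicate point is the bookkeeping of \emph{which} vertical vector fields correspond under $\Pi_{L\eta}$, and ensuring the verticality classes match: $Z'$ is required vertical over $\pi_1\circ\pi_{L\eta}\circ\Pi_{L\eta}\colon \left({\rm pr}_1^\omega\right)^*W_{L\eta}^\mu\to M$, and $Z$ vertical over $\pi_1\circ\pi_{L\eta}\colon W_{L\eta}\to M$; since $\Pi_{L\eta}$ covers the identity on $J^1\pi$ (via $\left({\rm pr}_1^\omega\right)^*W_{L\eta}^\mu\to W_{L\eta}^\mu\to J^1\pi$), its vertical tangent vectors at a point surject onto the vertical tangent vectors of $W_{L\eta}$, plus there is an extra ``$F_\omega$-direction'' kernel; a vector in that kernel is killed by $T\psi$ composed with $T{\rm pr}_1^\omega$ and hence contributes nothing to $\lambda_{L\eta}'=\Pi_{L\eta}^*\lambda_{L\eta}$, so it contracts $d\lambda_{L\eta}'$ trivially against vectors tangent to $\widehat{\Gamma}$. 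Making this ``extra directions don't matter'' statement precise — ideally by a short coordinate computation in the adapted coordinates $(x^i,u^a,u^a_i,p^i_a)$ on $W_{L\eta}$ and fibre coordinates on $F_\omega$, mirroring the computation in the proof of Proposition~\ref{prop:FieldTheoryEqsWL} — is where the real work lies; everything else is a formal consequence of Lemma~\ref{lem:tech1} and the universal property of the pullback bundle.
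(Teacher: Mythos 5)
Your proposal follows essentially the same route as the paper's proof: the lift is the tautological one determined by $f_\omega$ and jet prolongation, and the equivalence of the two equations is obtained from Lemma~\ref{lem:tech1} together with the observation that any $Z'\in\mathfrak{X}^{V(\pi_1\circ\pi_{L\eta}\circ\Pi_{L\eta})}\big(({\rm pr}_1^\omega)^*W_{L\eta}^\mu\big)$ splits into a part that projects onto a $(\pi_1\circ\pi_{L\eta})$-vertical field on $W_{L\eta}$ and a part killed by $T\Pi_{L\eta}$, the latter contracting $d\lambda_{L\eta}'=\Pi_{L\eta}^*d\lambda_{L\eta}$ trivially. The paper implements exactly this by writing $Z'=f_1Z_1'+f_2Z_2'$ with $T\Pi_{L\eta}\circ Z_1'=Z_1$ vertical and $T\Pi_{L\eta}\circ Z_2'=0$; no coordinate computation is needed, so the ``real work'' you anticipate at the end is in fact purely formal.

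One slip should be corrected. In constructing the lift you set $s=\pi_{10}\circ\pi_{L\eta}\circ\Gamma$ and then use $j^1_xs$ and $\omega\circ T_xs$, i.e.\ you implicitly assume that $\gamma_1:=\pi_{L\eta}\circ\Gamma$ is the prolongation $j^1s$. The hypothesis of the lemma does not force this (on $W^\mu_{L\eta}$ the variations along the constrained multipliers are no longer available, so holonomy of $\gamma_1$ is not automatic), and the paper explicitly remarks that $\gamma_1$ would equal $j^1s$ only when $\Gamma$ arises as in Proposition~\ref{prop:FieldTheoryEqsWL}. If $\gamma_1$ is not holonomic, your point $(j^1_xs,j^1_x\sigma)$ does not lie over $\gamma_1(x)$, and then $\Pi_{L\eta}\circ\widehat{\Gamma}=\Gamma$ fails. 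The fix is immediate and is what the paper does: set $\gamma_2=f_\omega\circ\gamma_1$, take $\hat\gamma(x)=\big(\gamma_1(x),j^1_x\gamma_2\big)\in F_\omega$, and define $\widehat{\Gamma}(x)=\Gamma(x)\circ T_{\hat\gamma(x)}{\rm pr}_1^\omega$. Relatedly, the lift is not unique (any jet with the correct base value at each $x$ would serve), so the ``bijection on sections'' claim should be dropped; only existence is needed, and your handling of the $\lambda^\mu_{L\eta}$ versus $\lambda_{L\eta}$ point and of the converse direction is otherwise fine.
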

\begin{proof} The situation is illustrated in the following diagram:
\begin{equation*}
  \begin{tikzcd}[column sep=1.3cm, row sep=1.5cm]
   \left({\rm pr}_1^\omega\right)^*W_{L\eta}^\mu \arrow[rr,"\Pi_{L\eta}"]\arrow[d,"\tau_{L\eta}"] & &  W_{L\eta}^\mu\arrow[d,swap,"\pi_{L\eta}"]\\
  F_\omega \arrow[rr,"{\rm pr}^\omega_1"]\arrow[dr,swap] & &  J^1\pi \arrow[dl,"\pi_1"]\\
  & M\arrow[uur,dashed,swap,"\Gamma",bend right=65]\arrow[uul,dashed,"\widehat{\Gamma}",bend left=65] &  
  \end{tikzcd}
\end{equation*}

To define $\widehat{\Gamma}$, we start defining two sections:
\begin{align*}
\gamma_1&\colon M\to J^1\pi,  \quad \gamma_1=\pi_{L\eta}\circ\Gamma,\\
\gamma_2&\colon M\to E/{G_\mu}\times{\rm Lin}{\left(\overline{\pi}^*TM,\widetilde{\g}\right)},  \quad \gamma_2=f_\omega\circ\gamma_1.
\end{align*}
Note that if $\Gamma$ was constructed from the Euler-Lagrange equations as in Proposition~\ref{prop:FieldTheoryEqsWL}, $\gamma_1$ would be $j^1s$ for some $s\colon M\to E$. Using $\gamma_1$ and $\gamma_2$ we construct the section $\hat\gamma\colon M\to F_\omega$:
\[
\hat\gamma(x)=\left(\gamma_1(x),j^1\gamma_2(x)\right)\in F_\omega\subset J^1\pi\times J^1(\overline{\pi}\circ\overline{p}). 
\]
It is easy to check that it is well defined. Finally, $\widehat{\Gamma}\colon M\to\left({\rm pr}_1^\omega\right)^*W_{L\eta}^\mu$ is given by:
\[
\widehat{\Gamma}(x)=\Gamma(x)\circ T_{\hat\gamma(x)}{\rm pr}_1^\omega\in ({\rm pr}_1^\omega)^*W_{L\eta}^\mu\big|_{\hat\gamma(x)}. 
\]

Every $Z'\in\mathfrak{X}^{V(\pi_1\circ\pi_{L\eta}\circ\Pi_{L\eta})}\big(\left({\rm pr}_1^\omega\right)^*W_{L\eta}^\mu\big)$ can be written as
  \[
  Z'=f_1Z_1'+f_2Z_2'
  \]
  where $f_1,f_2$ are smooth functions on $(\text{pr}_1^\omega)^*W_{L\eta}^\mu$ and
  \[
  T\Pi_{L\eta}\circ Z_1'=Z_1\in\mathfrak{X}^{V(\pi_1\circ\pi_{L\eta})}\left(W_{L\eta}\right),\qquad T\Pi_{L\eta}\circ Z_2'=0.
  \]
Let $\widehat{\Gamma}:M\rightarrow\left(\text{pr}_1^\omega\right)^*W_{L\eta}^\mu$ be the section defined before. It satisfies $\Pi_{L\eta}\circ\widehat{\Gamma}=\Gamma$. By Lemma~\ref{lem:tech1} we have that
\begin{align*}
f_1Z_1'\lrcorner d\lambda_{L\eta}'= f_1Z_1'\lrcorner \Pi_{L\eta}^*d\lambda_{L\eta}=f_1 \Pi_{L\eta}^*(Z_1\lrcorner d\lambda_{L\eta}),
\end{align*}
and similarly $f_2Z_2'\lrcorner d\lambda_{L\eta}'=f_2 \Pi_{L\eta}^*(0\lrcorner d\lambda_{L\eta})=0$. Thus,
\begin{align*}
 \widehat{\Gamma}^*(Z'\lrcorner d\lambda_{L\eta}')=\widehat{\Gamma}^*(f_1 \Pi_{L\eta}^*(Z_1\lrcorner d\lambda_{L\eta}))=(f_1\circ \widehat{\Gamma})\; \Gamma^*(Z_1\lrcorner d\lambda_{L\eta})=0,
\end{align*}
as required. The converse is analogous.
\end{proof}
In the same way, one proves the following:
\begin{lemma}\label{lem:liftingGamma2} If $\widehat{\Gamma}\colon M\to({\rm pr}_2^\omega)^*W^0_{q^*\overline{\mathcal{R}}_\mu}$ is a section such that
\[
\widehat{\Gamma}^*\left(Z'\lrcorner d\lambda_{L\eta}'\right)=0, \quad \text{for all } Z'\in\mathfrak{X}^{V((\overline{\pi}\circ\overline{p})_1\circ\pi_{q^*\overline{\mathcal{R}}_\mu}\circ\Pi_{q^*\overline{\mathcal{R}}_\mu})}\big(({\rm pr}_2^\omega)^*W^0_{q^*\overline{\mathcal{R}}_\mu}\big),
\]
then the section $\Gamma=\Pi_{q^*\overline{\mathcal{R}}_\mu}\circ \widehat{\Gamma}\colon M\to W^0_{q^*\overline{\mathcal{R}}_\mu}$ satisfies
\[
\Gamma^*\big(Z\lrcorner d\lambda^0_{q^*\overline{\mathcal{R}}_\mu}\big)=0,\quad \text{for all } Z\in\mathfrak{X}^{V((\overline{\pi}\circ\overline{p})_1\circ\pi_{q^*\overline{\mathcal{R}}_\mu})}\big(W^0_{q^*\overline{\mathcal{R}}_\mu}\big).
\]
Here $\lambda^0_{q^*\overline{\mathcal{R}}_\mu}$ is the canonical $m$-form on  $W^0_{q^*\overline{\mathcal{R}}_\mu}$.
\end{lemma}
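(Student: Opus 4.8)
The plan is to mimic exactly the proof of Lemma~\ref{lem:liftingGamma}, replacing $W_{L\eta}^\mu$ by $W^0_{q^*\overline{\mathcal{R}}_\mu}$, the projection $\mathrm{pr}_1^\omega$ by $\mathrm{pr}_2^\omega$, and the fibration $\pi_1$ by $(\overline{\pi}\circ\overline{p})_1$. The key observation is that the analogue of Lemma~\ref{lem:tech1} holds for $({\rm pr}_2^\omega)^*W^0_{q^*\overline{\mathcal{R}}_\mu}$, namely $\Pi_{q^*\overline{\mathcal{R}}_\mu}^*\lambda_{q^*\overline{\mathcal{R}}_\mu}=\lambda_{q^*\overline{\mathcal{R}}_\mu}'$ (this is stated right after the proof of Lemma~\ref{lem:tech1}), and consequently $\Pi_{q^*\overline{\mathcal{R}}_\mu}^*\lambda^0_{q^*\overline{\mathcal{R}}_\mu}=\lambda_{q^*\overline{\mathcal{R}}_\mu}'$ restricted to $({\rm pr}_2^\omega)^*W^0_{q^*\overline{\mathcal{R}}_\mu}$, since $\lambda^0_{q^*\overline{\mathcal{R}}_\mu}$ is by definition the pullback of $\lambda_{q^*\overline{\mathcal{R}}_\mu}$ by the inclusion $W^0_{q^*\overline{\mathcal{R}}_\mu}\hookrightarrow W_{q^*\overline{\mathcal{R}}_\mu}$.

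The argument proceeds as follows. First I would set $\Gamma=\Pi_{q^*\overline{\mathcal{R}}_\mu}\circ\widehat{\Gamma}$, which by construction is a section of $W^0_{q^*\overline{\mathcal{R}}_\mu}\to M$. Next, given any $Z\in\mathfrak{X}^{V((\overline{\pi}\circ\overline{p})_1\circ\pi_{q^*\overline{\mathcal{R}}_\mu})}(W^0_{q^*\overline{\mathcal{R}}_\mu})$, I would lift it to a vector field $Z'$ on $({\rm pr}_2^\omega)^*W^0_{q^*\overline{\mathcal{R}}_\mu}$ which is vertical over $M$ and satisfies $T\Pi_{q^*\overline{\mathcal{R}}_\mu}\circ Z'=Z$; any choice of lift works since we will only evaluate pulled-back forms along $\widehat{\Gamma}$. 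Then, using the analogue of Lemma~\ref{lem:tech1}, one has $Z'\lrcorner d\lambda_{L\eta}' = Z'\lrcorner \Pi_{q^*\overline{\mathcal{R}}_\mu}^*d\lambda^0_{q^*\overline{\mathcal{R}}_\mu} = \Pi_{q^*\overline{\mathcal{R}}_\mu}^*(Z\lrcorner d\lambda^0_{q^*\overline{\mathcal{R}}_\mu})$ (the contraction commutes with pullback precisely because $Z'$ projects to $Z$). Pulling back by $\widehat{\Gamma}$ and using $\Pi_{q^*\overline{\mathcal{R}}_\mu}\circ\widehat{\Gamma}=\Gamma$ gives
\[
0=\widehat{\Gamma}^*(Z'\lrcorner d\lambda_{L\eta}')=\widehat{\Gamma}^*\Pi_{q^*\overline{\mathcal{R}}_\mu}^*(Z\lrcorner d\lambda^0_{q^*\overline{\mathcal{R}}_\mu})=\Gamma^*(Z\lrcorner d\lambda^0_{q^*\overline{\mathcal{R}}_\mu}),
\]
which is exactly the required equation. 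Note that, as in Lemma~\ref{lem:liftingGamma}, a general vertical $Z'$ decomposes as $f_1Z_1'+f_2Z_2'$ with $T\Pi_{q^*\overline{\mathcal{R}}_\mu}\circ Z_1'$ projecting to a vertical field on $W^0_{q^*\overline{\mathcal{R}}_\mu}$ and $T\Pi_{q^*\overline{\mathcal{R}}_\mu}\circ Z_2'=0$; the $Z_2'$ contribution dies because $Z_2'\lrcorner\Pi_{q^*\overline{\mathcal{R}}_\mu}^*(\cdot)=\Pi_{q^*\overline{\mathcal{R}}_\mu}^*(0\lrcorner\cdot)=0$, so it suffices to treat the $Z_1'$ part, which is handled by the displayed computation above (with the function $f_1$ factoring out of the pullback).

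The only genuinely delicate point — and hence the main obstacle — is checking that $\Gamma=\Pi_{q^*\overline{\mathcal{R}}_\mu}\circ\widehat{\Gamma}$ really takes values in the smaller affine subbundle $W^0_{q^*\overline{\mathcal{R}}_\mu}$ rather than merely in $W_{q^*\overline{\mathcal{R}}_\mu}$, and that the various composed fibrations (so that ``vertical over $M$'' makes sense at each stage) are compatible. This is automatic here: $\Pi_{q^*\overline{\mathcal{R}}_\mu}$ is defined with target $W^0_{q^*\overline{\mathcal{R}}_\mu}$, and the projections in Diagram~\eqref{dia:Pullback_bundle} together with the analogue of Diagram~\eqref{dia:aux1} guarantee that $(\overline{\pi}\circ\overline{p})_1\circ\pi_{q^*\overline{\mathcal{R}}_\mu}\circ\Pi_{q^*\overline{\mathcal{R}}_\mu}$ and $(\overline{\pi}\circ\overline{p})_1\circ\pi_{q^*\overline{\mathcal{R}}_\mu}$ are intertwined by $\Pi_{q^*\overline{\mathcal{R}}_\mu}$, so every $M$-vertical $Z$ downstairs admits an $M$-vertical lift $Z'$ upstairs. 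With these compatibilities in hand, the proof is word-for-word the converse half of Lemma~\ref{lem:liftingGamma}, so I would simply write ``the proof is entirely analogous to the converse part of Lemma~\ref{lem:liftingGamma}, using the identity $\Pi_{q^*\overline{\mathcal{R}}_\mu}^*\lambda_{q^*\overline{\mathcal{R}}_\mu}=\lambda_{q^*\overline{\mathcal{R}}_\mu}'$ in place of Lemma~\ref{lem:tech1}'' and, if desired, reproduce the short displayed computation above for completeness.
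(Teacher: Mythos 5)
Your proposal is correct and follows essentially the same route as the paper, which gives no separate argument but declares the lemma proved ``in the same way'' as Lemma~\ref{lem:liftingGamma}: one lifts each $M$-vertical field $Z$ on $W^0_{q^*\overline{\mathcal{R}}_\mu}$ to a $\Pi_{q^*\overline{\mathcal{R}}_\mu}$-related $M$-vertical field $Z'$ and uses the identity $\Pi_{q^*\overline{\mathcal{R}}_\mu}^*\lambda^0_{q^*\overline{\mathcal{R}}_\mu}=\lambda'_{q^*\overline{\mathcal{R}}_\mu}$ (the analogue of Lemma~\ref{lem:tech1}) to pass the contraction through the pullback, exactly as in your displayed computation. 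Your reading of the paper's $d\lambda'_{L\eta}$ in the hypothesis as the restriction of the canonical form on $\Lambda^m_2 F_\omega$ to $({\rm pr}_2^\omega)^*W^0_{q^*\overline{\mathcal{R}}_\mu}$ is the intended one, and the remark about the decomposition $f_1Z_1'+f_2Z_2'$, while harmless, is not needed for this (converse-type) direction.
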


\begin{remark}\label{rem:1} Lemma~\ref{lem:liftingGamma} and Lemma~\ref{lem:liftingGamma2} also hold in the presence of a force term. The proof is similar. We point out that sections of $\Gamma\colon M\to W^0_{q^*\overline{\mathcal{R}}_\mu}$ in Lemma~\ref{lem:liftingGamma2} cannot in general be lifted to $({\rm pr}_2^\omega)^*W^0_{q^*\overline{\mathcal{R}}_\mu}$. We will discuss this in detail later in Section~\ref{sec:reconstruction} when we deal with reconstruction.
\end{remark}

We now discuss the force induced by the connection form. Consider the following 2-horizontal $m$-form $\omega_\mu\in \Omega^m_2(J^1\pi)$:
\[
\left.\omega_\mu\right|_{j_x^1s}=\varepsilon \big\langle\left.\mu\right|_x \stackrel{\wedge}{,} \left.\omega\right|_{s\left(x\right)}\circ T_{j_x^1s}\pi_{10}\big\rangle.
\]
One can show that $\omega_\mu$ is $G_\mu$-invariant as follows. The Lie group $G$ acts on $T(J^1\pi)$ by tangent lift (of the prolongation $j^1\phi_g$), which in particular implies equivariance of $T\pi_{10}$, i.e. $T\pi_{10}(gv)=gT\pi_{10}(v)$ for any $v\in T(J^1\pi)$. Using this observation and the equivariance of the connection, the invariance is immediate. Note that the same invariance holds for $d\omega_\mu$, and therefore there exists 
\[
\overline{\beta}_\mu\in\Omega^{m+1}_3\left(J^1\overline{\pi}\times E/G_\mu\times{\rm Lin}{(\overline{\pi}^*TM,\widetilde{\lag})}\right) 
\]
such that (see Diagram~\eqref{dia:Pullback_bundle}):
\[
\big(g_\omega\circ p_{G_\mu}^{J^1\pi}\big)^*\overline{\beta}_\mu=d\omega_\mu.
\]
The form $\overline{\beta}_\mu$ induces a force on $W^0_{q^*\overline{\mathcal{R}}_\mu}$, which we denote by $\beta^0_\mu\in \Omega^{m+1}_3(W^0_{q^*\overline{\mathcal{R}}_\mu})$, given by:
\[
\beta^0_\mu=\big(q\circ\pi_{q^*\overline{\mathcal{R}}_\mu}\big)^*\overline{\beta}_\mu.
\]
Finally, let us write $\widehat{\omega}_\mu=({\rm pr}_1^\omega)^*\omega_\mu$. 
\begin{lemma}\label{lem:GyroForcesRelation}
  The following holds:
  \[
  \Pi_{q^*\overline{\mathcal{R}}_\mu}^*\beta^0_\mu= d\big[\big(i^\omega_{q^*\overline{\mathcal{R}}_\mu}\big)^*\psi^*\widehat{\omega}_\mu\big].
  \]
\end{lemma}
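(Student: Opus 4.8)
The plan is to unwind both sides of the claimed identity by tracing the definitions through Diagram~\eqref{dia:Pullback_bundle}, and then to reduce everything to an equality of forms upstairs on $J^1\pi$ (or rather on $F_\omega$), where $\omega_\mu$ lives and where the relation between $\omega_\mu$ and the reduced form $\overline{\beta}_\mu$ is explicit. First I would rewrite the left-hand side: by definition $\beta^0_\mu=(q\circ\pi_{q^*\overline{\mathcal{R}}_\mu})^*\overline{\beta}_\mu$, so $\Pi_{q^*\overline{\mathcal{R}}_\mu}^*\beta^0_\mu=(q\circ\pi_{q^*\overline{\mathcal{R}}_\mu}\circ\Pi_{q^*\overline{\mathcal{R}}_\mu})^*\overline{\beta}_\mu$. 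Now I would use the commutativity of the big pullback diagram: on $({\rm pr}_2^\omega)^*W^0_{q^*\overline{\mathcal{R}}_\mu}$, composing $\Pi_{q^*\overline{\mathcal{R}}_\mu}$ with the relevant projections lands one on $J^1(\overline{\pi}\circ\overline{p})$ via ${\rm pr}_2^\omega$, and $q\circ{\rm pr}_2^\omega=g_\omega\circ p_{G_\mu}^{J^1\pi}\circ{\rm pr}_1^\omega$ by the outer square of Diagram~\eqref{dia:Pullback_bundle}. Hence the left-hand side equals $({\rm pr}_1^\omega\circ(\text{something}))^*\big(g_\omega\circ p_{G_\mu}^{J^1\pi}\big)^*\overline{\beta}_\mu$. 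But $\big(g_\omega\circ p_{G_\mu}^{J^1\pi}\big)^*\overline{\beta}_\mu=d\omega_\mu$ by the defining property of $\overline{\beta}_\mu$, so the left-hand side becomes the pullback of $d\omega_\mu$ to $({\rm pr}_2^\omega)^*W^0_{q^*\overline{\mathcal{R}}_\mu}$ along the composite $F_\omega$-to-$J^1\pi$ map, i.e. (since $d$ commutes with pullback) $d$ of the pullback of $\omega_\mu$.

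Next I would handle the right-hand side. By definition $\widehat{\omega}_\mu=({\rm pr}_1^\omega)^*\omega_\mu$, and $\psi\colon\Lambda^m_2 F_\omega\to F_\omega$ is the canonical projection, so $\psi^*\widehat{\omega}_\mu=\psi^*({\rm pr}_1^\omega)^*\omega_\mu=({\rm pr}_1^\omega\circ\psi)^*\omega_\mu$. Pulling back along $i^\omega_{q^*\overline{\mathcal{R}}_\mu}$ (the inclusion of $({\rm pr}_2^\omega)^*W^0_{q^*\overline{\mathcal{R}}_\mu}$ into $\Lambda^m_2 F_\omega$), and then applying $d$, gives $d\big[({\rm pr}_1^\omega\circ\psi\circ i^\omega_{q^*\overline{\mathcal{R}}_\mu})^*\omega_\mu\big]$, which is $d$ of the pullback of $\omega_\mu$ along exactly the same composite $({\rm pr}_2^\omega)^*W^0_{q^*\overline{\mathcal{R}}_\mu}\to F_\omega\to J^1\pi$ that appeared in the previous paragraph. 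So both sides are $d$ applied to the pullback of the single form $\omega_\mu$ along one and the same map; the identity follows. The one point that genuinely needs care is verifying that the two composites agree, i.e. that ${\rm pr}_1^\omega\circ\psi\circ i^\omega_{q^*\overline{\mathcal{R}}_\mu}$ coincides with the $F_\omega\to J^1\pi$ arrow coming from ${\rm pr}_1^\omega$ after the projection $({\rm pr}_2^\omega)^*W^0_{q^*\overline{\mathcal{R}}_\mu}\to F_\omega$, and that this in turn makes the outer square of Diagram~\eqref{dia:Pullback_bundle} commute the way I used; this is just a diagram chase through the definitions of $F_\omega$ as a fibered product and of the canonical projections, but it must be spelled out.

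The main obstacle, then, is purely bookkeeping: keeping straight the tower of pullback bundles $({\rm pr}_i^\omega)^*(\cdots)\subset\Lambda^m_2 F_\omega$ together with their canonical projections $\Pi_{L\eta}$, $\Pi_{q^*\overline{\mathcal{R}}_\mu}$, the "tautological" projections $\psi$, $\tau_{L\eta}$, and the inclusions $i^\omega_\bullet$, and checking that the relevant triangles and squares commute so that the two maps $({\rm pr}_2^\omega)^*W^0_{q^*\overline{\mathcal{R}}_\mu}\to J^1\pi$ really are equal. I expect no analytic difficulty and no need for coordinates; the proof is a composition of three facts — (i) $d$ commutes with pullback, (ii) the defining relation $(g_\omega\circ p_{G_\mu}^{J^1\pi})^*\overline{\beta}_\mu=d\omega_\mu$, and (iii) commutativity of Diagram~\eqref{dia:Pullback_bundle} restricted to the pullback subbundles — and the only risk is an index or projection mismatch. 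If desired one can also give the short coordinate verification in $(x^i,u^a,u^a_i,p^i_a,\dots)$ as a sanity check, but it should not be necessary.
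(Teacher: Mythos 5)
Your proposal is correct and follows essentially the same route as the paper's proof: rewrite $\Pi_{q^*\overline{\mathcal{R}}_\mu}^*\beta^0_\mu=\big(q\circ\pi_{q^*\overline{\mathcal{R}}_\mu}\circ\Pi_{q^*\overline{\mathcal{R}}_\mu}\big)^*\overline{\beta}_\mu$, use commutativity of Diagram~\eqref{dia:Pullback_bundle} (extended to the pullback subbundles) to replace this composite by $g_\omega\circ p_{G_\mu}^{J^1\pi}\circ{\rm pr}_1^\omega\circ\psi\circ i^\omega_{q^*\overline{\mathcal{R}}_\mu}$, and then apply $\big(g_\omega\circ p_{G_\mu}^{J^1\pi}\big)^*\overline{\beta}_\mu=d\omega_\mu$ together with naturality of $d$ and $\widehat{\omega}_\mu=({\rm pr}_1^\omega)^*\omega_\mu$. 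The commutativity you single out as the one point needing care, namely $q\circ{\rm pr}_2^\omega=g_\omega\circ p_{G_\mu}^{J^1\pi}\circ{\rm pr}_1^\omega$ on $F_\omega$, is precisely the diagram-chasing step the paper also invokes (without spelling it out) in its own proof.
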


\begin{proof}
The proof follows from diagram chasing in the following commutative diagram:
\begin{equation*}
\begin{tikzcd}[column sep=.8cm, row sep=1.5cm,every label/.append style = {font=\small}]
\Lambda^m_2\left(F_\omega\right) \arrow{r}{\psi}   &
F_\omega \arrow{r}{\text{pr}_1^\omega} &
J^1\pi\arrow{r}\arrow{r}{p_{G_\mu}^{J^1\pi}} &
J^1\pi/{G_\mu}\arrow{d}{g_\omega}
\\
({\rm pr}_2^\omega)^*W^0_{q^*\overline{\mathcal{R}}_\mu} \arrow[hook]{u}{i^\omega_{q^*\overline{\mathcal{R}}_\mu}}\arrow{r}[swap]{\Pi_{q^*\overline{\mathcal{R}}_\mu}} &
W^0_{q^*\overline{\mathcal{R}}_\mu} \arrow{r}[swap]{\pi_{q^*\overline{\mathcal{R}}_\mu}} &
J^1\left(\overline{\pi}\circ\overline{p}\right) \arrow{r}[swap]{q} &
J^1\overline{\pi}\times E/{G_\mu}\times{\rm Lin}{\left(\overline{\pi}^*TM,\widetilde{\g}\right)}
  \end{tikzcd}
\end{equation*}

Indeed, we have:
  \begin{align*}
    \Pi_{q^*\overline{\mathcal{R}}_\mu}^*\beta^0_\mu&=\big(q\circ\pi_{q^*\overline{\mathcal{R}}_\mu}\circ \Pi_{q^*\overline{\mathcal{R}}_\mu}\big)^*\overline{\beta}_\mu =\big(g_\omega\circ p_{G_\mu}^{J^1\pi}\circ\text{pr}_1^\omega\circ\psi\circ i^\omega_{q^*\overline{\mathcal{R}}_\mu}\big)^*\overline{\beta}_\mu\\
    &=\big(\psi\circ i^\omega_{q^*\overline{\mathcal{R}}_\mu}\big)^*(\text{pr}_1^\omega)^*\big(g_\omega\circ p_{G_\mu}^{J^1\pi}\big)^*\overline{\beta}_\mu =d\big[(i^\omega_{q^*\overline{\mathcal{R}}_\mu})^*\psi^*\widehat{\omega}_\mu\big],
  \end{align*}
  as required.
\end{proof}

We are ready to prove a key result which relates the subbundles $(\text{pr}_1^\omega)^*W_{L\eta}^\mu$ and $({\rm pr}_2^\omega)^*W^0_{q^*\overline{\mathcal{R}}_\mu}$ of $\Lambda^m_2 F_\omega$. Roughly speaking, they are related by an affine translation by means of the force $\widehat{\omega}_\mu$. More precisely, let us denote ${\rm t}_{\widehat{\omega}_\mu}\colon \Lambda^m_2 F_\omega\to \Lambda^m_2 F_\omega$ the map
\[
{\rm t}_{\widehat{\omega}_\mu}\left(\rho\right)=\rho+\left.\widehat{\omega}_\mu\right|_{\left(j_x^1s,j_x^1\sigma\right)},
\]
where $\left(j_x^1s,j_x^1\sigma\right)=\psi\left(\rho\right)$. Then the following holds:

\begin{proposition}\label{prop:WlWR_relation} With the notations above,
\[
{\rm t}_{\widehat{\omega}_\mu}\big(({\rm pr}_2^\omega)^*W^0_{q^*\overline{\mathcal{R}}_\mu}\big)=\left({\rm pr}_1^\omega\right)^*W_{L\eta}^\mu.
\]
\end{proposition}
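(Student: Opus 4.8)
The plan is to verify the claimed identity fibrewise, by unravelling the explicit descriptions of both affine subbundles at a point $\rho=(j_x^1s,j_x^1\sigma)\in F_\omega$ with $\sigma(x)=([s(x)]_{G_\mu},[s(x),\omega\circ T_xs]_G)$, and checking that $\mathrm{t}_{\widehat{\omega}_\mu}$ maps one fibre onto the other. Since both $\left(\text{pr}_1^\omega\right)^*W_{L\eta}^\mu$ and $({\rm pr}_2^\omega)^*W^0_{q^*\overline{\mathcal{R}}_\mu}$ are affine subbundles over the \emph{same} base $F_\omega$, and $\mathrm{t}_{\widehat{\omega}_\mu}$ is a translation by a form depending only on $\psi(\rho)\in F_\omega$, it suffices to show that (i) the associated \emph{vector} subbundles coincide, and (ii) one particular point of $({\rm pr}_2^\omega)^*W^0_{q^*\overline{\mathcal{R}}_\mu}$ is carried by $\mathrm{t}_{\widehat{\omega}_\mu}$ into $\left(\text{pr}_1^\omega\right)^*W_{L\eta}^\mu$. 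I would treat simple elements throughout, as the paper does elsewhere, the general case being identical up to finite sums.

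For step (i), the vector part of $({\rm pr}_2^\omega)^*W^0_{q^*\overline{\mathcal{R}}_\mu}$ is $({\rm pr}_2^\omega)^*\widehat{I}^m_{\text{con},2}$, which by the definition of $\widehat{I}^m_{\text{con},2}$ is spanned by pullbacks to $F_\omega$ of forms of type $[\widehat{\alpha}\circ(T\overline{\pi}_{10}-T(\overline{p}\circ\sigma)\circ T\overline{\pi}_1)\circ Tj^1\overline{p}]\wedge\beta$ with $\widehat{\alpha}\in T^*_{\overline{p}(\sigma(x))}(E/G)$. The vector part of $\left(\text{pr}_1^\omega\right)^*W_{L\eta}^\mu$ is $({\rm pr}_1^\omega)^*$ applied to the vector part of $W_{L\eta}^\mu$; by Lemma~\ref{lem:momentum-set}, that vector part consists of the forms $\widehat{\alpha}_{[s(x)]_G}\circ T_{s(x)}p_G^E\circ(T_{j_x^1s}\pi_{10}-T_xs\circ T_{j_x^1s}\pi_1)\wedge\beta$ with $\widehat{\alpha}_{[s(x)]_G}\in T^*_{[s(x)]_G}(E/G)$ (the $\lag^*$-valued piece being \emph{fixed} by the momentum, hence contributing nothing to the vector part). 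The compatibility condition $\sigma(x)=([s(x)]_{G_\mu},\dots)$ forces $\overline{p}(\sigma(x))=[s(x)]_G$, so the two families of covectors $\widehat{\alpha}$ live in the same space; it then remains to check that the two horizontal-part expressions agree when pulled back to $F_\omega$. This is a chain-rule computation: $\text{pr}_2^\omega$ followed by the $J^1\overline{\pi}$-projection of $q$ equals, on $F_\omega$, the map $j_x^1s\mapsto j_x^1(p_G^E\circ s)=j_x^1([s]_G)$ precomposed with $\text{pr}_1^\omega$ (this is exactly what Diagram~\eqref{dia:Pullback_bundle} encodes), and one checks that $T p_G^E\circ(T\pi_{10}-T s\circ T\pi_1)$ matches $(T\overline{\pi}_{10}-T[s]_G\circ T\overline{\pi}_1)\circ Tj^1[s]_G$ after precomposition with $T\pi_{10}$ — the standard identity relating the contact structure of $J^1\pi$ with that of $J^1\overline{\pi}$ under $p_G^E$.

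For step (ii), I would start from a generic simple element $\kappa\in W^0_{q^*\overline{\mathcal{R}}_\mu}|_{j_x^1\sigma}$, write $(\text{pr}_2^\omega)^*\kappa$, and compute $\mathrm{t}_{\widehat{\omega}_\mu}\big((\text{pr}_2^\omega)^*\kappa\big)=(\text{pr}_2^\omega)^*\kappa+({\rm pr}_1^\omega)^*\omega_\mu|_{j_x^1s}$. Using $\kappa=(q^*\overline{\mathcal{R}}_\mu)(j_x^1\sigma)+[\text{contact term}]$ and the defining relation $(g_\omega\circ p_{G_\mu}^{J^1\pi})^*\overline{\mathcal{R}}_\mu=\mathcal{R}_\mu$ together with $\mathcal{R}_\mu(j_x^1s)=L(j_x^1s)\eta_x-\varepsilon\langle\mu|_{j_x^1s}\stackrel{\wedge}{,}\omega|_{s(x)}\circ T_xs\circ T_{j_x^1s}\pi_1\rangle$, the pullback of $q^*\overline{\mathcal{R}}_\mu$ to $F_\omega$ becomes $L(j_x^1s)\eta_x-\varepsilon\langle\mu\stackrel{\wedge}{,}\omega\circ T_xs\circ T\pi_1\rangle$. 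Adding $\omega_\mu|_{j_x^1s}=\varepsilon\langle\mu\stackrel{\wedge}{,}\omega\circ T_{j_x^1s}\pi_{10}\rangle$ and using $T\pi_{10}-T_xs\circ T\pi_1=\theta|_{j_x^1s}$ (equation~\eqref{eq:contactstructuredefinition}), the two $\mu$-terms combine into $+\varepsilon\langle\mu\stackrel{\wedge}{,}\omega\circ(T_{j_x^1s}\pi_{10}-T_xs\circ T_{j_x^1s}\pi_1)\rangle$ — precisely the vertical term appearing in Lemma~\ref{lem:momentum-set}. Combined with the horizontal term from step (i), this shows $\mathrm{t}_{\widehat{\omega}_\mu}\big((\text{pr}_2^\omega)^*\kappa\big)$ has exactly the form of a (pulled-back) element of $W_{L\eta}^\mu$ described in Lemma~\ref{lem:momentum-set}, hence lies in $\left(\text{pr}_1^\omega\right)^*W_{L\eta}^\mu$; surjectivity follows since $\widehat{\alpha}$ and $\beta$ range over everything, and injectivity of $\mathrm{t}_{\widehat{\omega}_\mu}$ gives the reverse inclusion. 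The main obstacle I anticipate is bookkeeping in step (i): keeping the three different jet-bundle contact structures ($J^1\pi$, $J^1\overline{\pi}$, $J^1(\overline{\pi}\circ\overline{p})$) straight and verifying that the compatibility constraint defining $F_\omega$ is exactly what is needed to make the horizontal pieces match — the sign $\varepsilon$ and the placement of $\text{pr}_1^\omega$ versus $\text{pr}_2^\omega$ are easy to get wrong, so I would double-check these against the coordinate expressions $\lambda_{L\eta}=L\eta+p_a^i du^a\wedge\eta_i-p_a^iu_i^a\eta$ and the coordinate form of $\gamma^a_i=\theta^a\wedge\eta_i$.
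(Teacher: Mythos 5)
Your plan is correct and follows essentially the same route as the paper: you compare fibres over $F_\omega$ using the explicit description of $W^\mu_{L\eta}$ from Lemma~\ref{lem:momentum-set}, match the pulled-back contact (horizontal) terms via the chain rule through Diagram~\eqref{dia:Pullback_bundle} and the compatibility condition defining $F_\omega$ (the paper's relations involving $T\,{\rm pr}_1^\omega$, $T\,{\rm pr}_2^\omega$ and $T(\overline{p}\circ\sigma)=Tp^E_G\circ T_xs$), and then use $\overline{\mathcal{R}}_\mu\circ q=\mathcal{R}_\mu$ together with $\omega_\mu$ to recombine the $\mu$-terms into $\varepsilon\langle\mu\stackrel{\wedge}{,}\omega\circ(T\pi_{10}-T_xs\circ T\pi_1)\rangle$. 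Your repackaging as ``equal model vector subbundles plus one common point'' is only an organizational variant of the paper's element-by-element check (the paper proves one inclusion and notes the converse is similar), so no substantive difference or gap.
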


\begin{proof} We will only show the inclusion ${\rm t}_{\widehat{\omega}_\mu}(({\rm pr}_2^\omega)^*W^0_{q^*\overline{\mathcal{R}}_\mu})\subset \left({\rm pr}_1^\omega\right)^*W_{L\eta}^\mu$. The converse is similar. 

Consider the following commutative diagram:
\begin{equation*}
\begin{tikzcd}[column sep=1.3cm, row sep=1.5cm]
F_\omega\arrow{r}{{\rm pr}_2^\omega}\arrow{d}[swap]{{\rm pr}_1^\omega} & J^1\left(\overline{\pi}\circ\overline{p}\right)\arrow{rr}{j^1\overline{p}}\arrow{d}{\left(\overline{\pi}\circ\overline{p}\right)_{10}} &&
      J^1\overline{\pi}\arrow{d}{\overline{\pi}_1}\arrow{dl}[swap]{\overline{\pi}_{10}}\\
J^1\pi\arrow{r}[swap]{f_\omega}&
E/G_{\mu}\times\mathop{\rm Lin}{\left(\overline{\pi}^*TM,\widetilde{\lag}\right)}\arrow{r}[swap]{\overline{p}} &
E/G     \arrow{r}[swap]{\overline{\pi}} &
M
\end{tikzcd}
\end{equation*}
Differentiating the relation
\[
\overline{\pi}_1\circ j^1\overline{p}\circ {\rm pr}_2^\omega=\overline{\pi}\circ \overline{p}\circ f_\omega\circ {\rm pr}_1^\omega=\pi_1\circ {\rm pr}_1^\omega,
\]
we find
\begin{equation}\label{eq:A}
 T_{j_x^1\left(\overline{p}\circ\sigma\right)}\overline{\pi}_1\circ T_{j_x^1\sigma}j^1\overline{p}\circ T_{\left(j_x^1s,j_x^1\sigma\right)}{\rm pr}_2^\omega = T_{j_x^1s}\pi_1\circ T_{\left(j_x^1s,j_x^1\sigma\right)}\text{pr}_1^\omega.
\end{equation}
In a similar way, from
\[
\overline{\pi}_{10}\circ j^1\overline{p}\circ {\rm pr}_2^\omega = \overline{p}\circ f_\omega\circ {\rm pr}_1^\omega = p^E_G\circ \pi_{10}\circ {\rm pr}_1^\omega,
\]
we get
\begin{equation}\label{eq:B}
T_{j_x^1\left(\overline{p}\circ\sigma\right)}\overline{\pi}_{10}\circ T_{j_x^1\sigma}j^1\overline{p}\circ T_{\left(j_x^1s,j_x^1\sigma\right)}{\rm pr}_2^\omega=T_{s\left(x\right)}p_G^E\circ T_{j_x^1s}\pi_{10}\circ T_{\left(j_x^1s,j_x^1\sigma\right)}\text{pr}_1^\omega.
\end{equation}

Since $(j_x^1s,j_x^1\sigma)\in F_\omega$, by definition we have $f_\omega(j_x^1s)=(\overline{\pi}\circ\overline{p})_{10}(j_x^1\sigma)=\sigma(x)$, and therefore
\[
\left(\overline{p}\circ\sigma\right)(x)=\left(\overline{p}\circ f_\omega\right)(j_x^1s).
\]
Recalling the relation between the maps above in the next diagram
\[
    \begin{tikzcd}[ampersand replacement=\&, column sep=1.3cm, row sep=1.5cm]
      J^1\pi
      \arrow{r}{f_\omega}
      \arrow{d}[swap]{\pi_{10}}
      \&
      {E/G_\mu \times\mathop{\text{Lin}}{\left(\overline{\pi}^*TM,\widetilde{\g}\right)}}
      \arrow{d}{\overline{p}}
      \\
      E
      \arrow{r}[swap]{p_G^E}
      \&
      E/G
    \end{tikzcd}
\]
we see that $p_G^E\left(s(x)\right)=\overline{p}\left(\sigma(x)\right)$, and thus
\begin{equation}\label{eq:C}
    T_x\left(\overline{p}\circ\sigma\right)=T_{s\left(x\right)}p_G^E\circ T_xs.
\end{equation}

Let us pick a simple element $\rho\in ({\rm pr}_2^\omega)^*W^0_{q^*\overline{\mathcal{R}}_\mu}$. It can be written as:
  \begin{multline*}
  \rho=\big\{\overline{\mathcal{R}}_\mu\left(q\left(j_x^1\sigma\right)\right)+\\
  +\left[\widehat{\alpha}\circ\left(T_{j_x^1\left(\overline{p}\circ\sigma\right)}\overline{\pi}_{10}- T_x\left(\overline{p}\circ\sigma\right)\circ T_{j_x^1\left(\overline{p}\circ\sigma\right)}\overline{\pi}_1\right)\circ T_{j_x^1\sigma}j^1\overline{p}\right]\wedge\beta\big\}\circ T_{\left(j_x^1s,j_x^1\sigma\right)}\text{pr}_2^\omega,
  \end{multline*}
for some $\widehat{\alpha}\in T_{\overline{p}\left(\sigma\left(x\right)\right)}^*\left(E/G\right)$ and $\beta\in\left.\left(\Lambda^{m-1}_1J^1\left(\overline{\pi}\circ\overline{p}\right)\right)\right|_{j_x^1\sigma}$. Taking into account~\eqref{eq:A}, \eqref{eq:B} and~\eqref{eq:C}, we have
\begin{multline*}
  \widehat{\alpha}\circ(T_{j_x^1\left(\overline{p}\circ\sigma\right)}\overline{\pi}_{10} - T_x\left(\overline{p}\circ\sigma\right)\circ T_{j_x^1\left(\overline{p}\circ\sigma\right)}\overline{\pi}_1)\circ T_{j_x^1\sigma}j^1\overline{p}\circ T_{\left(j_x^1s,j_x^1\sigma\right)}\text{pr}_2^\omega =\\
  =\widehat{\alpha}\circ T_{s\left(x\right)}p_G^E\circ(T_{j_x^1s}\pi_{10}-T_xs\circ T_{j_x^1s}\pi_1)\circ T_{\left(j_x^1s,j_x^1\sigma\right)}\text{pr}_1^\omega,
\end{multline*}
and then $\rho$ might as well be written as
\[
  \rho=\left\{\overline{\mathcal{R}}_\mu(q\left(j_x^1\sigma)\right) +
  [\widehat{\alpha}\circ T_{s\left(x\right)}p_G^E\circ(T_{j_x^1s}\pi_{10}-T_xs\circ T_{j_x^1s}\pi_1)]\wedge\alpha\right\}\circ T_{\left(j_x^1s,j_x^1\sigma\right)}\text{pr}_1^\omega,
\]
where $\alpha\in\left.\big(\Lambda^{m-1}_1J^1\pi\big)\right|_{j_x^1s}$ is chosen such that
\[
  \beta\circ T_{\left(j_x^1s,j_x^1\sigma\right)}\text{pr}_2^\omega=\alpha\circ T_{\left(j_x^1s,j_x^1\sigma\right)}\text{pr}_1^\omega.
\]
Note that, by definition,
\[
q(j_x^1\sigma)=\big(j_x^1(\overline{p}\circ\sigma),\sigma(x)\big)=
\big(j_x^1(p_G^E\circ s),f_\omega(s(x))\big) =
g_\omega\big(p_{G_\mu}^{J^1\pi}(j_x^1s)\big),
\]
and then $\overline{\mathcal{R}}_\mu(q\left(j_x^1\sigma)\right)=\mathcal{R}_\mu(j_x^1s)$. Thus, $\rho+ \left.\widehat{\omega}_\mu\right|_{\left(j_x^1s,j_x^1\sigma\right)}$ can be written as:
\begin{multline*}
  \rho+\left.\widehat{\omega}_\mu\right|_{\left(j_x^1s,j_x^1\sigma\right)} = \big\{ L(j^1_xs)\eta_x- \varepsilon \big\langle\left.\mu\right|_{j_x^1s}\stackrel{\wedge}{,}\left.\omega\right|_{s\left(x\right)}\circ T_xs\circ T_{j_x^1s}\pi_1\big\rangle + \\
+
  [\widehat{\alpha}\circ T_{s\left(x\right)}p_G^E\circ(T_{j_x^1s}\pi_{10}-T_xs\circ T_{j_x^1s}\pi_1)]\wedge\alpha\big\}\circ T_{\left(j_x^1s,j_x^1\sigma\right)}\text{pr}_1^\omega.
\end{multline*}
The reasoning is the same for an arbitrary (i.e. not necessarily simple) element $\rho\in ({\rm pr}_2^\omega)^*W^0_{q^*\overline{\mathcal{R}}_\mu}$. This concludes the proof.
\end{proof}

In order to apply this result to the reduction of field equations of motion, it will be necessary to take into account the following general fact. Roughly speaking, the next result states that the affine translation by $\widehat{\omega}_\mu$ will give rise to a force term related to its exterior differential $d\widehat{\omega}_\mu$.

\begin{lemma} \label{lem:TranslationSpaceForms} Let $P$ be a manifold and $\alpha\in\Omega^m\left(P\right)$ a $m$-form on $P$. Consider the affine translation ${\rm t}_{\alpha}\colon \Lambda^m P\to \Lambda^m P$ induced by $\alpha$, i.e.
  \[
  {\rm t}_{\alpha}(\beta)=\beta+\alpha_p,
  \]
where $p=\pi_P(\beta)$ ($\pi_P\colon \Lambda^m P\to P$ is the projection). Let $i:W\hookrightarrow \Lambda^m P$ be an affine subbundle and consider the affine subbundle $W_\alpha={\rm t}_\alpha\left(W\right)$. Let $\lambda_{W_\alpha}$ and $\lambda_W$ denote the restrictions of the
canonical $m$-form $\lambda_P$ to $W_\alpha$ and $W$ respectively. 
\begin{enumerate}[label={(\roman*)},leftmargin=*]
\item The following identity holds:
    \[
    \lambda_{W_\alpha}={\rm t}_{-\alpha}^*\lambda_W+i_\alpha^*\left(\pi_P^*\alpha\right)
    \]
where $i_\alpha\colon W_\alpha\hookrightarrow \Lambda^m P$ is the  inclusion.
\item Let $\Gamma_\alpha\colon M\rightarrow W_\alpha$ be a map and $X$ a vector field on $W_\alpha$ such that
  \[
  \Gamma_\alpha^*\left(X\lrcorner d\lambda_{W_\alpha}\right)=0.
  \]
Then for $\Gamma={\rm t}_{-\alpha}\circ\Gamma_\alpha$ the following identity holds:
  \[
  \Gamma^*\big(\left(T{\rm t}_{-\alpha}\circ X\right)\lrcorner\left(d\lambda_{W}+di^*\left(\pi_P^*\alpha\right)\right)\big)=0.
  \]
\end{enumerate}
\end{lemma}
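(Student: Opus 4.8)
\medskip
\noindent\textbf{Proof plan.} The plan is to derive both items from one elementary computation: the behaviour of the tautological $m$-form $\lambda_P$ on $\Lambda^m P$ under the fibrewise translation ${\rm t}_\alpha$. Since ${\rm t}_{\pm\alpha}$ cover the identity on $P$, i.e. $\pi_P\circ{\rm t}_{\pm\alpha}=\pi_P$, one checks directly from the definition $\lambda_P|_\beta(v_1,\dots,v_m)=\beta\big(T\pi_P(v_1),\dots,T\pi_P(v_m)\big)$ and from ${\rm t}_\alpha(\beta)=\beta+\alpha_{\pi_P(\beta)}$ that
\[
{\rm t}_\alpha^*\lambda_P=\lambda_P+\pi_P^*\alpha,\qquad\text{equivalently}\qquad {\rm t}_{-\alpha}^*\lambda_P=\lambda_P-\pi_P^*\alpha .
\]
This identity is the engine of the argument.

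For item (i), I would use that ${\rm t}_\alpha$ and ${\rm t}_{-\alpha}$ are mutually inverse, so ${\rm t}_{-\alpha}$ restricts to a diffeomorphism $W_\alpha\to W$ with $i\circ{\rm t}_{-\alpha}={\rm t}_{-\alpha}\circ i_\alpha$ (the same symbol denoting the ambient translation and its restriction, as in the statement). Pulling $\lambda_W=i^*\lambda_P$ back along this restriction and inserting the translation formula gives ${\rm t}_{-\alpha}^*\lambda_W=i_\alpha^*\,{\rm t}_{-\alpha}^*\lambda_P=i_\alpha^*\lambda_P-i_\alpha^*(\pi_P^*\alpha)=\lambda_{W_\alpha}-i_\alpha^*(\pi_P^*\alpha)$, which is exactly the asserted identity after rearranging.

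For item (ii), put $\phi={\rm t}_{-\alpha}\colon W_\alpha\to W$ (the restriction), so that $\Gamma=\phi\circ\Gamma_\alpha$ and $T{\rm t}_{-\alpha}\circ X=\phi_*X$ is a genuine vector field on $W$. I would combine two facts: the naturality of contraction under the diffeomorphism $\phi$, namely $\phi^*\big(\phi_*X\lrcorner\omega\big)=X\lrcorner\phi^*\omega$ for any form $\omega$ on $W$; and the identity $\phi^*\big(d\lambda_W+d\,i^*(\pi_P^*\alpha)\big)=d\lambda_{W_\alpha}$. The latter holds because item (i) gives $\phi^*\lambda_W=\lambda_{W_\alpha}-i_\alpha^*(\pi_P^*\alpha)$, while $\pi_P\circ{\rm t}_{-\alpha}=\pi_P$ gives $\phi^*i^*(\pi_P^*\alpha)=i_\alpha^*(\pi_P^*\alpha)$, so the two copies of $i_\alpha^*(\pi_P^*\alpha)$ carry opposite signs and cancel after applying $d$. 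Therefore
\begin{align*}
\Gamma^*\big((T{\rm t}_{-\alpha}\circ X)\lrcorner(d\lambda_W+d\,i^*(\pi_P^*\alpha))\big)
&=\Gamma_\alpha^*\,\phi^*\big((\phi_*X)\lrcorner(d\lambda_W+d\,i^*(\pi_P^*\alpha))\big)\\
&=\Gamma_\alpha^*\big(X\lrcorner d\lambda_{W_\alpha}\big)=0,
\end{align*}
which is the claim.

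There is no real obstacle here: the proof is purely formal manipulation of pullbacks. The one point to be careful about is the overloaded notation — ${\rm t}_{-\alpha}$ stands both for the translation on the ambient $\Lambda^m P$ and for its restriction $W_\alpha\to W$ — and keeping track of which inclusion ($i$ or $i_\alpha$) each occurrence of $\pi_P^*\alpha$ is restricted along, so that the spurious terms cancel exactly.
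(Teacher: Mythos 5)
Your proof is correct and rests on essentially the same computation as the paper's: the tautological definition of $\lambda_P$ together with $\pi_P\circ{\rm t}_{\pm\alpha}=\pi_P$, which the paper carries out pointwise on $W_\alpha$ and you package as the ambient identity ${\rm t}_{-\alpha}^*\lambda_P=\lambda_P-\pi_P^*\alpha$ before restricting. The only difference is that you spell out item (ii) (via naturality of contraction under the diffeomorphism and the cancellation of the $i_\alpha^*(\pi_P^*\alpha)$ terms), which the paper dismisses as ``follows easily''; your details are sound.
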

\begin{proof}
Let us first remark that ${\rm t}_{\alpha}$ is a diffeomorphism, and therefore $W_\alpha$ is indeed an affine subbundle. We will prove $(i)$, since $(ii)$ follows easily. Let $p= \pi_P(\beta)$ where $\beta\in W$; then $\alpha_p+\beta\in W_\alpha$ and for $m$ tangent vectors $v_1,\dots,v_m$ in $T_{\alpha_p+\beta}W_\alpha$
\[
 \left.\lambda_{W_\alpha}\right|_{\alpha_p+\beta}(v_1,\dots,v_m)=
 \alpha_p\left(T\pi_P(v_1),\dots,T\pi_P(v_m)\right)+\beta \left(T\pi_P(v_1),\dots,T\pi_P(v_m)\right).
\]
On the other hand, using that $\pi_P\circ {\rm t}_{-\alpha}=\pi_P$, 
\begin{align*}
 {\rm t}_{-\alpha}^*(\left.\lambda_W\right|_\beta)\left(v_1,\dots,v_m\right)
 &=\beta\left(T\pi_P\circ T{\rm t}_{-\alpha} (v_1),\dots,T\pi_P\circ T{\rm t}_{-\alpha}(v_m)\right)\\
 &=\beta\left(T\pi_P(v_1),\dots,T\pi_P(v_m)\right).
\end{align*}
Comparing the previous identities, the claim follows. 
\end{proof}

\subsection{Routh reduction for Lagrangian field theories}

We are going to prove the main reduction theorem. Essentially, it states that the solutions of the invariant LFT $(\pi\colon E\rightarrow M,L\eta)$ project onto solutions of the reduced Lagrangian field theory (with force): 
\begin{equation}\label{eq:reducedLFT}
\big((\overline{\pi}\circ\overline{p})\colon E/G_\mu\times\mathop{\rm Lin}{\left(\overline{\pi}^*TM,\widetilde{\g}\right)} \to M,\mathcal{R}_\mu^{\rm red},\beta_\mu^{\rm red}\big), 
\end{equation}
where $\mathcal{R}_\mu^{\rm red}=q^*\overline{\mathcal{R}}_\mu$ and $\beta_\mu^{\rm red}=q^*\overline{\beta_\mu}$ are the Routhian density and the force in the reduced jet bundle $J^1(\overline{\pi}\circ\overline{p})$.

Before the proving this result, we need to make some observations about the reduced LFT~\eqref{eq:reducedLFT}. The proofs, which are straightforward in coordinates, are omitted.
\begin{enumerate}[label={\arabic*)},nolistsep]
 \item Any solution $\Gamma\colon M\to W_{q^*\overline{\mathcal{R}}_\mu}$ of the reduced LFT~\eqref{eq:reducedLFT} takes values in $W^0_{q^*\overline{\mathcal{R}}_\mu}$.
 \item If a section $\Gamma\colon M\to W^0_{q^*\overline{\mathcal{R}}_\mu}$ satisfies
 \begin{equation*}
\Gamma^*\big(X\lrcorner \big(d\lambda^0_{q^*\overline{\mathcal{R}}_\mu}+{\beta^0_\mu}\big)\big)=0,\quad \text{for all } X\in\mathfrak{X}^{V((\overline{\pi}\circ\overline{p})_1\circ\pi_{q^*\overline{\mathcal{R}}_\mu})}(W^0_{q^*\overline{\mathcal{R}}_\mu}).
 \end{equation*}
and additionally $\pi_{q^*\overline{\mathcal{R}}_\mu}\circ\Gamma\colon M\to J^1(\overline{\pi}\circ\overline{p})$ is holonomic then $\Gamma$, considered as a section $\Gamma\colon M\to W_{q^*\overline{\mathcal{R}}_\mu}$, is a solution.
\end{enumerate}

The next result states that, looking at solutions for the original (unreduced) and the reduced LFT at the level of $\Lambda^m_2(F_\omega)$, they coincide. 
\begin{proposition}\label{prop:correspondance} The set of solutions of $(W^\mu_{L\eta},\lambda^\mu_{L\eta})$ is in one-to-one correspondence with the set of solutions of $(W^0_{q^*\overline{\mathcal{R}}_\mu},\lambda^0_{q^*\overline{\mathcal{R}}_\mu},\beta^0_\mu)$.
\end{proposition}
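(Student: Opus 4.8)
The plan is to lift both problems to the common space $\Lambda^m_2(F_\omega)$ using the machinery assembled in the technical lemmas, and there identify solutions of one with solutions of the other via the affine translation ${\rm t}_{\widehat{\omega}_\mu}$. Concretely, a solution $\Gamma$ of $(W^\mu_{L\eta},\lambda^\mu_{L\eta})$ in the sense of Proposition~\ref{prop:FieldTheoryEqsWL} lifts, by Lemma~\ref{lem:liftingGamma}, to a section $\widehat{\Gamma}\colon M\to ({\rm pr}_1^\omega)^*W^\mu_{L\eta}$ annihilating $Z'\lrcorner d\lambda_{L\eta}'$ for all $(\pi_1\circ\pi_{L\eta}\circ\Pi_{L\eta})$-vertical $Z'$; and this lift is a bijective correspondence on solutions. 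Symmetrically, Lemma~\ref{lem:liftingGamma2} gives the correspondence between solutions of $(W^0_{q^*\overline{\mathcal{R}}_\mu},\lambda^0_{q^*\overline{\mathcal{R}}_\mu},\beta^0_\mu)$ and sections of $({\rm pr}_2^\omega)^*W^0_{q^*\overline{\mathcal{R}}_\mu}$ annihilating $Z'\lrcorner(d\lambda'_{q^*\overline{\mathcal{R}}_\mu}+\beta')$ for a suitable pulled-back force $\beta'$ (this is the force version mentioned in Remark~\ref{rem:1}). So it suffices to match solutions of these two lifted problems inside $\Lambda^m_2(F_\omega)$.

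\textbf{Key steps.} First, I would invoke Proposition~\ref{prop:WlWR_relation}: the diffeomorphism ${\rm t}_{\widehat{\omega}_\mu}$ carries $({\rm pr}_2^\omega)^*W^0_{q^*\overline{\mathcal{R}}_\mu}$ onto $({\rm pr}_1^\omega)^*W^\mu_{L\eta}$, so a section of one bundle corresponds to a section of the other by composing with ${\rm t}_{\pm\widehat{\omega}_\mu}$. Second, I would apply Lemma~\ref{lem:TranslationSpaceForms}(ii) with $P=F_\omega$, $\alpha=\widehat{\omega}_\mu$, $W=({\rm pr}_2^\omega)^*W^0_{q^*\overline{\mathcal{R}}_\mu}$ and $W_\alpha=({\rm pr}_1^\omega)^*W^\mu_{L\eta}$: the content of that lemma is precisely that the equation $\widehat{\Gamma}_1^*(X\lrcorner d\lambda'_{L\eta})=0$ on the $W^\mu_{L\eta}$-side is equivalent, after translating by $-\widehat{\omega}_\mu$, to $\widehat{\Gamma}_2^*\big((T{\rm t}_{-\widehat{\omega}_\mu}\circ X)\lrcorner(d\lambda'_{q^*\overline{\mathcal{R}}_\mu}+d(i^*\psi^*\widehat{\omega}_\mu))\big)=0$ on the $W^0_{q^*\overline{\mathcal{R}}_\mu}$-side, using Lemma~\ref{lem:tech1} (and its analogue) to recognize the restricted canonical forms $\lambda'_{L\eta}$, $\lambda'_{q^*\overline{\mathcal{R}}_\mu}$. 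Third, I would use Lemma~\ref{lem:GyroForcesRelation}, which says exactly $\Pi^*_{q^*\overline{\mathcal{R}}_\mu}\beta^0_\mu=d\big[(i^\omega_{q^*\overline{\mathcal{R}}_\mu})^*\psi^*\widehat{\omega}_\mu\big]$, to identify the extra term $d(i^*\psi^*\widehat{\omega}_\mu)$ appearing in Lemma~\ref{lem:TranslationSpaceForms}(ii) with (the pullback of) the reduced force $\beta^0_\mu$. Finally, one checks that $T{\rm t}_{-\widehat{\omega}_\mu}$ maps the vertical vector fields for $\pi_1\circ\pi_{L\eta}\circ\Pi_{L\eta}$ onto those for $(\overline{\pi}\circ\overline{p})_1\circ\pi_{q^*\overline{\mathcal{R}}_\mu}\circ\Pi_{q^*\overline{\mathcal{R}}_\mu}$ — immediate since ${\rm t}_{\widehat{\omega}_\mu}$ covers the identity on $F_\omega$ and hence commutes with all the base projections — so the two families of test vector fields correspond, completing the equivalence. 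Composing the three correspondences (Lemma~\ref{lem:liftingGamma}, the translation on $F_\omega$, Lemma~\ref{lem:liftingGamma2}) yields the stated bijection.

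\textbf{Main obstacle.} The routine calculations are all packaged into the preceding lemmas, so the real care is bookkeeping: one must verify that the force $\beta'$ that naturally appears when running Lemma~\ref{lem:liftingGamma2} in its force-carrying version (Remark~\ref{rem:1}) is literally $(\Pi_{q^*\overline{\mathcal{R}}_\mu})^*\beta^0_\mu$ pulled further back, and that this matches the term produced by Lemma~\ref{lem:TranslationSpaceForms}(ii) via Lemma~\ref{lem:GyroForcesRelation} — i.e.\ that $i^*(\pi_P^*\alpha)$ in Lemma~\ref{lem:TranslationSpaceForms} with $\pi_P=\psi$ and $i=i^\omega_{q^*\overline{\mathcal{R}}_\mu}$ is exactly $(i^\omega_{q^*\overline{\mathcal{R}}_\mu})^*\psi^*\widehat{\omega}_\mu$. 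I expect this to be the only delicate point; once the identifications of the canonical forms and the force are pinned down, the correspondence of solutions is a formal consequence, and I would also note (as in the discussion following the statement) that holonomy of $\pi_{q^*\overline{\mathcal{R}}_\mu}\circ\Gamma$ is automatic here because the lifted sections come from genuine first jets $j^1s$, so the reduced solution indeed lands in $W^0_{q^*\overline{\mathcal{R}}_\mu}$ and satisfies the reduced LFT~\eqref{eq:reducedLFT}.
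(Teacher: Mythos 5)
Your proposal is correct and follows essentially the same route as the paper's proof: lift a solution via Lemma~\ref{lem:liftingGamma}, use Proposition~\ref{prop:WlWR_relation} together with Lemma~\ref{lem:TranslationSpaceForms}(ii) (with $\alpha=\widehat{\omega}_\mu$) to translate between the two pulled-back affine subbundles inside $\Lambda^m_2(F_\omega)$, identify the resulting exact term with $\Pi^*_{q^*\overline{\mathcal{R}}_\mu}\beta^0_\mu$ via Lemma~\ref{lem:GyroForcesRelation}, note that $T{\rm t}_{-\widehat{\omega}_\mu}$ matches the relevant vertical vector fields, and project down with Lemma~\ref{lem:liftingGamma2} (in its force-carrying version of Remark~\ref{rem:1}). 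The only slight overstatement is describing Lemma~\ref{lem:liftingGamma2} as a symmetric ``correspondence'': it only projects lifted sections down (Remark~\ref{rem:1} warns that lifting on the reduced side is obstructed), but since the solutions you feed it arise from translated lifts coming from the $W^\mu_{L\eta}$ side, this does not affect the argument, which is exactly how the paper proceeds.
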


\begin{proof} It is convienient to look at the commutative Diagram~\eqref{dia:aux2} below. Let $\Gamma\colon M\to W^\mu_{L\eta}$ be a solution of $(W^\mu_{L\eta},\lambda^\mu_{L\eta})$, we consider its lift $\widehat{\Gamma}\colon M\rightarrow\left({\rm pr}_1^\omega\right)^*W_{L\eta}^\mu$ to $\left({\rm pr}_1^\omega\right)^*W_{L\eta}^\mu$ given by Lemma~\ref{lem:liftingGamma}, which satisfies
\[
\widehat{\Gamma}^*\left(Z'\lrcorner d\lambda_{L\eta}'\right)=0, \quad \text{for all } Z'\in\mathfrak{X}^{V(\pi_1\circ\pi_{L\eta}\circ\Pi_{L\eta})}\big(\left({\rm pr}_1^\omega\right)^*W_{L\eta}^\mu\big).
\]
Apply now Lemma~\ref{lem:TranslationSpaceForms} to $P=\Lambda^m_2(F_\omega)$, $\alpha=\widehat{\omega}_\mu$, $W=\left(\text{pr}_2^\omega\right)^*W^0_{q^*\overline{\mathcal{R}}_\mu}$ and $W_\alpha=\left(\text{pr}_1^\omega\right)^*W_L^\mu$ (we are using Proposition~\ref{prop:WlWR_relation}), and we get the relation:
  \[
  \big({\rm t}_{-\widehat{\omega}_\mu}\circ\widehat{\Gamma}\big)^*\Big(\big(T{\rm t}_{-\widehat{\omega}_\mu}\circ Z'\big)\lrcorner d\lambda_{q^*\overline{\mathcal{R}}_\mu}'+d(i_{q^*\overline{\mathcal{R}}_\mu}^\omega)^*\psi^*\widehat{\omega}_\mu\Big)=0.
  \]
  
Note that ${\rm t}_{-\widehat{\omega}_\mu}\colon \Lambda^m_2(F_\omega)\to \Lambda^m_2(F_\omega)$ is a diffeomorphism which preserves the fibers of $\Lambda^m_2(F_\omega)\to F_\omega$ (and such that Diagram~\eqref{dia:aux2} commutes). Therefore 
\[
(T{\rm t}_{-\widehat{\omega}_\mu}\circ Z')\in \mathfrak{X}^{V((\overline{\pi}\circ\overline{p})_1\circ\pi_{q^*\overline{\mathcal{R}}_\mu})}\big(W^0_{q^*\overline{\mathcal{R}}_\mu}\big), 
\]
and any vector field in this set is of in this form. 
\begin{equation}\label{dia:aux2}
\begin{tikzcd}[column sep=1.5cm, row
    sep=1.5cm]
    & \Lambda^m_2 (F_\omega) & \\
    ({\rm pr}_1^\omega)^*W_{L\eta}^\mu \arrow[rr,swap,"{\rm t}_{-\widehat{\omega}_\mu}",bend right=15]\arrow[dr] \arrow[d,swap,"\Pi_{L\eta}"] \arrow[ur,hookrightarrow] & & (\text{pr}_2^\omega)^*W^0_{q^*\overline{\mathcal{R}}_\mu}\arrow[dl] \arrow[ll,swap,"{\rm t}_{\widehat{\omega}_\mu}",bend right=15] \arrow[d,"\Pi_{q^*\overline{\mathcal{R}}_\mu}"] \arrow[ul,hook'] \\
    W_{L\eta}^\mu \arrow[d,swap,"\pi_{L\eta}"] & F_\omega \arrow[dr,"{\rm pr}_2^\omega"]\arrow[dl,swap,"{\rm pr}_1^\omega"] & W^0_{q^*\overline{\mathcal{R}}_\mu} \arrow[d,"\pi_{q^*\overline{\mathcal{R}}_\mu}"]\\
    J^1\pi\arrow[dr,swap,"\pi_1"] &  & J^1(\overline{\pi}\circ\overline{p})\arrow[dl,"(\overline{\pi}\circ\overline{p})_1"]\\
    & M &
\end{tikzcd} 
\end{equation}
Applying Lemma \ref{lem:liftingGamma2} (see Remark~\ref{rem:1}), this means that
  \[
  \Pi_{q^*\overline{\mathcal{R}}_\mu}\circ {\rm t}_{-\widehat{\omega}_\mu}\circ\widehat{\Gamma}\colon M\to W^0_{q^*\overline{\mathcal{R}}_\mu}
  \]
is a solution of $(W^0_{q^*\overline{\mathcal{R}}_\mu},\lambda^0_{q^*\overline{\mathcal{R}}_\mu},\beta^0_\mu)$ , where we have used Lemma~\ref{lem:GyroForcesRelation} to identify
\[
d(i_{q^*\overline{\mathcal{R}}_\mu}^\omega)^*\psi^*\widehat{\omega}_\mu=\Pi_{q^*\overline{\mathcal{R}}_\mu}^*\beta^0_\mu.
\]
\end{proof}

We can now prove the main Routh reduction theorem for first order Lagrangian field theories.
  
\begin{theorem}[Reduction]\label{thm:main} Let $(\pi\colon E\rightarrow M,L\eta)$ be a $G$-invariant LFT and fix a (closed) value of the momentum map $\widehat{\mu}\in\Omega_1^{m-1}(W_{L\eta},\lag^*)$ and a principal connection $\omega$ on $E\to E/G$. Consider the reduced LFT
\[
\big((\overline{\pi}\circ\overline{p})\colon E/G_\mu\times\mathop{\rm Lin}{\left(\overline{\pi}^*TM,\widetilde{\g}\right)} \to M,\mathcal{R}_\mu^{\rm red},\beta_\mu^{\rm red}\big).
\]
Then every solution of the LFT $(\pi\colon E\rightarrow M,L\eta)$ with momentum $\widehat{\mu}$ projects onto a solution of the reduced LFT. The reduced solution is given by $\gamma^{\rm red}=f_\omega\circ\gamma$.
\end{theorem}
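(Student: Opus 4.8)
The plan is to reduce the statement to \Cref{prop:correspondance} by a clean bookkeeping of what a ``solution with momentum $\widehat{\mu}$'' means at the level of $W_{L\eta}$ and how its image under $f_\omega$ is recovered. First I would recall that by \Cref{prop:FieldTheoryEqsWL} a section $s\colon M\to E$ critical for $(\pi\colon E\to M,L\eta)$ and having momentum $\widehat{\mu}$ is encoded by a section $\Gamma\colon M\to W_{L\eta}$ which (i) covers $s$, (ii) satisfies $\Gamma^*(X\lrcorner d\lambda_{L\eta})=0$ for all $X\in\mathfrak{X}^{V(\pi_1\circ\pi_{L\eta})}(W_{L\eta})$, and (iii) takes values in the level set $W^\mu_{L\eta}=J^{-1}(\widehat{\mu})$. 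Since $\lambda^\mu_{L\eta}$ is the pullback of $\lambda_{L\eta}$ by the inclusion $W^\mu_{L\eta}\hookrightarrow W_{L\eta}$ and the momentum is conserved along solutions (the Noether proposition in the text), conditions (ii)--(iii) are exactly the statement that $\Gamma$ is a solution of $(W^\mu_{L\eta},\lambda^\mu_{L\eta})$ in the sense used in \Cref{prop:correspondance}. So the original solution is, up to this identification, a solution of $(W^\mu_{L\eta},\lambda^\mu_{L\eta})$.

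Next I would feed this into \Cref{prop:correspondance}: it produces a solution $\Gamma^0\colon M\to W^0_{q^*\overline{\mathcal{R}}_\mu}$ of $(W^0_{q^*\overline{\mathcal{R}}_\mu},\lambda^0_{q^*\overline{\mathcal{R}}_\mu},\beta^0_\mu)$, explicitly $\Gamma^0=\Pi_{q^*\overline{\mathcal{R}}_\mu}\circ{\rm t}_{-\widehat{\omega}_\mu}\circ\widehat{\Gamma}$ where $\widehat{\Gamma}$ is the lift of $\Gamma$ given by \Cref{lem:liftingGamma}. By observation 1) preceding \Cref{prop:correspondance}, a solution valued in $W^0_{q^*\overline{\mathcal{R}}_\mu}$ is the same as a solution of the reduced LFT~\eqref{eq:reducedLFT}, once we also know that $\pi_{q^*\overline{\mathcal{R}}_\mu}\circ\Gamma^0$ is holonomic (observation 2)). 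I would verify holonomy by unwinding the construction in \Cref{lem:liftingGamma}: there $\widehat\gamma(x)=(\gamma_1(x),j^1\gamma_2(x))$ with $\gamma_1=\pi_{L\eta}\circ\Gamma=j^1s$ (because $\Gamma$ comes from an actual critical section $s$, so the last Euler--Lagrange equation $\partial u^a/\partial x^k=u^a_k$ holds) and $\gamma_2=f_\omega\circ j^1s$. Then ${\rm pr}_2^\omega\circ\widehat\gamma=j^1\gamma_2=j^1(f_\omega\circ j^1s)$, and composing with $\pi_{q^*\overline{\mathcal{R}}_\mu}\circ\Pi_{q^*\overline{\mathcal{R}}_\mu}\circ{\rm t}_{-\widehat{\omega}_\mu}={\rm pr}_2^\omega\circ\psi\circ i^\omega_{q^*\overline{\mathcal{R}}_\mu}$ (which commutes on the relevant subbundle, as ${\rm t}_{-\widehat{\omega}_\mu}$ preserves fibers of $\Lambda^m_2(F_\omega)\to F_\omega$), one gets $\pi_{q^*\overline{\mathcal{R}}_\mu}\circ\Gamma^0=j^1(f_\omega\circ j^1s)=j^1\gamma^{\rm red}$ with $\gamma^{\rm red}=f_\omega\circ\gamma$ and $\gamma=j^1s$; in particular it is holonomic. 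Invoking observation 2), $\Gamma^0$ is a solution of the reduced LFT~\eqref{eq:reducedLFT}, and its projection to $J^1(\overline{\pi}\circ\overline{p})$ is $j^1\gamma^{\rm red}$, which is precisely the assertion that the reduced extremal is $\gamma^{\rm red}=f_\omega\circ\gamma$.

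The last point to pin down is that everything projects compatibly over $M$: the section $s$ descends to $[s]_{G_\mu}$ and $[s,\omega\circ Ts]_G$, i.e. to a section of $E/G_\mu\times\mathrm{Lin}(\overline{\pi}^*TM,\widetilde{\g})\to M$, and $f_\omega\circ j^1s$ is its first jet precisely because of the definition of $f_\omega$ and the chain rule — this is where one uses that $f_\omega$ is built from $p^E_G$ and the connection $\omega$, so $j^1(f_\omega\circ j^1s)=$ (first jet of the reduced section). I would phrase this compatibility as a short computation rather than belabor it. The main obstacle I anticipate is the bookkeeping in the holonomy check: making sure the various projections ${\rm pr}_1^\omega,{\rm pr}_2^\omega,\psi,i^\omega_{q^*\overline{\mathcal{R}}_\mu},\Pi_{q^*\overline{\mathcal{R}}_\mu},{\rm t}_{-\widehat{\omega}_\mu}$ compose the way \Cref{dia:aux2} suggests, and in particular that ${\rm t}_{-\widehat{\omega}_\mu}$ does not disturb the base point in $F_\omega$, so that $\pi_{q^*\overline{\mathcal{R}}_\mu}\circ\Gamma^0$ is literally the first jet of the reduced section and not merely a section of $J^1(\overline{\pi}\circ\overline{p})$ projecting onto it; once that is checked, the theorem follows by assembling \Cref{lem:liftingGamma}, \Cref{prop:WlWR_relation}, \Cref{lem:TranslationSpaceForms}, \Cref{lem:GyroForcesRelation} and \Cref{prop:correspondance} together with the two itemized observations on the reduced LFT.
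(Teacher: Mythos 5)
Your proposal is correct and follows essentially the same route as the paper: encode the solution via Proposition~\ref{prop:FieldTheoryEqsWL}, use momentum conservation to land in $W^{\mu}_{L\eta}$, apply Proposition~\ref{prop:correspondance}, and conclude with the holonomy observation preceding it; your explicit unwinding of the diagram chase showing $\pi_{q^*\overline{\mathcal{R}}_\mu}\circ\Gamma^{\rm red}=j^1(f_\omega\circ j^1 s)$ is precisely the step the paper leaves as ``by diagram chasing''.
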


\begin{proof} If $\gamma\colon M\to E$ is a solution of the LFT $(\pi\colon E\rightarrow M,L\eta)$, then we construct $\Gamma\colon M\to W_{L\eta}$ solution of $(W_{L\eta},\lambda_{L\eta})$. By the momentum constraint, we have $\Gamma\colon M\to W^\mu_{L\eta}$, so $\Gamma$ is a solution of $(W^\mu_{L\eta},\lambda^\mu_{L\eta})$. Now we apply Proposition~\ref{prop:correspondance} and we get a solution $\Gamma^{\rm red}$ of $(W^0_{q^*\overline{\mathcal{R}}_\mu},\lambda^0_{q^*\overline{\mathcal{R}}_\mu},\beta^0_\mu)$. 

By diagram chasing (see the proof of Lemma~\ref{lem:liftingGamma}), it is not hard to see that 
\[
\pi_{q^*\overline{\mathcal{R}}_\mu}\circ\Gamma^{\rm red}=j^1\gamma^{\rm red}, 
\]
with $\gamma^{\rm red}=f_\omega(\gamma)$. In view of the observations above, this means that $\widehat{\Gamma}$ is a solution of the variational problem on $W_{q^*\overline{\mathcal{R}}_\mu}$, and therefore $\gamma^{\rm red}\colon M\to E/G_\mu\times\mathop{\rm Lin}{\left(\overline{\pi}^*TM,\widetilde{\g}\right)}$ is a solution of the reduced LFT.
\end{proof}

\section{Reconstruction}\label{sec:reconstruction}

In general, the problem of reconstruction in geometric reduction addresses the following two questions: 
\begin{enumerate}[label={\arabic*)},nolistsep]
\item  Given a solution of the reduced system, is it always possible to find a solution of the original (unreduced) system projecting onto it?
\item If the answer to the previous question is affirmative, how does one effectively construct such a solution?
\end{enumerate}

In Lagrangian mechanics, both the Lagrange-Poincar\'e and the Routh reduction schemes provide reduced systems which are equivalent to the unreduced ones\footnote{In the case of Routh reduction, one reconstructs \emph{only} solutions with a fixed momentum.}. However, in the case of Lagrangian field theory, this is not the case as was first observed in~\cite{Castrillon0} in the context of Euler-Poincar\'e reduction (i.e. Lagrange-Poincar\'e reduction for a Lie group). In this section, we will show that in the case of Routh reduction for field theories there is also an obstruction to reconstruction which coincides with that of the Lagrange-Poincar\'e case~\cite{Castrillon1,Castrillon2,LagPoincare_JGP}.

\subsection{Lifting sections on reduced jet bundles}

Consider a $G$-principal fiber bundle $p^P_G\colon P\to P/G$ for which there are two fibrations $\pi\colon P\to M$ and $\overline{\pi}\colon P/G\to M$ making Diagram~\eqref{dia:covering} (left) commutative. Given a section $\zeta\colon M\to P/G$, we want to find conditions to ensure that there exists a section $s\colon M\to P$ covering $\zeta$, i.e. such that $p^P_G\circ s=\zeta$. To answer this question, we look at the pullback bundle $\zeta^*P$, see  Diagram~\eqref{dia:covering} (right). 
\begin{equation}\label{dia:covering}
 \begin{tikzcd}[column sep=1cm, row
    sep=1.4cm]
    P \arrow[rr,"p^P_G"]\arrow[dr,swap,"\pi"] & & [1ex]P/G \arrow[dl,"\overline{\pi}"]\\
    & M \arrow[ur,dashed,bend right=45,swap,"\zeta"]\arrow[ul,dashed,bend left=45,"s"]&
  \end{tikzcd}\hspace{3em}
  \begin{tikzcd}[column sep=1.8cm, row
    sep=1.4cm]
    \zeta^*P \arrow[r,"\text{pr}_2"]\arrow[d,swap,"\text{pr}_1"] &[1ex] P\arrow[d,"p^P_G"]\\
     M\arrow[r,swap,"\zeta"] & P/G 
  \end{tikzcd}
\end{equation}

The pullback bundle $\text{pr}_1\colon \zeta^*P\to M$ is a $G$-principal bundle with action 
\[
 g\cdot (x,p)=(x,g\cdot p).
\]
For later use, we recall that tangent space $T_{\left(x,p\right)}\zeta^*P$ at $\left(x,p\right)\in\zeta^*P$ is given by
\[
T_{\left(x,p\right)}\zeta^*P=\left\{\left(v_x,V_p\right):T_x\zeta\left(v_x\right)=T_pp_G^P\left(V_p\right)\right\}\subset T_xM\times T_pP.
\]
\begin{lemma}\label{lem:TrivialToSections}
There exists a section $s:M\rightarrow P$ covering the section $\zeta\colon M\to P/G$ if and only if $\zeta^*P$ is a trivial bundle.
\end{lemma}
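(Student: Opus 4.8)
The plan is to prove the equivalence by exhibiting a direct correspondence between sections $s\colon M\to P$ covering $\zeta$ and global sections of the principal bundle $\mathrm{pr}_1\colon \zeta^*P\to M$, together with the classical fact that a principal bundle is trivial if and only if it admits a global section.

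First I would treat the forward direction. Suppose $s\colon M\to P$ is a section with $p^P_G\circ s=\zeta$. Then I would define $\widehat{s}\colon M\to \zeta^*P$ by $\widehat{s}(x)=(x,s(x))$. The covering condition $p^P_G(s(x))=\zeta(x)$ is exactly what is needed for $(x,s(x))$ to lie in $\zeta^*P=\{(x,p): \zeta(x)=p^P_G(p)\}$, so $\widehat{s}$ is well defined, and $\mathrm{pr}_1\circ\widehat{s}=\mathrm{id}_M$ is immediate, so $\widehat{s}$ is a global section of $\mathrm{pr}_1$. Since $\mathrm{pr}_1\colon\zeta^*P\to M$ is a $G$-principal bundle (with the action recalled in the statement), the existence of a global section implies it is trivial: concretely, $(x,g)\mapsto g\cdot\widehat{s}(x)$ is a $G$-equivariant diffeomorphism $M\times G\to\zeta^*P$ over $M$, with smooth inverse because the action is free and proper. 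This gives the "only if" implication.

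Conversely, suppose $\zeta^*P$ is trivial. A trivialization $M\times G\xrightarrow{\ \sim\ }\zeta^*P$ over $M$ restricts (along $x\mapsto(x,e)$) to a global section $\widehat{s}\colon M\to\zeta^*P$ of $\mathrm{pr}_1$; equivalently one invokes the standard fact that a trivial principal bundle admits a global section. Write $\widehat{s}(x)=(x,s(x))$ with $s:=\mathrm{pr}_2\circ\widehat{s}\colon M\to P$. Because $\widehat{s}(x)\in\zeta^*P$, by the description of $\zeta^*P$ we have $p^P_G(s(x))=\zeta(x)$, i.e. $p^P_G\circ s=\zeta$, so $s$ covers $\zeta$; smoothness of $s$ is clear since it is a composition of smooth maps. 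This proves the "if" implication and completes the argument.

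I do not expect a genuine obstacle here: the content is essentially the tautological identification of lifts of $\zeta$ with sections of $\zeta^*P$, plus the elementary equivalence "principal bundle trivial $\iff$ admits a global section." The only point requiring a little care is checking that $\widehat{s}$ is genuinely smooth as a map into $\zeta^*P$ (rather than just into $M\times P$), but this follows since $\zeta^*P$ is an embedded submanifold of $M\times P$ and $\widehat{s}$ is smooth into the latter with image in the former; and dually, that the trivialization produces a \emph{smooth} section $s$, which is automatic. No connection or momentum-map machinery from the earlier sections is needed for this lemma.
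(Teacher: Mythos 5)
Your proof is correct and follows essentially the same route as the paper: both reduce the statement to the classical fact that a principal bundle is trivial iff it admits a global section, and use the tautological correspondence $s\mapsto(x\mapsto(x,s(x)))$ and $\tilde s\mapsto\mathrm{pr}_2\circ\tilde s$ between lifts of $\zeta$ and sections of $\zeta^*P$. The extra smoothness checks you include are fine but not needed beyond what the paper already takes for granted.
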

\begin{proof} Since $\zeta^*P$ is principal, it is trivial if and only if it admits a section $\tilde s\colon M\to \zeta^*P$. If $\tilde s$ exists, then $s=\text{pr}_2\circ \tilde s$ is the desired section. Conversely, if $s\colon M\to P$ exists, then $\tilde s\colon M\to \zeta^*P\subset M\times P$ is defined by $\tilde s(x)=(x,s(x))$.
\end{proof}

Using that $\zeta^*P$ is a principal bundle,  being trivial can be characterized in terms of a flat connection~\cite{KN}:

\begin{theorem}\label{thm:conn-induc-pullb}
  Let $\pi:P\rightarrow M$ be a $G$-principal bundle with $M$ simply connected. Then $P$ is trivial if and only if there exists a flat connection on $P$. 
\end{theorem}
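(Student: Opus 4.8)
The plan is to prove both implications separately, using the standard correspondence between flat connections and coverings by parallel sections, together with the hypothesis that $M$ is simply connected.

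\medskip

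\textbf{From a flat connection to triviality.} Suppose $P\to M$ carries a flat connection, i.e.\ a principal connection whose curvature 2-form vanishes. First I would fix a point $x_0\in M$ and a point $p_0$ in the fibre over $x_0$. Because the connection is flat, the horizontal distribution $H\subset TP$ is involutive (Frobenius), so through $p_0$ there passes a maximal integral leaf $\Sigma\subset P$. The key facts to establish are: (i) the restriction $\pi|_\Sigma\colon \Sigma\to M$ is a covering map — this is because $\pi|_\Sigma$ is a local diffeomorphism (the leaf is transverse to the fibres, matching dimensions) and has the path-lifting property coming from horizontal lifts of paths in $M$; (ii) since $M$ is simply connected and $\Sigma$ is connected, $\pi|_\Sigma$ is a diffeomorphism, hence $\Sigma$ is the image of a global section $s\colon M\to P$. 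Once such a section exists, the map $M\times G\to P$, $(x,g)\mapsto g\cdot s(x)$, is a $G$-equivariant diffeomorphism, which is exactly a trivialization of $P$. The main subtlety here is the verification that $\pi|_\Sigma$ is a covering (as opposed to merely an étale map): one must argue that horizontal lifts of paths are defined for all time, which follows from the fact that horizontal lifts of a path in $M$ project to that path and live in the (proper) $G$-orbit structure, so no escape to infinity occurs in finite parameter time. I expect this completeness argument to be the one place that needs care.

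\medskip

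\textbf{From triviality to a flat connection.} Conversely, if $P\cong M\times G$ as a principal bundle, transport the canonical flat connection on the trivial bundle $M\times G\to M$ (whose connection form is the pullback of the Maurer--Cartan form on $G$ under the projection to $G$, and whose horizontal spaces are the tangent spaces to the slices $M\times\{g\}$) back to $P$ via the trivialization. Its curvature vanishes because the horizontal distribution is integrable (the leaves are the slices), so $P$ admits a flat connection. This direction is essentially immediate.

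\medskip

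\textbf{Remark on context.} Since in the paper this theorem is applied to $P=\zeta^*P$ (notation of Lemma~\ref{lem:TrivialToSections}) with the hypothesis that $M$ is simply connected, both hypotheses needed above are available. I would therefore present the argument above as the proof, perhaps citing~\cite{KN} for the standard Frobenius/covering-space package and focusing the written exposition on the connectedness of the leaf and the completeness of horizontal lifts, which together with simple connectedness of $M$ upgrade the leaf to the graph of a global section.
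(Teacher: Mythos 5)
Your proposal is correct and follows essentially the same route as the paper: the converse direction takes an integral leaf of the flat horizontal distribution and uses simple connectedness of $M$ (trivial holonomy, equivalently the covering-space argument you spell out, with completeness of horizontal lifts being the standard fact for principal connections) to see that the leaf is the image of a global section, hence a trivialization; the forward direction is the canonical flat connection on a trivial bundle. The paper's proof is just a terser version of the same argument.
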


\begin{proof} Obviously if $P\simeq M\times H$ is trivial we can consider the canonical flat connection on $P$. Conversely, given a flat connection we take an integral leaf $L$ of the horizontal distribution and $\pi^{-1}(x)\cap L$ has a unique element (since $M$ is simply connected, the connection has trivial honolomy), and this defines a section of $P\to M$. 
\end{proof}
If $M$ is not simply connected, then one can ask for a flat connection with trivial holonomy and obtain a similar result. For the sake of simplicity, we will assume that $M$ is simply connected to apply Theorem~\ref{thm:conn-induc-pullb} when needed. For later use, we also observe that the section constructed in the proof of Theorem~\ref{thm:conn-induc-pullb} has horizontal image w.r.t. the given connection.

We now wish to apply the previous discussion to the case of jet bundles. We start with the first jet $J^1\pi$ of a bundle $\pi\colon P\to M$ and construct the quotients $P/G$ and $J^1\pi/G$.  More concretely, we look at the situation is depicted in Diagram~\eqref{dia:covering2} (left): $Z\colon M\to J^1\pi/G$ is a given section  and $\zeta\colon M\to P/G$ is the induced section. The basic question we want to address is the following: does there exist a holonomic section $\widehat{Z}:M\rightarrow J^1\pi$ such that $p_G^{J^1\pi}\circ\widehat{Z}=Z$?
\begin{equation}\label{dia:covering2}
  \begin{tikzcd}[column sep=1cm, row
    sep=1cm]
    J^1\pi \arrow[rr,"p^{J^1\pi}_G"]\arrow[d,"\pi_{10}"] &&J^1\pi/G\arrow[d,swap,"\overline{\pi}_{10}"]\\
     M\arrow[rr,"p^P_G"]\arrow[dr,"\pi"] & & J^1\pi/G \arrow[dl,swap,"\overline{\pi}"] \\
     & M\arrow[ur,swap,dashed,"\zeta",bend right=25]\arrow[uul,dashed,"\widehat{Z}",bend left=85]\arrow[uur,swap,dashed,"Z",bend right=85]&
  \end{tikzcd}
  \hspace{1em}
  \begin{tikzcd}[column sep=2cm, row
    sep=1.8cm]
    Z^*\left(J^1\pi\right) \arrow[r,"\text{pr}_2"]\arrow[d,swap,"\text{pr}_1"] &J^1\pi\arrow[d,"p_G^{J^1\pi}"]\\
     M\arrow[r,swap,"Z"] & J^1\pi/G
  \end{tikzcd}
\end{equation}
We remark that $J^1\pi\to J^1\pi/G$ is a principal bundle. We can then construct the pullback bundle $Z^*(J^1\pi)$ (Diagram~\eqref{dia:covering2}, right) and particularize Lemma~\ref{lem:TrivialToSections} to conclude the following:
\begin{lemma}\label{lem:FlatConnectionIffSectionJ1pi}
Assume that $M$ is simply connected. Then $Z^*\left(J^1\pi\right)$ admits a flat connection if and only if there exists a section $\widehat{Z}:M\rightarrow J^1\pi$.
\end{lemma}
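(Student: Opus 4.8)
The plan is to obtain the statement by chaining together two results already at our disposal: Lemma~\ref{lem:TrivialToSections}, which reduces the existence of a covering section to the triviality of a pullback bundle, and Theorem~\ref{thm:conn-induc-pullb}, which characterizes triviality of a principal bundle over a simply connected base through the existence of a flat connection. So the proof will be essentially a particularization of the general discussion of Section~\ref{sec:reconstruction} to $P=J^1\pi$.

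First I would record the structural observation that $p_G^{J^1\pi}\colon J^1\pi\to J^1\pi/G$ is a $G$-principal bundle (noted just before the statement), so that its pullback $\mathrm{pr}_1\colon Z^*(J^1\pi)\to M$ along $Z\colon M\to J^1\pi/G$ is again a $G$-principal bundle, with the action $g\cdot(x,p)=(x,g\cdot p)$ displayed in Diagram~\eqref{dia:covering2}. This is what allows Theorem~\ref{thm:conn-induc-pullb} to be applied to $Z^*(J^1\pi)$.

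Next I would apply Lemma~\ref{lem:TrivialToSections} verbatim, with the roles $P\rightsquigarrow J^1\pi$, $p_G^P\rightsquigarrow p_G^{J^1\pi}$ and $\zeta\rightsquigarrow Z$: there exists a section $\widehat{Z}\colon M\to J^1\pi$ covering $Z$ (i.e.\ $p_G^{J^1\pi}\circ\widehat{Z}=Z$) if and only if $Z^*(J^1\pi)$ is trivial. Then, since $M$ is assumed simply connected, Theorem~\ref{thm:conn-induc-pullb} applied to the $G$-principal bundle $Z^*(J^1\pi)\to M$ gives that $Z^*(J^1\pi)$ is trivial if and only if it admits a flat connection. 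Concatenating the two equivalences yields exactly the claimed statement.

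There is no real obstacle here; the only point deserving a word of care is the bookkeeping in identifying $\widehat{Z}$ with $\mathrm{pr}_2\circ\widetilde{s}$ for the section $\widetilde{s}\colon M\to Z^*(J^1\pi)$ produced by the triviality, so that $\widehat{Z}$ indeed covers $Z$. I would also point out in a remark that Lemma~\ref{lem:TrivialToSections} by itself does \emph{not} guarantee that $\widehat{Z}$ is holonomic (which is the finer question posed around Diagram~\eqref{dia:covering2}); that refinement is the subject of the subsequent discussion and is not needed for the present lemma.
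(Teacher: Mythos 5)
Your proposal is correct and coincides with the paper's own argument: the lemma is obtained precisely by particularizing Lemma~\ref{lem:TrivialToSections} to the principal bundle $p_G^{J^1\pi}\colon J^1\pi\to J^1\pi/G$ and then invoking Theorem~\ref{thm:conn-induc-pullb} for $Z^*(J^1\pi)\to M$ under the simple connectedness assumption. Your closing remark that holonomicity is a separate, finer issue (handled later in Theorem~\ref{thm:conn-lift-sect}) is also in line with the paper.
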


There is also a map $r:J^1\pi/G\rightarrow J^1\overline{\pi}$, $[j^1_x s]_G\mapsto j^1_x[s]_G$ making the following diagram commutative
\begin{equation}\label{eq:QpJ1piJ1opi}
    \begin{tikzcd}[ampersand replacement=\&, column sep=1.8cm, row sep=1.5cm]
      J^1\pi\arrow{r}{p_G^{J^1\pi}}
      \arrow{dr}[swap]{j^1p_G^P}
      \&
      J^1\pi/G
      \arrow{d}{r}
      \\
      \&
      J^1\overline{\pi}
    \end{tikzcd}
\end{equation}
As before, we denote by $\theta\in\Omega^1(J^1\pi,V\pi)$ the canonical contact form on $J^1\pi$.

\begin{theorem}\label{thm:conn-lift-sect}
Let $Z:M\rightarrow J^1\pi/G$ be a section of the quotient bundle such that
\[
r\left({Z}\left(x\right)\right)=j^1_x\zeta
\]
  where $\zeta:=\overline{\pi}_{10}\circ Z:M\rightarrow P/G$.
  \begin{enumerate}[label={\arabic*)},nolistsep]
  \item Suppose that there exists a holonomic section $\widehat{Z}:M\rightarrow J^1\pi$ such that $p_G^{J^1\pi}\circ\widehat{Z}=Z$. Then for any connection $\omega_P$ on the principal bundle $p_G^P:P\rightarrow P/G$, the connection
    \[
    \omega^Z=\omega_P\circ\left({\rm pr}_2\right)^*\theta\in\Omega^1\left(Z^*(J^1\pi),\lag\right)
    \]
    is a flat connection on $Z^*\left(J^1\pi\right)$.
  \item Conversely, suppose that for some connection $\omega_P$ on $P\to P/G$ (and hence for all) the connection
    \[
    \omega^Z=\omega_P\circ\left({\rm pr}_2\right)^*\theta\in\Omega^1\left(Z^*(J^1\pi),\lag\right)
    \]
    is a flat connection on $Z^*\left(J^1\pi\right)$. Then the associated section $\widehat{Z}:M\rightarrow J^1\pi$ through Lemma \ref{lem:FlatConnectionIffSectionJ1pi} is holonomic.
  \end{enumerate}
\end{theorem}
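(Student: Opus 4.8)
The plan is to unwind both statements to the flatness of $\omega^Z$, using the characterization of horizontality via the contact form $\theta$ and the key observation that a section $\widehat{Z}\colon M\to J^1\pi$ with $p_G^{J^1\pi}\circ\widehat{Z}=Z$ is holonomic precisely when $\widehat{Z}^*\theta=0$ (equivalently, when the underlying section $s=\pi_{10}\circ\widehat{Z}\colon M\to P$ satisfies $j^1s=\widehat{Z}$). The crucial bridge is the following: via Lemma~\ref{lem:FlatConnectionIffSectionJ1pi}, the section $\widehat{Z}$ corresponds to a section $\tilde s\colon M\to Z^*(J^1\pi)$ with $\widehat{Z}={\rm pr}_2\circ\tilde s$, and --- by the remark following Theorem~\ref{thm:conn-induc-pullb} --- when $\widehat{Z}$ is obtained from a flat connection its image is horizontal with respect to that connection. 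So everything reduces to relating horizontality of $\tilde s$ w.r.t. $\omega^Z$ with the vanishing of $\widehat{Z}^*\theta$. Since $\omega^Z=\omega_P\circ({\rm pr}_2)^*\theta$, we have $\tilde s^*\omega^Z=\omega_P\circ(\text{pr}_2\circ\tilde s)^*\theta=\omega_P\circ\widehat{Z}^*\theta$; and because $\theta$ is $V\pi$-valued while $\omega_P$ restricted to $V\pi$ (the fibres of $p_G^P$ being contained in the fibres of $\pi$, as per Diagram~\eqref{dia:covering2}) is injective on the vertical part, the condition $\tilde s^*\omega^Z=0$ is equivalent to $\widehat{Z}^*\theta$ being $p_G^P$-horizontal. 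The hypothesis $r(Z(x))=j^1_x\zeta$ is exactly what forces $\widehat{Z}^*\theta$ to already be $p_G^P$-\emph{vertical}-valued along $\tilde s$ (it measures the failure of $\widehat{Z}$ to be the prolongation of $\pi_{10}\circ\widehat{Z}$, and this failure is $G$-vertical once the projected section $j^1[s]_G=r(Z)$ is holonomic); hence $\widehat{Z}^*\theta=0$, i.e. $\widehat{Z}$ is holonomic, iff $\tilde s^*\omega^Z=0$.

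Concretely, for part 1) I would start from a holonomic $\widehat{Z}$, write $s=\pi_{10}\circ\widehat{Z}$ so that $\widehat{Z}=j^1s$ and $\widehat{Z}^*\theta=0$. Define $\tilde s(x)=(x,\widehat{Z}(x))\in Z^*(J^1\pi)$, which is a global section, so $Z^*(J^1\pi)$ is trivial and admits a flat connection; it remains to check that the specific $\omega^Z=\omega_P\circ({\rm pr}_2)^*\theta$ is one such flat connection. That $\omega^Z$ is a connection form: it is $\lag$-valued, and on the infinitesimal generator $\xi_{Z^*(J^1\pi)}(x,p)=(0,\xi_{J^1\pi}(p))$ one computes $({\rm pr}_2)^*\theta(\xi_{Z^*(J^1\pi)})=\theta(\xi_{J^1\pi})=\xi_P$ (the generator on $P$, using~\eqref{eq:contactstructuredefinition} and that $T\pi_{10}$ carries $\xi_{J^1\pi}$ to $\xi_P$ while $T\pi_1$ kills it), so $\omega^Z(\xi_{Z^*(J^1\pi)})=\omega_P(\xi_P)=\xi$; equivariance follows from equivariance of both $\theta$ and $\omega_P$. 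For flatness, the horizontal distribution of $\omega^Z$ is exactly $\ker({\rm pr}_2)^*\theta$ intersected with the horizontal lift directions; the section $\tilde s$ built from $\widehat{Z}=j^1s$ has $\tilde s^*\omega^Z=\omega_P\circ(j^1s)^*\theta=0$, so the image of $\tilde s$ is a horizontal integral submanifold of dimension $m=\dim M$, transverse to the fibres, and its $G$-translates foliate $Z^*(J^1\pi)$ by horizontal leaves --- this exhibits the horizontal distribution as integrable, hence $\omega^Z$ is flat. For part 2), conversely, given that $\omega^Z$ is flat, Theorem~\ref{thm:conn-induc-pullb} (with $M$ simply connected) produces a section $\tilde s\colon M\to Z^*(J^1\pi)$ whose image is horizontal, i.e. $\tilde s^*\omega^Z=0$; setting $\widehat{Z}={\rm pr}_2\circ\tilde s$ gives $\omega_P\circ\widehat{Z}^*\theta=\tilde s^*\omega^Z=0$, so $\widehat{Z}^*\theta$ takes values in $\ker\omega_P\cap V\pi$; combined with the constraint $r(Z)=j^1\zeta$, which forces $\widehat{Z}^*\theta$ into the $p_G^P$-vertical subbundle of $V\pi$, one concludes $\widehat{Z}^*\theta=0$, i.e. $\widehat{Z}$ is holonomic. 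The independence of the choice of $\omega_P$ (``for some, hence for all'') follows because any two connections on $P\to P/G$ differ by a $\widetilde{\lag}$-valued horizontal form, which contributes nothing when composed with the $V\pi$-valued $\widehat{Z}^*\theta$ once the vanishing is established --- more precisely, flatness of $\omega^Z$ is equivalent to $\widehat{Z}$ being holonomic, a statement not referring to $\omega_P$ at all.

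The main obstacle I anticipate is making precise the claim that the hypothesis $r(Z(x))=j^1_x\zeta$ forces $\widehat{Z}^*\theta$ to be $p_G^P$-vertical-valued (rather than merely $\pi$-vertical). This is the field-theoretic analogue of the mechanical reconstruction condition and is where the geometry of the commutative diagram~\eqref{eq:QpJ1piJ1opi} must be used carefully: one needs that $j^1p_G^P\circ\widehat{Z}=r\circ p_G^{J^1\pi}\circ\widehat{Z}=r\circ Z=j^1\zeta=j^1([s]_G)$, so that $(j^1p_G^P\circ\widehat{Z})$ is the prolongation of $p_G^P\circ s=\zeta$; intertwining this with the intrinsic formula~\eqref{eq:contactstructuredefinition} for $\theta$ on $J^1\pi$ versus $J^1\overline{\pi}$ shows that $Tp_G^P$ annihilates $\widehat{Z}^*\theta$, which is exactly the assertion that $\widehat{Z}^*\theta$ is valued in $\ker Tp_G^P=V(p_G^P)$ inside $V\pi$. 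I would isolate this as a short computational lemma (or perform it inline) before assembling the two implications, as it is the genuine content that distinguishes this situation from the classical one where $P/G$ has no base to prolong over.
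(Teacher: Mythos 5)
Your proposal is correct and takes essentially the same route as the paper's proof: in part 1) you identify $\ker\omega^Z$ with the integrable distribution spanned by the $G$-translates of the lifted section's image, your dimension count via the connection-form property of $\omega^Z$ playing the role of the paper's direct bound $\dim\ker\omega^Z\le\dim M$ (both rest on the same fact that $\theta(\xi_{J^1\pi})=\xi_P\neq 0$). In part 2) your argument is the paper's in disguise: the observation that $r(Z)=j^1\zeta$ forces $\widehat{Z}^*\theta=T_xs-\widehat{Z}(x)$ to be $p^P_G$-vertical while $\tilde s^*\omega^Z=0$ makes it valued in $\ker\omega_P$, hence zero, is exactly the paper's step of writing $\widehat{Z}(x)=T_xs+\Gamma$ with $\Gamma$ horizontal and concluding $\Gamma=0$ from $Tp^P_G(\Gamma)=0$.
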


\begin{proof} If $\phi_g\colon P\to P$ denotes the action $g\cdot p=\phi_g(p)$, then $L_g\colon J^1\pi\to J^1\pi$ denotes the prolonged action $g\cdot j^1_xs=L_g(j^1_xs)=j^1\phi_g(j^1_xs)$.

\noindent 1) Using the section $\widehat{Z}:M\rightarrow J^1\pi$, we have a  connection on $Z^*\left(J^1\pi\right)$ whose horizontal subspaces at $(x,j^1_x s=g\cdot \widehat{Z}(x))$ are given by:
    \[
    H^{\widehat{Z}}_{\left(x,j_x^1s\right)}=\left\{\big(v_x,T_{\widehat{Z}(x)}L_g\circ T_x\widehat{Z}(v_x)\big):v_x\in T_xM\right\},
    \]
This distribution is integrable: this follows from the fact that brackets of left-invariant vector fields are left-invariant. Hence $H^{\widehat{Z}}$ is a flat connection. Since $\widehat{Z}$ is holonomic, it satisfies $\widehat{Z}^*\theta=0$ and then using the invariance of the contact structure, $(L_g)^*\theta=\theta$, we find:
    \begin{align*}
    \left.\left(\text{pr}_2\right)^*\theta\right|_{\left(x,j_x^1\right)}\big(v_x,T_{\widehat{Z}(x)}L_g\circ T_x\widehat{Z}\left(v_x\right)\big)&=\left.\theta\right|_{j_x^1s}\big(T_{\widehat{Z}(x)}L_g\circ T_x\widehat{Z}\left(v_x\right)\big)\\&=(\widehat{Z}^*\theta)|_x(v_x)=0.
    \end{align*}
This implies that 
\begin{equation}\label{eq:eqaux1}
H^{\widehat{Z}}\subset\ker(\omega^{Z}).   
\end{equation}
Next, we will show that  $\dim \ker(\omega^{Z})\leq \dim M$. Together with~\eqref{eq:eqaux1}, this proves that $ H^{\widehat{Z}}=\ker(\omega^{Z})$.

We observe the following: $\theta$ is $V\pi$-valued and $\omega_P$ restricts to the identity on $V\pi$. Therefore a tangent vector $(v_x,V_{j^1_xs})$ to $Z^*(J^1\pi)$ at a point $(x,j^1_xs)$ $Z^*(J^1\pi)$ will belong to $\ker(\omega^Z)$ if, and only if, $0=\text{pr}_2^*\theta(v_x,V_{j^1_xs})=\theta(V_{j^1_xs})$. Let us assume that we have two different tangent vectors $(v_x,V_{j^1_xs})$ and $(v_x,W_{j^1_xs})$ such that $\theta(V_{j^1_xs})=\theta(W_{j^1_xs})=0$. Then the conditions $T_xZ(v_x)=T_{j^1_xs}p^{J^1\pi}_G(V_{j^1_xs})$ and $T_xZ(v_x)=T_{j^1_xs}p^{J^1\pi}_G(W_{j^1_xs})=0$ imply that $(V_{j^1_xs}-W_{j^1_xs})$ is vertical w.r.t. $p^{J^1\pi}_G$. But then it is of the form $\xi_{J^1\pi}(j^1_xs)$ for some $\xi\in\lag$ different from $0$. Since $\xi_{J^1\pi}=j^1\xi_P$ is the prolongation of the vertical vector field $\xi_P$, it follows that $\theta(\xi_{J^1\pi})=\xi_P\neq 0$ (this can be checked easily with the coordinate expression of the prolongation, see~\cite{GeometryLagrangianFieldTheories}), which is not possible. Hence for each $v_x\in T_xM$ there is at most one choice of $V_{j^1_xs}$ such that $\theta(V_{j^1_xs})=0$ and $(v_x,V_{j^1_xs})\in T_{(x,j^1_xs)}Z^*(J^1\pi)$. It follows that $\dim \ker(\omega^{Z})\leq \dim M$.

\noindent 2) If $\omega^{Z}$ is a flat connection on $Z^*(J^1\pi)$, let $\widehat{Z}:M\rightarrow J^1\pi$ be the corresponding section via Lemma \ref{lem:FlatConnectionIffSectionJ1pi}. We have already shown that
  \[
    \ker(\omega^Z)_{\left(x,j_x^1s\right)}=\left\{\big(v_x,T_{\widehat{Z}(x)}L_g\circ T_x\widehat{Z}(v_x)\big):v_x\in T_xM\right\},
\]
where $(x,j^1_x s=g\cdot \widehat{Z}(x))$. Recalling the definition of the contact structure~\eqref{eq:contactstructuredefinition}, we have 
    \begin{align*}
      \left.\left(\text{pr}_2\right)^*\theta\right|_{\left(x,j^1_xs\left(x\right)\right)}\left(v_x,T_{\widehat{Z}\left(x\right)}L_g\circ T_x\widehat{Z}\left(v_x\right)\right)&=\left.\theta\right|_{j^1_xs}\left(T_{\widehat{Z}\left(x\right)}L_g\circ T_x\widehat{Z}\left(v_x\right)\right)\\
      &=T_{\widehat{Z}\left(x\right)}\pi_{10}\left(T_x\widehat{Z}\left(v_x\right)\right)-\widehat{Z}\left(x\right)\left(v_x\right),
    \end{align*}
where in the last term we have interpreted the element $\widehat{Z}\left(x\right)\in J^1\pi$ as a map
    \[
    \widehat{Z}\left(x\right):T_xM\rightarrow T_{\pi_{10}\left(\widehat{Z}\left(x\right)\right)}P.
    \]
Thus the condition $\big(v_x,T_{\widehat{Z}(x)}L_g\circ T_x\widehat{Z}(v_x)\big)\in\ker(\omega^Z)$ reads
\[
\omega_P\big(T_{\widehat{Z}((x)}\pi_{10}\big(T_x\widehat{Z}\left(v_x\right)\big)-\widehat{Z}\left(x\right)\left(v_x\right)\big)=0.
\]
The section $s=\pi_{10}\circ\widehat{Z}:M\rightarrow P$ satisfies $p_G^P\circ s=\zeta$, and the condition above means that there exists $\Gamma\colon TM\to \ker\omega_P\subset TP$ with  
    \[
    \widehat{Z}\left(x\right)=T_xs+\left.\Gamma\right|_x
    \]
Projecting along the map $Tp_G^P$, we have that
    \[
    T_{s\left(x\right)}p_G^P\circ\widehat{Z}\left(x\right)=T_{s\left(x\right)}p_G^P\circ T_xs+T_{s\left(x\right)}p_G^P\left(\left.\Gamma\right|_x\right).
    \]
From Diagram \eqref{eq:QpJ1piJ1opi}, we have that
    \[
    T_{s\left(x\right)}p_G^P\circ\widehat{Z}\left(x\right)=q\circ p_G^{J^1\pi}\\
    \big(\widehat{Z}\left(x\right)\big)=q\left(Z\left(x\right)\right)=T_x\zeta,
    \]
and also, since $r\circ Z=T\zeta$, 
    \[
    T_{s\left(x\right)}p_G^P\circ T_xs=T_x\left(p_G^P\circ s\right)=T_x\zeta.
    \]
It follows that $T_{s\left(x\right)}p_G^P\left(\left.\Gamma\right|_x\right)=0$. Because $Tp_G^P$ is an isomorphism when restricted to $\ker{\omega_P}$ (recall that $\omega_P$ is a connection for the bundle $p_G^P:P\rightarrow P/G$), it means that $\left.\Gamma\right|_x=0$, and so
    \[
    \widehat{Z}\left(x\right)=T_xs,
    \]
i.e., $\widehat{Z}$ is a holonomic section.
\end{proof}

\begin{remark} The fact that the section $\widehat{Z}$ determines the horizontal distribution of the connection $\omega^Z$ is referred to as the \emph{horizontality condition} in~\cite{LagPoincare_JGP}. 
\end{remark}

\subsection{The case of Routh reduction}

We will now apply the previous constructions and results about liftings on sections to find conditions for reconstruction. To reconstruct, one should reverse the proof of Theorem~\ref{thm:main}. The key point if to find the analog of Lemma~\ref{lem:liftingGamma} for reduced sections, which requires additional conditions.

Consider a section $\gamma^{\rm red}\colon M\to E/G_\mu\times\mathop{\rm Lin}{\left(\overline{\pi}^*TM,\widetilde{\g}\right)}$ of the reduced LFT. It gives rise to a section $\overline{Z}\colon M\to J^1\overline{\pi}\times E/G_\mu\times\mathop{\rm Lin}{\left(\overline{\pi}^*TM,\widetilde{\g}\right)}$ obtained as $\overline{Z}=q(j^1\gamma^{\rm red})$ (see Diagram~\eqref{dia:Pullback_bundle}). In the following definition we particularize the integrability condition on Theorem~\ref{thm:conn-lift-sect} for lifting the section $Z$ to $J^1\pi$. Recall that there is an identification
\[
g_\omega\colon J^1\pi/G_\mu\to  J^1\overline{\pi}\times E/G_\mu\times\mathop{\rm Lin}{\left(\overline{\pi}^*TM,\widetilde{\g}\right)},
\]
and thus we have a section $Z\colon M\to J^1\pi/G_\mu$ given by $Z=g_\omega^{-1}(\overline{Z})$. As before, $\theta$ denotes the canonical contact form on $J^1\pi$ and ${\rm pr}_2\colon Z^*(J^1\pi)\to J^1\pi$ is the canonical projection.

\begin{definition} Let $\gamma^{\rm red}\colon M\to E/G_\mu\times\mathop{\rm Lin}{\left(\overline{\pi}^*TM,\widetilde{\g}\right)}$ be a section of the reduced LFT and $Z=g_\omega^{-1}( q(j^1\gamma^{\rm red}))$. Fix a principal connection on $\omega$ on $E\to E/G$. We will say that $\gamma^{\rm red}$ satisfies the \emph{flat condition} if the connection
\[
\omega^Z=\omega\circ{\rm pr}_2^*\theta \in \Omega^1\big(Z^*(J^1\pi),\lag\big) 
\]
is flat. 
\end{definition}

We will now check that if  $\gamma^{\rm red}\colon M\to E/G_\mu\times\mathop{\rm Lin}{\left(\overline{\pi}^*TM,\widetilde{\g}\right)}$ satisfies the flat condition, then the associated section $\Gamma\colon M\to W^0_{q^*\overline{\mathcal{R}}_\mu}$ can be lifted to $({\rm pr}_2)^*W^0_{q^*\overline{\mathcal{R}}_\mu}$. Indeed, to mimic the proof of Lemma~\ref{lem:liftingGamma} all one needs to do is to find a lift $\widetilde Z\colon M\to F_\omega$ of $\gamma^{\rm red}$. To find such a lift, it suffices to find a holonomic lift $\widehat{Z}\colon M\to J^1\pi$ of the section $Z= g_\omega^{-1}( q(j^1\gamma^{\rm red}))$. The situation is summarized in the following diagram:
\begin{equation*}
  \begin{tikzcd}[column sep=1.3cm, row sep=1.5cm]
   J^1\pi\arrow[d,"\pi_{10}"]\arrow[rr,"p^{J^1\pi}_{G_\mu}"] & &  J^1\pi/G_\mu\arrow[d,swap,""]\\
  E\arrow[dr,swap]\arrow[rr] & &  E/G_\mu \arrow[dl,""]\\
  & M\arrow[uur,dashed,swap,"Z",bend right=75]\arrow[uul,dashed,"\widehat{Z}",bend left=75] &  
  \end{tikzcd}
\end{equation*}
Using Theorem~\ref{thm:conn-lift-sect}, such a lift $\widehat{Z}$ exists if and only if $\gamma^{\rm red}$ satisfies the flat condition. To conclude, given $\gamma^{\rm red}\colon M\to E/G_\mu\times\mathop{\rm Lin}{\left(\overline{\pi}^*TM,\widetilde{\g}\right)}$ which satisfies the flat condition we construct $\widehat{Z}\colon M\to J^1\pi$ and the section
\begin{align*}
\widetilde Z\colon M&\to F_\omega,\\
x&\mapsto (\widehat{Z}(x),j^1\gamma^{\rm red}(x)),
\end{align*}
is the desired section. Therefore, one can prove the following:
\begin{lemma}\label{lem:liftingGamma3} Let $\gamma^{\rm red}\colon M\to E/G_\mu\times\mathop{\rm Lin}{\left(\overline{\pi}^*TM,\widetilde{\g}\right)}$ be a section of the reduced LFT which satisfies the flat condition and let $\Gamma\colon M\to W^0_{q^*\overline{\mathcal{R}}_\mu}$ be the associated section. Then there exists a section
\[
\widehat{\Gamma}\colon M\rightarrow\left({\rm pr}_2^\omega\right)^*W^0_{q^*\overline{\mathcal{R}}_\mu} 
\]
such that $\Gamma=\Pi_{q^*\overline{\mathcal{R}}_\mu}\circ \widehat{\Gamma}\colon M\to W^0_{q^*\overline{\mathcal{R}}_\mu}$ and which satisfies
\[
\widehat{\Gamma}^*\left(Z'\lrcorner d\lambda_{L\eta}'\right)=0, \quad \text{for all } Z'\in\mathfrak{X}^{V((\overline{\pi}\circ\overline{p})_1\circ\pi_{q^*\overline{\mathcal{R}}_\mu}\circ\Pi_{q^*\overline{\mathcal{R}}_\mu})}\big(({\rm pr}_2^\omega)^*W^0_{q^*\overline{\mathcal{R}}_\mu}\big).
\]
\end{lemma}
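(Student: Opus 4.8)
The plan is to run the proof of Lemma~\ref{lem:liftingGamma} in reverse; the one genuinely new ingredient is the production of a holonomic lift of the section $Z=g_\omega^{-1}\big(q(j^1\gamma^{\rm red})\big)\colon M\to J^1\pi/G_\mu$, and this is exactly what the flat condition supplies. Indeed, by hypothesis the connection $\omega^Z=\omega\circ{\rm pr}_2^*\theta$ on the principal bundle $Z^*(J^1\pi)\to M$ is flat, so (recall $M$ is assumed simply connected) Lemma~\ref{lem:FlatConnectionIffSectionJ1pi} produces a section $\widehat Z\colon M\to J^1\pi$ with $p^{J^1\pi}_{G_\mu}\circ\widehat Z=Z$, and part~2) of Theorem~\ref{thm:conn-lift-sect} guarantees that $\widehat Z$ is holonomic. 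The compatibility hypothesis $r(Z(x))=j^1_x\zeta$ required by that theorem is available here because $q(j^1\gamma^{\rm red})$ comes from the prolongation of $\overline p\circ\gamma^{\rm red}$.

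Next I would assemble the lift at the level of $F_\omega$ by setting $\widetilde Z\colon M\to F_\omega$, $x\mapsto\big(\widehat Z(x),\,j^1_x\gamma^{\rm red}\big)$. The point to check is that $\widetilde Z$ does take values in $F_\omega=f_\omega^*\big(J^1(\overline\pi\circ\overline p)\big)$, i.e. that $f_\omega(\widehat Z(x))=(\overline\pi\circ\overline p)_{10}(j^1_x\gamma^{\rm red})$; using the factorization $f_\omega={\rm pr}_{23}\circ g_\omega\circ p^{J^1\pi}_{G_\mu}$ read off Diagram~\eqref{dia:Pullback_bundle}, together with $p^{J^1\pi}_{G_\mu}\circ\widehat Z=Z$, $g_\omega\circ Z=q\circ j^1\gamma^{\rm red}$, and ${\rm pr}_{23}\circ q=(\overline\pi\circ\overline p)_{10}$, both sides reduce to $\gamma^{\rm red}(x)$. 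Then, mimicking the final step of the proof of Lemma~\ref{lem:liftingGamma} and using $\pi_{q^*\overline{\mathcal R}_\mu}\circ\Gamma=j^1\gamma^{\rm red}={\rm pr}_2^\omega\circ\widetilde Z$, the assignment
\[
\widehat\Gamma(x)=\Gamma(x)\circ T_{\widetilde Z(x)}{\rm pr}_2^\omega\ \in\ ({\rm pr}_2^\omega)^*W^0_{q^*\overline{\mathcal R}_\mu}\big|_{\widetilde Z(x)}
\]
is a well-defined section of $({\rm pr}_2^\omega)^*W^0_{q^*\overline{\mathcal R}_\mu}\to M$, and $\Pi_{q^*\overline{\mathcal R}_\mu}\circ\widehat\Gamma=\Gamma$ holds by construction.

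Finally I would verify the stationarity identity, where the computation is verbatim the one at the end of the proof of Lemma~\ref{lem:liftingGamma}. Any vector field $Z'\in\mathfrak X^{V((\overline\pi\circ\overline p)_1\circ\pi_{q^*\overline{\mathcal R}_\mu}\circ\Pi_{q^*\overline{\mathcal R}_\mu})}\big(({\rm pr}_2^\omega)^*W^0_{q^*\overline{\mathcal R}_\mu}\big)$ decomposes as $Z'=f_1Z_1'+f_2Z_2'$ with $f_i$ smooth functions, $T\Pi_{q^*\overline{\mathcal R}_\mu}\circ Z_1'=Z_1$ vertical for $(\overline\pi\circ\overline p)_1\circ\pi_{q^*\overline{\mathcal R}_\mu}$ and $T\Pi_{q^*\overline{\mathcal R}_\mu}\circ Z_2'=0$. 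Invoking the identity $\Pi_{q^*\overline{\mathcal R}_\mu}^*\lambda^0_{q^*\overline{\mathcal R}_\mu}=\lambda_{q^*\overline{\mathcal R}_\mu}'$ recorded right after Lemma~\ref{lem:tech1} (this $\lambda_{q^*\overline{\mathcal R}_\mu}'$ being the form written $\lambda_{L\eta}'$ in the statement, in the convention of Lemma~\ref{lem:liftingGamma2}), one gets $f_1Z_1'\lrcorner d\lambda_{q^*\overline{\mathcal R}_\mu}'=f_1\,\Pi_{q^*\overline{\mathcal R}_\mu}^*(Z_1\lrcorner d\lambda^0_{q^*\overline{\mathcal R}_\mu})$ and $f_2Z_2'\lrcorner d\lambda_{q^*\overline{\mathcal R}_\mu}'=0$, so that
\[
\widehat\Gamma^*\big(Z'\lrcorner d\lambda_{q^*\overline{\mathcal R}_\mu}'\big)=(f_1\circ\widehat\Gamma)\,\Gamma^*\big(Z_1\lrcorner d\lambda^0_{q^*\overline{\mathcal R}_\mu}\big)=0,
\]
the last equality holding because $\Gamma$ is the section of $W^0_{q^*\overline{\mathcal R}_\mu}$ associated to $\gamma^{\rm red}$ and therefore solves the reduced field equations (a force term $\beta^0_\mu$, if present, is carried along as in Remark~\ref{rem:1} and matched via Lemma~\ref{lem:GyroForcesRelation}). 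The only real obstacle is the very first step: a holonomic lift $\widehat Z$ need not exist in general, and overcoming this is precisely the purpose of the flat condition through Theorem~\ref{thm:conn-lift-sect}; once $\widehat Z$ is at hand the rest is bookkeeping identical to the forward direction.
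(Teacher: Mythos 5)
Your proof is correct and follows essentially the same route as the paper: the flat condition together with Lemma~\ref{lem:FlatConnectionIffSectionJ1pi} and Theorem~\ref{thm:conn-lift-sect} yields the holonomic lift $\widehat Z$, the section $\widetilde Z(x)=(\widehat Z(x),j^1_x\gamma^{\rm red})$ of $F_\omega$ is formed, and the rest mimics Lemma~\ref{lem:liftingGamma} exactly as the paper indicates. Your write-up simply makes explicit the bookkeeping (the check that $\widetilde Z$ lands in $F_\omega$, the definition $\widehat\Gamma(x)=\Gamma(x)\circ T_{\widetilde Z(x)}{\rm pr}_2^\omega$, and the vertical-field decomposition) that the paper leaves to the reader.
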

\begin{proof} One can mimic the proof of Lemma~\ref{lem:liftingGamma} using the lift to $F_\omega$ above.
\end{proof}

\begin{theorem}[Reconstruction]\label{thm:main2} Let $\gamma^{\rm red}\colon M\to E/G_\mu\times\mathop{\rm Lin}{\left(\overline{\pi}^*TM,\widetilde{\g}\right)}$ be a solution of the reduced LFT  
\[
\big((\overline{\pi}\circ\overline{p})\colon E/G_\mu\times\mathop{\rm Lin}{\left(\overline{\pi}^*TM,\widetilde{\g}\right)} \to M,\mathcal{R}_\mu^{\rm red},\beta_\mu^{\rm red}\big).
\]
which satisfies the flat condition. Then there exist a solution of the unreduced LFT $(\pi\colon E\to M,L\eta)$ with momentum $\widehat{\mu}$ which projects onto it. 
\end{theorem}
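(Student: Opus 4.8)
\medskip
\noindent\emph{Proof strategy.}
The plan is to run, in reverse, the chain of constructions behind Theorem~\ref{thm:main} (equivalently, Proposition~\ref{prop:correspondance}); the only new ingredient, and precisely the one that forces the \emph{flat condition} to appear, is the lift of the reduced section provided by Lemma~\ref{lem:liftingGamma3} together with the holonomy statement in part~2) of Theorem~\ref{thm:conn-lift-sect}. We keep the standing assumption that $M$ is simply connected (otherwise one replaces ``flat'' by ``flat with trivial holonomy'', cf.\ the remark after Theorem~\ref{thm:conn-induc-pullb}).

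First I would encode $\gamma^{\rm red}$ at the level of the affine subbundle. Being a critical section of the reduced LFT~\eqref{eq:reducedLFT}, $\gamma^{\rm red}$ corresponds --- through the force-version of Proposition~\ref{prop:FieldTheoryEqsWL} (discussed right after its proof) and observation~1) preceding Proposition~\ref{prop:correspondance} --- to a section $\Gamma^{\rm red}\colon M\to W^0_{q^*\overline{\mathcal{R}}_\mu}$ which solves $(W^0_{q^*\overline{\mathcal{R}}_\mu},\lambda^0_{q^*\overline{\mathcal{R}}_\mu},\beta^0_\mu)$ and is such that $\pi_{q^*\overline{\mathcal{R}}_\mu}\circ\Gamma^{\rm red}=j^1\gamma^{\rm red}$ is holonomic. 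Now I would use the hypothesis: putting $Z=g_\omega^{-1}\big(q(j^1\gamma^{\rm red})\big)\colon M\to J^1\pi/G_\mu$ (the jet-holonomy hypothesis of Theorem~\ref{thm:conn-lift-sect} holds here automatically, since $Z$ is built from $j^1\gamma^{\rm red}$), the flat condition says that $\omega^Z=\omega\circ{\rm pr}_2^*\theta$ on $Z^*(J^1\pi)$ is flat; by Lemma~\ref{lem:FlatConnectionIffSectionJ1pi} there is a section $\widehat{Z}\colon M\to J^1\pi$ with $p_{G_\mu}^{J^1\pi}\circ\widehat{Z}=Z$, and by part~2) of Theorem~\ref{thm:conn-lift-sect} this $\widehat{Z}$ is \emph{holonomic}, i.e.\ $\widehat{Z}=j^1s$ with $s=\pi_{10}\circ\widehat{Z}\colon M\to E$. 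With this $\widehat{Z}$ one forms $\widetilde{Z}\colon M\to F_\omega$, $x\mapsto\big(\widehat{Z}(x),j^1\gamma^{\rm red}(x)\big)$, and then, repeating verbatim the argument of Lemma~\ref{lem:liftingGamma} as in the proof of Lemma~\ref{lem:liftingGamma3}, one lifts $\Gamma^{\rm red}$ to a section $\widehat{\Gamma}\colon M\to({\rm pr}_2^\omega)^*W^0_{q^*\overline{\mathcal{R}}_\mu}$ with $\Pi_{q^*\overline{\mathcal{R}}_\mu}\circ\widehat{\Gamma}=\Gamma^{\rm red}$, with $\psi\circ\widehat{\Gamma}=\widetilde{Z}$, and satisfying the contact equation carrying the force induced by $\beta^0_\mu$ (Remark~\ref{rem:1}).

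Next I would transport everything to the unreduced side by the affine translation ${\rm t}_{\widehat{\omega}_\mu}$. By Proposition~\ref{prop:WlWR_relation} it maps $({\rm pr}_2^\omega)^*W^0_{q^*\overline{\mathcal{R}}_\mu}$ onto $({\rm pr}_1^\omega)^*W^\mu_{L\eta}$, so $\widehat{\Gamma}'={\rm t}_{\widehat{\omega}_\mu}\circ\widehat{\Gamma}$ is a section of $({\rm pr}_1^\omega)^*W^\mu_{L\eta}$; since ${\rm t}_{\widehat{\omega}_\mu}$ preserves the fibres of $\Lambda^m_2F_\omega\to F_\omega$ one still has $\psi\circ\widehat{\Gamma}'=\widetilde{Z}$. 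Running Lemma~\ref{lem:TranslationSpaceForms}(ii) backwards (as in Proposition~\ref{prop:correspondance}) and using Lemma~\ref{lem:GyroForcesRelation} to identify $d\big[(i^\omega_{q^*\overline{\mathcal{R}}_\mu})^*\psi^*\widehat{\omega}_\mu\big]=\Pi_{q^*\overline{\mathcal{R}}_\mu}^*\beta^0_\mu$, the force carried by $\widehat{\Gamma}$ is exactly cancelled by the term produced by the translation, so $\widehat{\Gamma}'$ satisfies the force-free equation $(\widehat{\Gamma}')^*(Z'\lrcorner d\lambda_{L\eta}')=0$ for all $Z'\in\mathfrak{X}^{V(\pi_1\circ\pi_{L\eta}\circ\Pi_{L\eta})}\big(({\rm pr}_1^\omega)^*W^\mu_{L\eta}\big)$. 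The converse part of Lemma~\ref{lem:liftingGamma} then gives $\Gamma:=\Pi_{L\eta}\circ\widehat{\Gamma}'\colon M\to W^\mu_{L\eta}$ with $\Gamma^*(Z\lrcorner d\lambda^\mu_{L\eta})=0$ for all $Z\in\mathfrak{X}^{V(\pi_1\circ\pi_{L\eta})}(W_{L\eta})$. To read off the unreduced solution, chase Diagram~\eqref{dia:aux1}: $\pi_{L\eta}\circ\Gamma={\rm pr}_1^\omega\circ\psi\circ\widehat{\Gamma}'={\rm pr}_1^\omega\circ\widetilde{Z}=\widehat{Z}=j^1s$, so $\pi_{L\eta}\circ\Gamma$ is holonomic and $\Gamma$ covers $s$. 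Unwinding the definition of $\lambda^\mu_{L\eta}$ as the pullback of $\lambda_{L\eta}$, $\Gamma$ is then a solution of $(W_{L\eta},\lambda_{L\eta})$ taking values in $J^{-1}(\widehat{\mu})$; by Proposition~\ref{prop:FieldTheoryEqsWL}, $s=\pi_{10}\circ\pi_{L\eta}\circ\Gamma$ is critical for $(\pi\colon E\to M,L\eta)$, its momentum equals $\widehat{\mu}$ because $\Gamma$ lies in $J^{-1}(\widehat{\mu})$, and from $p_{G_\mu}^{J^1\pi}\circ j^1s=Z=g_\omega^{-1}(q(j^1\gamma^{\rm red}))$ together with the definitions of $g_\omega$ and $f_\omega$ one gets $f_\omega\circ j^1s=\gamma^{\rm red}$; that is, $s$ projects onto $\gamma^{\rm red}$.

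The hard part --- and the only place where the flat condition is genuinely used --- is the lift of $\Gamma^{\rm red}$ to $({\rm pr}_2^\omega)^*W^0_{q^*\overline{\mathcal{R}}_\mu}$, i.e.\ the production of a \emph{holonomic} section $\widehat{Z}\colon M\to J^1\pi$ over $Z$. In contrast with Lemma~\ref{lem:liftingGamma}, such a lift need not exist in general (Remark~\ref{rem:1}): by Lemma~\ref{lem:FlatConnectionIffSectionJ1pi} and Theorem~\ref{thm:conn-lift-sect}, its existence --- with the required holonomicity --- is equivalent to flatness of $\omega^Z$, which is exactly the obstruction to reconstruction already encountered in Lagrange--Poincar\'e reduction for field theories~\cite{Castrillon1,Castrillon2,LagPoincare_JGP}. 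Everything else is the bookkeeping of pullbacks, forces and signs already carried out in Proposition~\ref{prop:correspondance} and Lemma~\ref{lem:TranslationSpaceForms}.
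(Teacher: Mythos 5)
Your proposal is correct and follows essentially the same route as the paper: the paper's proof likewise lifts the reduced solution via Lemma~\ref{lem:liftingGamma3} (where the flat condition enters, through Theorem~\ref{thm:conn-lift-sect}), then reverses the proof of Theorem~\ref{thm:main} --- the translation by $\widehat{\omega}_\mu$, Proposition~\ref{prop:WlWR_relation}, Lemmas~\ref{lem:GyroForcesRelation} and~\ref{lem:TranslationSpaceForms}, and the converse part of Lemma~\ref{lem:liftingGamma} --- and uses the holonomicity of $\widehat{Z}$ to identify the resulting section as $j^1\gamma$ for a critical section $\gamma$ with momentum $\widehat{\mu}$ projecting onto $\gamma^{\rm red}$. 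Your write-up simply spells out the bookkeeping that the paper compresses into a few lines.
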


\begin{proof} Given $\gamma^{\rm red}\colon M\to E/G_\mu\times\mathop{\rm Lin}{\left(\overline{\pi}^*TM,\widetilde{\g}\right)}$, one uses Lemma~\ref{lem:liftingGamma3} to construct
\[
\widehat{\Gamma}^{\rm red}\colon M\rightarrow\left({\rm pr}_2^\omega\right)^*W^0_{q^*\overline{\mathcal{R}}_\mu}  
\]
and reverses the proof of Theorem~\ref{thm:main} to find a solution $\Gamma\colon M\to ({\rm pr}_1)^*W^\mu_{L\eta}$. Since by construction $\pi_{L\eta}\circ \Gamma=\widehat Z$ is holonomic, $\Gamma$ is also a solution when regarded as a section      $\Gamma\colon M\to ({\rm pr}_1)^*W_{L\eta}$. Thus, letting $\widehat Z=j^1\gamma$, it follows that  $\gamma\colon M\to E$ is a solution of the unreduced LFT.
\end{proof}

\section{An example related to the KdV equation}\label{sec:Example}

We will revisit the last example in~\cite{Castrillon2} from the perspective of Routh reduction. Consider the bundle $\pi\colon E=\R^2\times\R^2\to M=\R^2$ with coordinates $(t,x,\phi,\psi)$ and the Lagrangian on $J^1\pi$ given by
\[
L\left(t,x,\phi,\psi,\phi_t,\phi_x,\psi_t,\psi_x\right)=\frac{1}{2}\phi_t\phi_x+\phi_x^3+\phi_x\psi_x+\frac{1}{2}\psi^2.
\]
We take the volume form $\eta=dt\wedge dx$ on $M$. The Lagrangian is invariant by the action of the Lie group $G=\R$ by translations on $\phi$. The infinitesimal generator of $\xi\in\lag$ is $\xi_{W_{L\eta}}=\xi\partial_{\phi}$. The bundle $W_{L\eta}$ has coordinates $(t,x,\phi,\psi,\phi_t,\phi_x,\psi_t,\psi_x,p_\phi^t,p_\phi^x,p_\psi^t,p_\psi^x)$ so that the canonical form reads
\begin{multline*}
  \lambda_{L\eta}=\left(\frac{1}{2}\phi_t\phi_x+\phi_x^3+\phi_x\psi_x+\frac{1}{2}\psi^2\right)dt\wedge dx+p^t_\phi\left(d\phi-\phi_tdt\right)\wedge dx -\\
  -p^x_\phi\left(d\phi-\phi_xdx\right)\wedge dt+p^t_\psi\left(d\psi-\psi_tdt\right)\wedge dx-p^x_\psi\left(d\psi-\psi_xdx\right)\wedge dt,
\end{multline*}
and then the momentum map is easily found to be
\[
J(t,x,\phi,\psi,\phi_t,\phi_x,\psi_t,\psi_x,p^t_\phi,p^x_\phi,p^t_\psi,p^x_\psi)=p^t_\phi dx-p^x_\phi dt.
\]
We now fix a momentum value $\mu=\mu_1dt+\mu_2dx$ which should be closed, i.e. $d\mu=0$, which implies $\partial \mu_1/\partial x=\partial \mu_2/\partial t$. Clearly, the submanifold $W_{L\eta}^\mu\subset W_{L\eta}$ is described by $\{p^t_\phi=\mu_2, p^x_\phi=-\mu_1\}$. The isotropy group is $G_\mu=G=\R$.

To construct the Routhian, we choose a principal connection on $E\to E/G\simeq \R^3$ which will be of the form
\[
\omega=d\phi-\Gamma_t(t,x,\psi)dt-\Gamma_x(t,x,\psi)dx-\Gamma_\psi(t,x,\psi)d\psi.
\]

We have:
\[
 \omega\circ T_{\left(t,x\right)}s=(\phi_t-\Gamma_t-\Gamma_\psi\psi_t)dt+(\phi_x-\Gamma_x-\Gamma_\psi\psi_x)dx.
\]
The (unreduced) Routhian is then (note that $\varepsilon=-1$)
\begin{align*}
R_\mu(&t,x,\phi,\psi,\phi_t,\phi_x,\psi_t,\psi_x)=L\eta-(-1)\langle\mu\stackrel{\wedge}{,}\omega\circ T_{\left(t,x\right)}s\rangle\\
&=L\eta+\left(\mu_1 dt+\mu_2 dx\right)\wedge\left[\left(\phi_t-\Gamma_t-\Gamma_\psi\psi_t\right)dt+\left(\phi_x-\Gamma_x-\Gamma_\psi\psi_x\right)dx\right]\\
&=\frac{1}{2}\phi_t\phi_x+\phi_x^3+\phi_x\psi_x+\frac{1}{2}\psi^2+\mu_1\left(\phi_x-\Gamma_x-\Gamma_\psi\psi_x\right)-\mu_2\left(\phi_t-\Gamma_t-\Gamma_\psi\psi_t\right).
\end{align*}

The fibration $\overline{\pi}\colon E/G_\mu=\R^2\times\R\to M=\R^2$ is
\[
\overline{\pi}(t,x,\psi)=(t,x).
\]
Hence
\[
E/G_\mu\times\mathop{\text{Lin}}{\left(\overline{\pi}^*TM,\widetilde\g\right)}=(\R^2\times\R)\times_{\R^2}T^*\R^2=T^*\R^2\times\R,
\]
and also $\overline{p}\colon E/G_\mu\times\mathop{\text{Lin}}{\left(\overline{\pi}^*TM,\widetilde\g\right)}=T^*\R^2\times\R\to E/G=\R^2\times\R$ is
\[
\overline{p}(t,x,\sigma,\rho,\psi)=(t,x,\psi),
\]
where $(t,x,\sigma,\rho)$ are coordinates on $T^*\R^2$. The projection $(\overline{\pi}\circ \overline{p})$ is simply
\[
(\overline{\pi}\circ \overline{p})(t,x,\sigma,\rho,\psi)=(t,x)
\]
and $q:J^1\left(\overline{\pi}\circ\overline{p}\right)\to J^1\overline{\pi}\times E/G_\mu\times\mathop{\text{Lin}}{\left(\overline{\pi}^*TM,\widetilde\g\right)}=J^1\overline{\pi}\times T^*\R^2$ is
\[
q(t,x,\sigma,\rho,\psi,\sigma_t,\sigma_x,\rho_t,\rho_x,\psi_t,\psi_x)=(t,x,\psi,\psi_t,\psi_x,\sigma,\rho).
\]

On the other hand, $J^1\pi/G_\mu$ with coordinates $(t,x,\psi,\phi_t,\phi_x,\psi_t,\psi_x)$, and looking at how 
\[
g_\omega\colon J^1\pi/{G_\mu} \to J^1\overline{\pi}\times E/{G_\mu}\times{\rm Lin}{\left(\overline{\pi}^*TM,\widetilde{\g}\right)}
\]
is defined (Section~\ref{sec:RouthLFT}), we find
\[
g_\omega\left(t,x,\psi,\phi_t,\phi_x,\psi_t,\psi_x\right)=\left(t,x,\psi,\psi_t,\psi_x,\sigma=\phi_t-\Gamma_t-\Gamma_\psi\psi_t,\rho=\phi_x-\Gamma_x-\Gamma_\psi\psi_x\right).
\]
Using this expression, $R_\mu^{\rm red}(t,x,\psi,\psi_t,\psi_x,\sigma,\rho)$ is easily obtained:
\begin{align*}
R_\mu^{\rm red}=\frac{1}{2}(\sigma+&\Gamma_t+\Gamma_\psi\psi_t)(\rho+\Gamma_x+\Gamma_\psi\psi_x)+\left(\rho+\Gamma_x+\Gamma_\psi\psi_x\right)^3 +\\
&+\left(\rho+\Gamma_x+\Gamma_\psi\psi_x\right)\psi_x+\frac{1}{2}\psi^2+\mu_1\rho-\mu_2\sigma.
\end{align*}

To compute the force term, we have  
\begin{align*}
\omega_\mu&=-\langle\mu\stackrel{\wedge}{,}\omega_E\circ T\pi_{10}\rangle=-\left(\mu_1 dt+\mu_2 dx\right)\wedge\left(d\phi-\Gamma_t dt-\Gamma_x dx-\Gamma_\psi d\psi\right)
\end{align*}
and therefore (since $\mu$ is closed)
\[
d\omega_\mu=\left(\mu_2\fpd{\Gamma_\psi}{t}-\mu_1\fpd{\Gamma_\psi}{x}\right)dt\wedge dx\wedge d\psi.
\]
The force $\beta^{\rm red}$ is obtained by pullback, and has the same coordinate expression as $d\omega_\mu$. Therefore, the Euler-Lagrange equations for $R_\mu^{\rm red}$ with force $\beta^{\rm red}$ are:
\begin{align*}
\frac{1}{2}\fpd{}{t}[\Gamma_\psi(\rho+\Gamma_x+\Gamma_\psi\psi_x)]+\frac{1}{2}\fpd{}{x}[\Gamma_\psi(\sigma+\Gamma_t+\Gamma_\psi\psi_t)]+3&\fpd{}{x}([\rho+\Gamma_x
+\Gamma_\psi\psi_x]\Gamma_\psi)+\\
+\fpd{}{x}[2\Gamma_\psi\psi_x+\rho+\Gamma_x]-\psi&=\left(\mu_2\fpd{\Gamma_\psi}{t}-\mu_1\fpd{\Gamma_\psi}{x}\right),\\
(\rho+\Gamma_x+\Gamma_\psi\psi_x)&=2\mu_2,\\
(\sigma+\Gamma_t+\Gamma_\psi\psi_t)+6\left(\rho+\Gamma_x+\Gamma_\psi\psi_x\right)^2+2\psi_x&=-2\mu_1.
\end{align*}

If we choose the canonical flat connection $\Gamma_t=\Gamma_x=\Gamma_\psi=0$ the Routhian becomes
\[
\widehat{R}_\mu^{\rm red}=\frac{1}{2}\sigma\rho+\rho^3+\rho\psi_x+\frac{1}{2}\psi^2+\mu_1\rho-\mu_2\sigma,
\]
and the Euler-Lagrange equations are
\[
\fpd{\rho}{x}=\psi,\quad \rho=2\mu_2,\quad \sigma+6\rho^2+2\psi_x=-2\mu_1.
\]
Differentiating the last and replacing $\psi$ using the first equation, we find
\[
\fpd{\rho}{t}=2\fpd{\mu_2}{t},\quad
\fpd{\sigma}{x}+12\rho\fpd{\rho}{x}+2\fpd{^3\rho}{x^3}=-2\fpd{\mu_1}{x}. 
\]
Now using that $\mu$ is closed we have $\partial \rho/\partial t=-2\partial \mu_1/\partial x$. If one further imposes the integrability condition $\partial\sigma/\partial x=\partial\rho/\partial t$ (which is needed in view of the definition of $g_\omega$), one finds that $\rho$ must satisfy the KdV equation:
\[
\fpd{\rho}{t}+6\rho\fpd{\rho}{x}+\fpd{^3\rho}{x^3}=0. 
\]
Note that this imposes that the chosen momentum $\mu_2$ must satisfy a PDE which, after scaling, is again of KdV type. The fact that this result can be directly compared to that of~\cite{Castrillon2} reflects the well-known result that, in the case of an Abelian Lie group of symmetries, there is a close relation between the Lagrange-Poincar\'e and the Routh reductions~\cite{MestdagCrampin}.




\end{document}